\theoremstyle{theorem}
\newtheorem{theorem}{Theorem}[section]
\begin{document}
	\begin{titlepage}
		\begin{center}
			\textbf{University of Wrocław \\ Department of Physics and Astronomy \\ Theoretical Physics}\\
			
			\vspace{40pt}
			
			\large{Aleksander Kozak \\ 268274}
			
			\vspace{60pt}
			
			\noindent\rule{15cm}{0.5pt}\\
			\Huge{Scalar-tensor gravity in the Palatini approach}\\
			\noindent\rule{15cm}{0.5pt}
			
			\vspace{100pt}
			
			\normalsize{Master thesis under the supervision of} \\
			\Large{prof. dr hab. Andrzej Borowiec}
			
			\vspace{120pt}
			
			\normalsize{Wrocław \\ 2017}
			
		\end{center}
		
	\end{titlepage}
	\begin{abstract}
		The main objective of this thesis is to discuss scalar-tensor theories in the Palatini approach. Both scalar-tensor theories and Palatini formalism are means of alternating classical theory of gravity, general relativity, in order to account for phenomena being seemingly unexplainable on the ground of the Einstein theory or to serve as toy models used to test limitations of the theory in question. In the literature both Palatini approach and scalar-tensor theories have been widely discussed, but there are very few - if none - authors writing about a merge of these two ideas. The present paper is a result of an insufficient attention given to the topic of scalar-tensor theories in Palatini formalism.
		
		In the course of the thesis action functional for scalar-tensor theories of gravity will be introduced. This action functional differs significantly from the action defined in case of scalar-tensor theories in metric approach. We aim at analysing the theory using the language of invariants, allowing us to write down all equations in a frame-independent manner. We discover that invariants defined for the metric case not always have their counterparts in Palatini formalism. Also, two frames most frequently used in the literature are discussed: Einstein and Jordan frames. Possible applications of the theory developed in the first part of the thesis are presented. We show the equivalence between $f(R)$ and scalar-tensor theories of gravity, and exploit this fact by analysing the former using methods developed for scalar-tensor theories. We conclude the thesis with calculating the Friedmann equations for an empty universe of vanishing spatial curvature and preparing set-up for analysing inflationary behaviour. 
		
		\noindent\rule{8cm}{0.4pt}
		
		Głównym celem niniejszej pracy jest dyskusja skalarno-tensorowych teorii grawitacji w ujęciu Palatiniego. Zarówno teorie skalarno-tensorowe, jak i podejście Palatiniego są środkami modyfikacji klasycznej teorii grawitacji, jaką jest ogólna teoria względności, w celu wyjaśnienia zjawisk prawdopodobnie niewyjaśnialnych na gruncie teorii Einsteina lub w celu posłużenia jako \textit{toy models}, których używa się do badania ograniczeń kwestionowanej teorii. W literaturze podejście Palatiniego i teorie skalarno-tensorowe są szeroko dyskutowane, ale istnieje bardzo niewielu - lub wręcz nie ma wcale - autorów łączących oba pomysły. Niniejsza praca powstała jako skutek niedostatecznej uwagi poświęconej zagadnieniu teorii skalarno-tensorowych w podjeściu Palatiniego.
		
		W toku pracy wprowadzony będzie funkcjonał działania dla teorii skalarno-tensorowych. Działanie to znacznie różni się od działania dla teorii w podejściu metrycznym. Naszym celem jest analiza teorii używając języka niezmienników, pozwalającego na zapisanie równań w sposób niezależny od konforemnego układu odniesienia. Odkrywamy, że niezmienniki zdefiniowane dla przypadku metrycznego nie zawsze mają swoje odpowiedniki w podejściu Palatiniego. Dwa układy konforemne sa analizowane: Einsteina i Jordana. Możliwe zastosowania rozwiniętego formalizmu są prezentowane w dalszej części pracy. Pokazujemy równoważność pomiędzy teoriami $f(R)$ a grawitacją skalarno-tensorową, i używamy tego faktu aby analizować te pierwsze za pomocą metod rozwiniętych dla tych drugich. Rozdział zamknięty jest dyskusją równań Friedmanna dla pustego wszechświata o znikającej krzywiźnie przestrzennej oraz przygotowaniem zaplecza matematycznego do analizy zachowania inflacyjnego. 
		
		\vspace{10pt}
		
		\noindent\footnotesize{Keywords: \\ \textit{scalar-tensor gravity, $f(R)$ theories of gravity, conformal transformation, Jordan frame, Einstein frame, Palatini formalism, conformal invariants, extended theories of gravity, general theory of relativity, cosmology, Friedmann equations}}
		
			\vspace{5pt}
			
			\noindent\textit{\footnotesize{teorie skalarno-tensorowe, teorie $f(R)$, transformacja konforemna, układ Jordana, układ Einsteina, formalizm Palatiniego, niezmienniki konforemne, rozszerzone teorie grawitacji, ogólna teoria wzgledności, kosmologia, równania Friedmanna}}
	\end{abstract}
	\tableofcontents
	\chapter{Introduction: on the need for alternative theories of gravity}
	
			General theory of relativity (GR) founded by Albert Einstein in 1915 has been a very successful, self-consistent theory of gravity, not only accounting for different phenomena which until the moment of its formulation had not been satisfactorily explained, but also predicting existence of objects that would seem exotic, such as black holes. The theory itself has been thoroughly tested in the course of last century, confirming that the theory's foundations are well-motivated. For example, the Einstein Equivalence Principle (EEP) which can be regarded as a cornerstone of GR states that: the Weak Equivalence Principle (WEP) is valid, together with local Lorentz invariance (LLI) and local position invariance (LPI) saying that the outcome of any experiment cannot depend on the observer's position in the Universe \cite{test}. If EEP is valid, then gravity is necessarily an effect of curved spacetime. Practical consequences of EEP are the following: spacetime must have a symmetric metric of a Lorentzian signature, which, in turn, determines geodesics on that spacetime; also, locally in a freely falling frame one can use special relativity (SR) to describe all non-gravitational laws of physics. Since EEP narrows down the number of possible theories of gravity and compels us to choose a metric theory of gravity satisfying the postulates written above, the principle must be carefully tested. WEP was tested in the famous E$\ddot{\text{o}}$tv$\ddot{\text{o}}$s experiment \cite{eot}, LLI in Michelson-Morley-type experiments \cite{mich},\cite{sch},\cite{bril}, in test of time dilation \cite{ros}, tests of independence of the velocity of light of the source \cite{test},\cite{alv}, tests of isotropy of the speed of light \cite{test}, \cite{ris}, and LPI has been tested by the gravitational redshift experiment \cite{will}. Results of these experiments indicate unequivocally that EEP is valid, so that any meaningful theory of gravity should give the same predictions. Moreover, predicted by GR gravitational waves have been (finally) found \cite{abb}, which proved also the existence of black holes. 
			
			Despite many experimental triumphs, GR is not considered a fundamental theory describing gravitational interactions. There are various - theoretical and experimental - arguments for modifying the Einstein theory of gravity. First of all, GR cannot be satisfactorily quantized, as the attempts to renormalize it were futile \cite{capo}. It was shown that adding extra terms to the Einstein-Hilbert Lagrangian was necessary, which led to a vast modification of the standard theory, yielding fourth-order equations of motion. Secondly, GR is not a low-energy limit of theories regarded as fundamental. Low-energy limit of the string theory reproduces Brans-Dicke theory, not GR \cite{capo}, since dilaton fields couple non-minimally to the spacetime curvature. Another reason is that we have not hitherto carried out any reliable tests of GR working on the large scale. As far as cosmology is concerned, it is rather a dubious way of confirming the Einstein's theory on a scale of galaxies and the whole Universe since most of alternative theories of gravity admit Friedman-Lemaitre-Robertson-Walker metric. Conversely, cosmology suggests that in fact GR should be modified in order to account for the accelerated expansion of the Universe.
			
			A sound reason to modify GR is an attempt to incorporate fully the Mach's Principle (MP) into the theory. General relativity admits solutions which are anti-Machian, such as G$\ddot{\text{o}}$del Universe \cite{capo}, \cite{godel}. MP states that a local inertial frame is influenced by a motion of all matter in the Universe which, in practical terms, means that the Newton constant $G$ is not a real constant, but its value varies with the spacetime position. A theory which incorporates MP is the Brans-Dicke theory \cite{brans}, conceived mostly because of a philosophical need for including that principle in the formalism of the theory. 
			
			Another problem with the $\Lambda$CDM model, based on GR and Standard Model of particles enriched with the cosmological constant playing the role of the dark energy, and the mysterious dark matter, is that the value of $\Lambda$ being responsible for the current acceleration of the expansion of the Universe is incomprehesibly small when compared to the value predicted by quantum field theory (by 120 orders of magnitude smaller). Also, it is unclear why the value of energy density associated to the cosmological constant is comparable to matter energy density (the so-called coincidence problem) \cite{capo}. 
			
			In fact, adding cosmological constant to Einstein Field Equations (EFE) is not the only way to achieve accelerated expansion of the Universe. The cosmic speed-up requires only a component of negative pressure which dominates the energy content at the present epoch. This general requirement does not tell us anything about the nature of the component. Due to this fact, it is possible to explain the cosmic acceleration adding a fluid which behaves like dark matter at high densities and dark energy at low densities. For example, it can be achieved by introducing the Chaplygin gas (which may also account for the cosmic inflation) \cite{capo}, \cite{chap}, \cite{bor}. It is also possible to explain the accelerated expansion by modifying the geometric part of EFE. Instead of adding yet another component, we can simply introduce extra terms to the FRWL equations, leaving the matter unchanged \cite{mod}, \cite{capo}.
			
			As far as the mathematical reasons for modifying the Einstein's gravity are concerned, we can take the so-called Palatini formalism into consideration. Palatini formalism is going to be discussed at length in the first chapter, but here we can sketch out the general idea. In the standard gravity, the underlying assumption of geometric structures defined on spacetime is that the affine connection is the Levi-Civita connection of the metric. In Palatini approach, however, we regard these two objects as unrelated, since there is no reason whatsoever we should impose a relation between them a priori. In case of Einstein gravity, introducing Palatini formalism does not affect the resulting field equations in any way; however, in case of more complicated theories, such as scalar-tensor or $f(R)$ theories of gravity, both approaches usually give us different results, describing different physics \cite{capo}.
			
			The scalar-tensor theories of gravity, which will be diligently analyzed in this paper, are a very promising modification of the Einstein gravity. The main idea of the scalar-tensor theories of gravity is going to be introduced in the second chapter. In these theories, a scalar field is nonminimally coupled to the curvature scalar \cite{fuj}; provenance of the field will be discussed later on. Historically, the prototype of all contemporary scalar-tensor theories was the Brans-Dicke theory, already mentioned.  An interesting feature of the scalar-tensor theories of gravity is their equivalence to $f(R)$ theories, which basically means that the latter can be analyzed using the 'mathematical machinery' developed for the former. The reason why the scalar-tensor theories deserve some attention is that they can be successfully used to build credible models for cosmic inflation (where a scalar field called 'inflaton' is driving the accelerated expansion of the Universe \cite{guth}; this field, however, is introduced somehow ad hoc since a detailed particle physics mechanism remains unknown) and dark energy \cite{kuinv}.
			
			So far, the scalar-tensor theories of gravity have been considered mostly in a purely metric approach \cite{capo}, \cite{fuj}, \cite{kuinv}, \cite{sot} and the possible effects of adopting the Palatini approach have not been yet investigated in the literature. The idea that changing the formalism may lead to a different theory and thence to different experimental predictions which can help with discriminating between competing theories seems plausible. The main goal of this paper is to develop mathematical background for investigating scalar-tensor theories of gravity in Palatini approach and preparing them to be verified by means of comparing their predictions (especially post-Newtonian parameters) with actual data. 
			
			In the second chapter we give a brief overview of two alternative theories which are of a great importance to this paper - scalar-tensor and $f(R)$ theories. The notion of conformal transformation will be then introduced. The chapter will end with a discussion of Palatini formalism. The way it changes our view on the geometric structure of spacetime will be particularly stressed. In the third chapter the analysis of the scalar-tensor theories in Palatini approach will begin. Modified formulae relating geometric quantities of two different conformal frames will be presented. Next, we will postulate an action functional for scalar-tensor theories whose form remains invariant under a conformal transformation. This will be followed by a detailed analysis of how the arbitrary coefficients entering the action must transform. In the subsequent section the notion of invariants will be introduced. Having obtained quantities which remain invariant under the conformal change, we will attempt at writing the action fully in terms of them. Field equations will be also obtained, and conservation laws shall be discussed. In the fourth chapter, we consider practical applications of the formalism we developed. $f(R)$ theories will be analysed using the language of invariants and Friedmann equations will be presented (in case of vanishing spatial curvature and without any sources). At the very end of the paper, we present the conclusions we draw from the analysis and give a possible outline of future investigations in this field.
	\chapter{Preliminaries}
			In this chapter certain notions essential for understanding the following parts of the thesis will be discussed. First off, the notion of a 'conformal transformation' will be introduced, together with formulae relating two geometric objects calculated in two distinct conformal frames. Conformal transformation itself will be a very important tool since it establishes a mathematical equivalence between different parametrizations, although physical predictions may be incommensurable. Then, we will give a description of two aforementioned theories belonging to the big and fertile family of extended or modified theories of gravity: $f(R)$ and scalar-tensor theories. These theories will be discussed in the purely metric approach. At the end of the chapter Palatini approach - one of the possible ways of modifying theories of gravity - will be introduced and thoroughly discussed. 
			\section{Conformal transformations}
			In this section a mathematical tool called 'conformal transformation' will be introduced. This notion is of great relevance to the following parts of the thesis. The main idea of the conformal transformation is that it transforms a metric tensor on a given spacetime into another metric tensor according to the rule:
			\begin{equation}
			g_{\mu\nu}(x)\rightarrow \bar{g}_{\mu\nu}(x)=\Omega^2(x)g_{\mu\nu}(x)
			\end{equation}
			where $\Omega(x)$ is an arbitrary, nonvanishing function of spacetime position. This transformation is not equivalent to a change of coordinate frame since it does not preserve the line element:
			$$\bar{ds}^2=\Omega^2(x)ds^2$$
			This implies that conformal transformation changes distance between two points in a way that is not uniform, but depends on the position. What the conformal transformation leaves unchanged is the quotient of lengths of two vectors attached to the same point. For any two vectors $u^\alpha\neq0$ and $v^\alpha\neq0$, we have:
			$$\frac{g_{\mu\nu}u^\mu u^\nu}{g_{\alpha\beta}v^\alpha v^\beta}=\frac{\bar{g}_{\mu\nu}u^\mu u^\nu}{\bar{g}_{\alpha\beta}v^\alpha v^\beta}$$
			which also means that the angle between two vectors remains unchanged \cite{sok}:
			$$\text{cos}\angle(\textbf{v},\textbf{u})=\frac{\textbf{u}\circ\textbf{v}}{\parallel\textbf{u}\parallel\:\parallel\textbf{v}\parallel}$$
			Of course, on manifolds with a pseudo-Riemannian metric tensor the notion of angle between two vectors does not make any sense in general, but the quotient of the lengths is still invariant - as long as the vectors do not have zero length. The fact that the angle between two vectors is preserved applies to null vectors in particular, which preserves the causal structure of spacetime.
			
			We say that a space $(\mathcal{M},g)$ is conformally flat if there exists a coordinate frame $(x^\alpha)$ in which $g_{\mu\nu}=\Omega^2(x)\delta_{\mu\nu}$ for some function $\Omega(x)$. In case of a pseudo-Riemannian metric, it is deemed conformally flat if the metric tensor is a diagonal matrix with entries being $\pm\Omega^2(x)$. 
			
			In the metric approach, geometric quantities describing curvature of spacetime are function of metric tensor and its derivatives. Hence, if we perform a conformal change of the metric, all quantities dependent on it will change accordingly. We limit our interest to the Riemann and Ricci tensors, and scalar curvature. Formulae describing such change can be found in any textbooks on differential geometry, but for the sake of completeness we can write them below (also, in the next chapter we will introduce conformal frame in so-called Palatini formalism, and it will be possible to compare outcomes of calculations carried out in both approaches):
			\begin{equation}
			\bar{g}_{\mu\nu}=\Omega^2 g_{\mu\nu},\quad \bar{g}^{\mu\nu}=\Omega^{-2}g^{\mu\nu},\quad\text{det}(\bar{g}_{\mu\nu})=\Omega^{2N}\text{det}(g_{\mu\nu})
			\end{equation}
			where $N$ is the dimension of the manifold. Christoffel symbols, defined to be $\Gamma^\alpha_{\mu\nu}=\frac{1}{2}g^{\alpha\beta}\Big(\partial_\mu g_{\nu\beta}+\partial_\nu g_{\mu\beta}-\partial_\beta g_{\mu\nu}\Big)$ transform as follows:
			\begin{equation}
			\bar{\Gamma}^\alpha_{\mu\nu}=\Gamma^\alpha_{\mu\nu}+\Omega^{-1}\Big(2\delta^\alpha_{(\mu}\nabla_{\nu)}\Omega-g_{\mu\nu}g^{\alpha\beta}\nabla_\beta\Omega\Big)
			\end{equation}
			Riemann tensor, defined as $R^\alpha_{\:\mu\beta\nu}=\partial_\beta\Gamma^\alpha_{\mu\nu}-\partial_\nu\Gamma^\alpha_{\mu\beta}+\Gamma^\alpha_{\beta\sigma}\Gamma^\sigma_{\mu\nu}-\Gamma^\alpha_{\nu\sigma}\Gamma^\sigma_{\beta\mu}$, changes in the following way \cite{blasch}:
			\begin{equation}
			\begin{split}
			\bar{R}^\alpha_{\:\mu\beta\nu}&=R^\alpha_{\:\mu\beta\nu}+\frac{1}{\Omega}\Big[\delta^\alpha_\nu\Omega_{;\mu\beta}-\delta^\alpha_\beta\Omega_{;\mu\nu}+g_{\mu\beta}\Omega^{;\alpha}_{;\nu}-g_{\mu\nu}\Omega^{;\alpha}_{;\nu}\Big] +\frac{1}{\Omega^2}\Big[\delta^\alpha_\nu g_{\mu\beta}-\delta^\alpha_\beta g_{\mu\nu}\Big]g_{\sigma\tau}\Omega^{;\sigma}\Omega^{;\tau}+\\
			&+\frac{2}{\Omega^2}\Big[\delta^\alpha_\beta\Omega_{;\mu}\Omega_{;\nu}-\delta^\alpha_\nu\Omega_{;\mu}\Omega_{;\beta}+g_{\mu\nu}\Omega^{;\alpha}\Omega_{;\beta}-g_{\mu\beta}\Omega^{;\alpha}\Omega_{;\nu}\Big]
			\end{split}
			\end{equation}
			The Ricci tensor in the new conformal frame is given by the following formula (written using a slightly different convention \cite{capo}, \cite{shap}):
			\begin{equation}
			\begin{split}
			&\bar{R}_{\mu\nu}\equiv\bar{R}^\alpha_{\mu\alpha\nu}=R_{\mu\nu}-(N-2)\nabla_\mu\nabla_\nu\text{ln}\Omega-g_{\mu\nu}\Box\text{ln}\Omega+(N-2)\nabla_\mu\text{ln}\Omega\nabla_\nu\text{ln}\Omega-\\
			&+(N-2)g_{\mu\nu}g^{\alpha\beta}\nabla_\alpha\text{ln}\Omega\nabla_\beta\text{ln}\Omega
			\end{split}
			\end{equation}
			And, finally, the curvature scalar is \cite{capo}:
			\begin{equation}
			\bar{R}=\bar{g}^{\mu\nu}\bar{R}_{\mu\nu}=\Omega^{-2}\Big[R-2(N-1)\Box\text{ln}\Omega-(N-1)(N-2)\frac{g^{\mu\nu}\nabla_\mu\Omega\nabla_\nu\Omega}{\Omega^2}\Big]
			\end{equation}
			We can introduce a tensor which remains invariant under a conformal change. This tensor is called 'Weyl tensor' $C^{\alpha}_{\:\mu\beta\nu}$, and is defined to be the part of the Riemann tensor which is independent of the Ricci tensor (since some of the Riemann tensor components can be expressed in terms of Ricci tensor). The Weyl tensor is given by the following formula \cite{sok}:
			\begin{equation}
			\begin{split}
			R^{\alpha}_{\:\mu\beta\nu}= & C^{\alpha}_{\:\mu\beta\nu}+\frac{1}{N-2}\Big(R^\alpha_\beta g_{\mu\nu}+R_{\mu\nu}\delta^\alpha_\beta-R^\alpha_\nu g_{\mu\beta}-R_{\mu\beta}\delta^\alpha_\nu\Big)-\\
			& +\frac{R}{(N-1)(N-2)}\big(\delta^\alpha_\beta g_{\mu\nu}-\delta^\alpha_\nu g_{\mu\beta}\big)
			\end{split}
			\end{equation}
			The Weyl tensor is traceless:
			$$C^{\alpha}_{\:\mu\alpha\nu}=0$$
			It is invariant under a conformal change:
			$$C^{\alpha}_{\:\mu\beta\nu}(\Omega^2 g)=C^{\alpha}_{\:\mu\beta\nu}(g)$$
			and thence it is called a 'conformal curvature tensor'. In a conformally flat space, the Weyl tensor vanishes. Also, it can be proven that a scalar density given by the following formula: $C_{\alpha\beta\mu\nu}C^{\alpha\beta\mu\nu}\sqrt{-g}$ is an invariant quantity under a conformal change.
			
			If $\Omega$ is constant, then the transformation is called a 'scale transformation'. This simply means that we change our definition of units of length and time. Virtually, since both meter-sticks and clocks would be rescaled, laws of physics should remain unchanged, since there should be no preferred standard length or time. It seems plausible that the Universe is indeed scale invariant; however, there is no such invariance in our world \cite{fuj}. For example, if we measure particle masses using different meter-sticks and clocks, we get different values. This means that as soon as masses of particles are introduced, the scale invariance is broken. On the other hand, at very high energies mass of particles is negligible, and scale invariance provides convenient tools. Also, scale invariance may be conceived as a global transformation, and we now that all transformations should be local, i.e. depend on spacetime position. Along these lines, conformal transformation can be thought of as a localized scale transformation. 
			
			Conformal invariance can be viewed in some situations \cite{fuj} as an approximate symmetry. Massless fermions exhibit conformal invariance only under special conditions. Theory of electromagnetic field is completely conformally invariant. The massless, gravitational field does not posses such invariance however, and this is a result of presence of the Newton's constant, having the dimension of $L^3M^{-1}T^{-2}$. As it turns out, conformal invariance should not occur in realistic theories. On the other hand, they turn out to be a useful tool when investigating theories with a nonminimal coupling. 
		\section{Short review of alternative theories of gravity}
			Possible alternative theories of gravity may differ drastically from the original Einstein's gravity. The name, 'alternative theories of gravity', encompasses a variety of proposed modifications of GR, ranging from a simple change in Einstein-Hilbert Lagrangian to a completely new theories unifying quantum mechanics with gravity. Theories we will be dealing with throughout this thesis belong to the former class: they are nothing but a modest modification of GR by means of adding extra terms to the Lagrangian, including a nonminimally coupled scalar field or assuming Palatini formalism. In other words, all these theories start from the original Einstein's idea and then add some corrections, which should take over the behaviour of gravity at the regimes where it fails to explain various observed phenomena. We shall refer to those theories as 'Extended Theories of Gravity' (following \cite{scap}). 
			\subsection{Brans-Dicke theory and its generalizations}
			One of the simplest way of modifying GR is adding a scalar field into the theory. It seems very pleasing and plausible that we could alter the Einstein's gravity by including one of the most primitive of Nature's phenomena. This idea was exploited by Brans and Dicke, who in 1961 proposed a theory extending GR. They pioneered so-called scalar-tensor theories of gravity, which currently are being investigating by many researchers. Their model, however, was rather naive and revealed certain shortcomings not long after it had been conceived, but should be still regarded as a prototype of all modern scalar-tensor gravities.
			
			Original motivation for Brans and Dicke to introduce a modified gravity was the need for incorporating Mach's principle into the theory of gravity. As it was shortly discussed in the Introduction, Mach's principle states that local motion of particles is affected by a large-scale distribution of matter. Ernst Mach was an Austrian physicist and philosopher who was famous for criticizing the Newton's idea of absolute space. He targeted the bucket thought experiment which was supposedly proving the existence of absolute space. In his interpretation of the experiment, Mach argued that whether the surface of water in the vessel is at rest or curved, it is always with respect to the distant starts, which means that the difference between these two states must be related to the distant mass rather than to an absolute space \cite{ma}. A similar standpoint was held also by Einstein, who was inspired by Mach's thought when he was formulating his general theory of relativity. Ironically, Einstein's theory admits solutions which are clearly anti-Machian, such as de Sitter universe, where the evolution of universe is dominated by a cosmological constant and matter is entirely negligible \cite{bible}. The failure of GR to incorporate Mach's principle inspired Brans and Dicke to look for an alternative theory.
			
			Brans and Dicke's idea was to add a scalar field whose dependence on spacetime position would be translated into a variability of the gravitational coupling. This would clearly violate the strong equivalence principle saying that results of all experiments  carried out in freely falling laboratories should be independent of the spacetime position of the experimenter. However - Brans and Dicke argue - what we managed to establish performing very accurate experiments is the weak equivalence principle, stating that all gravitational accelerations are equal, regardless of the matter composition \cite{brans}. They postulated the following action functional:
			\begin{equation}
			S[g,\Phi]=\frac{1}{2\kappa^2}\int_{\Omega}d^4x\sqrt{-g}\Big(\Phi R- \frac{\omega}{\Phi}g^{\mu\nu}\nabla_\mu\Phi\nabla_\nu\Phi\Big)+\int_{\Omega}d^4x\sqrt{-g}\mathcal{L}_{\text{matter}}(g,\chi)
			\end{equation}
			Here, $R$ is the curvature scalar, $\frac{1}{2\kappa^2}=\frac{c^4}{16\pi}$ and $\chi$ represents generic matter fields. $\Phi$ plays a role of the gravitational coupling and its dimension is the same as the dimension of the Newton's $G^{-1}$, i.e. $MT^2L^{-3}$. What is important to notice is that the scalar field does not enter the matter part of the action. If the converse were true, WEP would be violated, and this can be demonstrated with a simple reasoning \cite{fuj}: in order to obtain the geodesic trajectory for a point mass particle in a given gravitational field, we seek extremum of the functional:
			\begin{equation}
			I(g)=-m\int d\tau
			\end{equation} 
			where $m$ is the mass of the particle and $\tau$ is the proper time. The mass can be pulled out of the integral, implying that WEP holds. If the scalar field is coupled to matter, however, factoring out the mass will not be possible anymore. This will result in lack of covariant conservation of point mass particle, and this was unacceptable in view of Dicke and Brans since high-precision experiments proved that WEP holds \cite{brans}, \cite{fuj}.
			
			Equations of motion are obtained by varying with respect to the dynamical variables. $\Phi$-variation gives us the following equation \cite{brans}:
			\begin{equation}
			2\frac{\omega}{\Phi}\Box\Phi-\frac{\omega}{\Phi^2}g^{\mu\nu}\nabla_\mu\Phi\nabla_\nu\Phi+R=0 \label{b1}
			\end{equation}
			This is a wave-like equation for $\Phi$ sourced by kinetic part of the Lagrangian density and the $\Phi R$ term. 
			
			Variation with respect to the metric tensor yields:
			\begin{equation}
			G_{\mu\nu}=\frac{\kappa^2}{\Phi}T_{\mu\nu} +\frac{\omega}{\Phi^2}\Big(\nabla_\nu\Phi\nabla_\nu\Phi-\frac{1}{2}g_{\mu\nu}g^{\alpha\beta}\nabla_\alpha\Phi\nabla_\beta\Phi\Big)+\frac{1}{\Phi}\Big(\nabla_\mu\nabla_\nu\Phi-g_{\mu\nu}\Box\Phi\Big) \label{b2}
			\end{equation}
			where the energy-momentum tensor is defined to be: $T_{\mu\nu}=\frac{2}{\sqrt{-g}}\frac{\partial(\sqrt{-g}\mathcal{L}_{\text{matter}})}{\partial g^{\mu\nu}}$. It can be shown that $\nabla_\mu T^{\mu\nu}=0$, so that all conservation laws are satisfied. 
			
			As we can see, the left hand side of the equation is the familiar Einstein tensor, and the first modification of GR shows up on the right hand side: in place of the gravitational coupling there is an inverse of the scalar field. Also, the next term is the energy-momentum tensor for the scalar field. The last term contains second derivatives of the scalar field and it originates from integrating by parts when calculating the variation of Ricci tensor. Equation \ref{b2} can now be contracted with a metric tensor, giving us a direct relation between the curvature, trace of energy-momentum tensor and the scalar field:
			\begin{equation}
			-R=\frac{\kappa^2}{\Phi}T-\frac{\omega}{\Phi^2}g^{\mu\nu}\nabla_\mu\Phi\nabla_\nu\Phi-\frac{3}{\Phi}\Box\Phi
			\end{equation}
			The definition of $R$ can be now substituted in \ref{b1}, yielding:
			\begin{equation}
			\Box\Phi=\frac{\kappa^2}{3+2\omega}T \label{eqsc}
			\end{equation}
			This is a truly remarkable result since it relates the scalar field directly to the distribution of matter. This is a clear implementation of Mach's principle. Let us also note that electromagnetic field does not contribute to the trace of energy-momentum tensor. 
			
			Brans and Dicke expected the parameter $\omega$ to be of order of unity, otherwise rendering the theory unnatural. However, testing the time-delay imposed a serious constraint on the lowest possible value of $\omega$, which happens to be:
			$$3.6\times 10^3 \leq \omega$$
			The unexpectedly high value of $\omega$ means that the theory is fine-tuned. However, fine-tuning can be avoided if we give the scalar field a sufficiently large mass, thus limiting its range. Adding a self-interaction potential can be a viable way of modifying the theory and making it still appealing \cite{far}. 
			
			Since the notion of conformal transformation has been already introduced, we may now perform a conformal change in order to see whether we can get rid of the nonminimal coupling. Let us consider the following conformal transformation of the metric:
			$$\bar{g}_{\mu\nu}=G\Phi g_{\mu\nu}$$
			where $G$ is the Newton's constant, accompanied by a redefinition of the scalar field: 
			$$\bar{\Phi}=\sqrt{\frac{2\omega+3}{G}}\text{ln}\Big(\frac{\Phi}{\Phi_0}\Big)$$
			where $\omega > -\frac{3}{2}$. The action functional now reads as follows:
			\begin{equation}
			S[\bar{g},\bar{\Phi}]=\frac{1}{2\kappa^2}\int_{\Omega}d^4x\sqrt{-\bar{g}}\Bigg(\frac{\bar{R}}{G}-\frac{1}{2}\bar{g}^{\mu\nu}\bar{\nabla}_\mu\bar{\Phi}\bar{\nabla}_\nu\bar{\Phi}\Bigg)+S_{\text{matter}}\Big[\frac{1}{G\Phi_0}e^{-\sqrt{\frac{G}{2\omega+3}}\bar{\Phi}}\bar{g},\chi\Big] \label{ace}
			\end{equation}
			As we can see, the new scalar field has the dimension of $G^{-\frac{1}{2}}$. We say that the action is now cast in so-called Einstein frame - previously it was written in Jordan frame. An extensive discussion of both frames will be presented in the next chapter. The Einstein frame action looks like the standard Einstein-Hilbert action, describing the familiar gravity. However, there are two important differences. First of all, the vacuum solution of the field equations can never be written in the form $\bar{R}_{\mu\nu}=0$ since on the right hand side there will be always the scalar field, permeating the spacetime in a way that cannot be avoided \cite{capo}. Second, the coupling between matter and scalar field now appears: $\bar{\Phi}$ enters now the matter part of the action functional. This has a profound consequence: the energy-momentum tensor is no longer conserved. Instead, its divergence is directly related to the trace of itself and derivatice of the conformal factor \cite{capo}, \cite{fatibene}:
			\begin{equation}
			\bar{\nabla}^\nu\bar{T}_{\mu\nu}=-\bar{T}\:\bar{\nabla}_\mu\text{ln}\Big(G\Phi_0 e^{\sqrt{\frac{G}{2\omega+3}}\bar{\Phi}}\Big)^{\frac{1}{2}} \label{em}
			\end{equation}
			This implies that there is a 'fifth' force acting on the particles, causing them to deviate from their standard trajectories. In order to obtain the magnitude of this force, let us carry out the following reasoning: let us assume that the energy-momentum tensor (already in the Einstein frame) is that of a dust fluid: 
			$$\bar{T}^{\mu\nu}=\bar{\rho}\bar{u}^\mu\bar{u}^\nu$$
			We can plug this definition of energy-momentum tensor back in \ref{em} and get an equation (irrelevant for us right now) which suggests that the total derivative of the tangent vector along a geodesic does not equal zero anymore, but:
			\begin{equation}
			\frac{D\bar{u}^\alpha}{d\tau}=\frac{1}{2}\sqrt{\frac{G}{2\omega+3}}\bar{\nabla}^\alpha\bar{\Phi}
			\end{equation}
			Because of spacetime dependence of the right hand side of the equation, universality of free fall is clearly violated. This is a feature of non-metric theories, and non-metricity - which is equivalent to WEP violation - is a property of a given conformal frame. In the initial frame (the Jordan frame) the energy-momentum tensor was conserved, so that the theory was metric. Here, however, in the so-called Einstein frame, its divergence is proportional to its trace. The only thing that remains unaffected by a conformal transformation is a geodesic for light, since in case of radiation, $T=0$ \cite{fatibene}. 
				\subsubsection{General scalar-tensor theories of gravity}
				Before we move on to generalizations of Brans-Dicke theory - to a wider class of scalar-tensor theories of gravity - let us analyze what a possible origin of the scalar field may be. It seems that Brans and Dicke did not ponder upon this issue, viewing the model they proposed just as a simple alternative to GR \cite{fuj}. However, the scalar field added to the Einstein-Hilbert Lagrangian may result from a deeper theories, which justify the nonminimal coupling between $\Phi$ and the curvature. For example, one of the theories supporting the scalar-tensor gravity is the Kaluza-Klein model and its modern extensions, where the scalar field appears as a result of compactification of higher dimensions \cite{fuj} - to be more precise, the scalar field is (a certain power of) the radius of an $n$-dimensional compactified space. The scalar field is in this case related to determinant of the $(4+n)$-dimensional metric, and since it enters the matter part of the action functional as well, it irrevocably must be coupled to matter. This somehow contradicts Brans and Dicke's assumptions about the scalar field not violating the WEP. Another possible explanations of origins of the scalar field involve more reliable fundamental theories, such as string theories and noncommutative geometries. In the latter case, the field can be viewed as a gauge field on a discrete space and identified with the Higgs fields \cite{fuj}. A detailed discussion of these theories is far beyond the scope of this thesis; what is important to note is that the addition of a scalar field has a sound theoretical motivation.
				
				Scalar-tensor theories of gravity use the concept of conformal transformation extensively because it provides a useful tool establishing a \textit{mathematical} equivalence between two conformal frames \cite{kuinv}. However, it has been known for a long time that two different conformal frames often describe different physics, meaning that a conceptual problem arises when we start to investigate a relation between observables and frames we use to work in. Our doubts about this ambiguity can be rephrased in the following way: is there one frame that should be regarded physical, whereas all the other ones, despite the (possible) mathematical equivalence do not define units we use to measure lengths? Is it possible that the discrepancies can be attributed to the fact that the theory has not been yet formulated in an invariant way with respect to some general space? Answer to this question is not an easy one, and this issue will be addressed in the next chapter. Some authors made a significant progress on this issue \cite{kuinv} and introduced quantities which remain invariant under a change of conformal frame, and then expressed all physical observables in terms of these invariants. This reasoning seems plausible to us and in the next chapter we will dedicate some time to obtaining such invariants; in this section, however, we will introduce a rough concept of an invariant quantity in scalar-tensor theories of gravity and show that, indeed, it remains unaffected by a conformal change. 
				
				A general scalar-tensor theory allows more unspecified functions than a pretty restricted Brans-Dicke theory. In case of the latter, the only free parameter we had at our disposal was the coefficient $\omega$; in case of the former, there will be four arbitrary functions of the scalar field entering the action functional. The postulated action for the theory looks as follows:
				\begin{equation}
				S[g,\Phi]=\frac{1}{2\kappa^2}\int_{\Omega}d^4x \sqrt{-g}\Big(\mathcal{A}(\Phi)R-\mathcal{B}(\Phi)g^{\mu\nu}\nabla_\mu\Phi\nabla_\nu\Phi-\mathcal{V}(\Phi)\Big)+S_{\text{matter}}\big[e^{2\alpha(\Phi)}g,\chi\big] \label{a1}
				\end{equation}
				Here, $\mathcal{A}(\Phi)$ describes the coupling between scalar field and the curvature scalar $R$, being a function of the metric tensor $g_{\mu\nu}$. In order to make gravity an attractive force, we must set $\infty>\mathcal{A}(\Phi)>0$. $\mathcal{B}(\Phi)$ is a kinetic coupling, and $\mathcal{V}(\Phi)$ is a self-interaction potential of the scalar field, which cannot take negative values. $\alpha(\Phi)$ is an anomalous coupling between the scalar field and matter. 
				
				Varying the action functional with respect to the metric tensor, we get the following equations of motion \cite{kuinv}:
				\begin{equation}
				\begin{split}
				&\mathcal{A}(\Phi)G_{\mu\nu}+\Big(\frac{1}{2}\mathcal{B}+\mathcal{A}''\Big)g_{\mu\nu}g^{\alpha\beta}\nabla_\alpha\Phi\nabla_\beta\Phi-\big(\mathcal{B}+\mathcal{A}''\big)\nabla_\mu\Phi\nabla_\nu\Phi+\mathcal{A}'\big(g_{\mu\nu}\Box-\nabla_\mu\nabla_\nu\big)\Phi - \\
				&+\frac{1}{2}g_{\mu\nu}\mathcal{V}-\kappa^2T_{\mu\nu}=0 
				\end{split} \label{eom}
				\end{equation}
				with the standard definition of the energy-momentum tensor $T_{\mu\nu}$. Variation with respect to the scalar field gives us:
				\begin{equation}
				R\mathcal{A}'+\mathcal{B}'g^{\mu\nu}\nabla_\mu\Phi\nabla_\nu\Phi+2\mathcal{B}\Box\Phi-\mathcal{V}'+2\kappa^2\alpha'T=0
				\end{equation}
				As we can see, as in case of the Brans-Dicke theory, the scalar field is sourced by the trace of energy-momentum tensor. The continuity equation takes the following form:
				\begin{equation}
				\nabla^\nu T_{\mu\nu}=\alpha'T\nabla_\mu\Phi
				\end{equation}
				which means that the energy-momentum tensor is conserved in those frames, where coupling between scalar field matter is not present.
				
				Two of the four arbitrary functions can be fixed by means of a proper conformal change accompanied by a redefinition of the scalar field:
				\begin{equation}
				g_{\mu\nu}=e^{2\bar{\gamma}(\bar{\Phi})}\bar{g}_{\mu\nu}
				\end{equation}
				\begin{equation}
				\Phi=\bar{f}(\bar{\Phi}) \label{ts}
				\end{equation}
				It is generally assumed that the first and second derivatives of $\bar{\gamma}$ exist. Moreover, the Jacobian of the transformation \ref{ts} is allowed to be singular at some isolated point. 
				
				If we plug the redefined scalar field and metric tensor back in the action functional, make use of the transformation relations and neglect boundary terms arising while integrating by parts, we end up with the action written in a different conformal frame, with the barred dynamical variables. In order for the Lagrangian to retain its form, the coefficients must transform in the following way:
				\begin{itemize}
					\item $\bar{\mathcal{A}}(\bar{\Phi})=e^{2\bar{\gamma}(\bar{\Phi})}\mathcal{A}(\bar{f}(\bar{\Phi}))$
					\item $\bar{\mathcal{B}}(\bar{\Phi})=e^{2\bar{\gamma}(\bar{\Phi})}\Bigg(\Big(\frac{d\Phi}{d\bar{\Phi}}\Big)^2\mathcal{B}(\bar{f}(\bar{\Phi}))-6\Big(\frac{d\bar{\gamma}}{d\bar{\Phi}}\Big)^2\mathcal{A}(\bar{f}(\bar{\Phi}))-6\frac{d\bar{\gamma}}{d\bar{\Phi}}\frac{d\mathcal{A}}{d\Phi}\frac{d\Phi}{d\bar{\Phi}}\Bigg)$
					\item $\bar{\mathcal{V}}(\bar{\Phi})=e^{4\bar{\gamma}(\bar{\Phi})}\mathcal{V}(\bar{f}(\bar{\Phi}))$
					\item $\bar{\alpha}(\bar{\Phi})=\alpha(\bar{f}(\bar{\Phi}))+\bar{\gamma}(\bar{\Phi})$
				\end{itemize}
				If these relations hold, then the action functional in a new conformal frame preserves its form:
				\begin{equation}
				S[\bar{g},\bar{\Phi}]=\frac{1}{2\kappa^2}\int_{\Omega}d^4x \sqrt{-\bar{g}}\Big(\bar{\mathcal{A}}(\bar{\Phi})\bar{R}-\bar{\mathcal{B}}(\bar{\Phi})\bar{g}^{\mu\nu}\bar{\nabla}_\mu\bar{\Phi}\bar{\nabla}_\nu\bar{\Phi}-\bar{\mathcal{V}}(\bar{\Phi})\Big)+S_{\text{matter}}\big[e^{2\bar{\alpha}(\bar{\Phi})}\bar{g},\chi\big]
				\end{equation}
				The transformation relations suggest that the condition imposed on $\mathcal{A}$ and $\mathcal{V}$ are satisfied in any conformal frame. In particular, if the potential vanishes in one conformal frame, then it is equal to zero in all related conformal frames. 
				
				If we investigate the way coefficients transform, we will notice that it is possible to write out such coefficients or combinations thereof which gain only a multiplier and remain otherwise unchanged. Following \cite{kuinv}, we can write:
				\begin{itemize}
					\item $\bar{\mathcal{A}}(\bar{\Phi})=e^{2\bar{\gamma}(\bar{\Phi})}\mathcal{A}(\bar{f}(\bar{\Phi}))$
					\item $e^{2\bar{\alpha}}(\bar{\Phi})=e^{2\bar{\gamma}(\bar{\Phi})}e^{2\alpha(\bar{f}(\bar{\Phi}))}$
					\item $\bar{\mathcal{V}}(\bar{\Phi})=e^{4\bar{\gamma}(\bar{\Phi})}\mathcal{V}(\bar{f}(\bar{\Phi}))$
					\item $\bar{\mathcal{F}}(\bar{\Phi}):=\sqrt{\frac{2\bar{\mathcal{A}}(\bar{\Phi})\bar{\mathcal{B}}(\bar{\Phi})+3(\bar{\mathcal{A}}'(\bar{\Phi}))^2}{4\bar{\mathcal{A}}^2(\bar{\Phi})}}=\bar{f}'\mathcal{F}(\bar{f}(\bar{\Phi}))$
				\end{itemize}
				By picking proper combinations of these quantities we can build invariants, which preserve their form under a conformal change (they are still expressed as functions of the same coefficients):
				\begin{enumerate}
					\item $\mathcal{I}_1(\Phi)=\frac{\mathcal{A}(\Phi)}{e^{2\alpha(\Phi)}}$
					\item $\mathcal{I}_2(\Phi)=\frac{\mathcal{V}(\Phi)}{(\mathcal{A}(\Phi))^2}$
					\item $\mathcal{I}_3(\Phi)=\pm \int_{\Phi_0}^{\Phi}d\Phi'\mathcal{F}(\Phi')$
				\end{enumerate}
				Alongside the invariants defined above, we may introduce an invariant metric, remaining unchanged under a conformal transformation:
				$$\hat{g}_{\mu\nu}=\mathcal{A}(\Phi)g_{\mu\nu}$$
				(invariance of this metric follows from transformation properties of both $\mathcal{A}$ and the metric tensor $g_{\mu\nu}$).
				
				Having introduced the invariants, we may now write the action functional in terms of them. We can also assume that the relation defining invariant $\mathcal{I}_3$ is invertible, so that we can express the scalar field as a function of it. This will give us an obvious advantage of frame-independence of resulting field equations; also, all observables will be expressed in terms of the invariants. 
				\begin{equation}
				S[\hat{g},\mathcal{I}_3]=\frac{1}{2\kappa^2}\int_\Omega d^4x \sqrt{-\hat{g}}\Big(\hat{R}-2\hat{g}^{\mu\nu}\hat{\nabla}_\mu\mathcal{I}_3\hat{\nabla}_\nu\mathcal{I}_3-\mathcal{I}_2\Big)+S_{\text{matter}}[\frac{1}{\mathcal{I}_1}\hat{g},\chi] \label{ex}
				\end{equation}
				As we can see, we ended up in an Einstein-like frame, where the scalar field is not coupled to the curvature, but enters the matter part of the action, hence violating the WEP. Correspondence with the action \ref{ace} is apparent. Performing variation with respect to the invariant $\mathcal{I}_3$ and the invariant metric, we obtain field equation written in a frame-independent form:
				\begin{itemize}
					\item $\delta\hat{g}$: $\hat{G}_{\mu\nu}+\hat{g}_{\mu\nu}\hat{g}^{\alpha\beta}\hat{\nabla}_\alpha\mathcal{I}_3\hat{\nabla}_\beta\mathcal{I}_3-2\hat{\nabla}_\mu\mathcal{I}_3\hat{\nabla}_\nu\mathcal{I}_3+\frac{1}{2}\hat{g}_{\mu\nu}\mathcal{I}_2-\kappa^2\hat{T}_{\mu\nu}=0$
					\item $\delta\mathcal{I}_3$: $\hat{\Box}\mathcal{I}_3-\frac{d\mathcal{I}_2}{d\mathcal{I}_3}+\frac{\kappa^2}{4}\frac{d\text{ln}\frac{1}{\mathcal{I}_1}}{d\mathcal{I}_3}\hat{T}=0$
				\end{itemize}
				As we expected, the scalar field is now sourced by matter fields (expressed by the trace of energy-momentum tensor). As we know from our previous considerations, inasmuch as the scalar field is coupled to matter, continuity equations do not hold anymore. In order to avoid it, giving up on simplicity of the Einstein-like frame, we could have defined another invariant metric: $\tilde{g}_{\mu\nu}=e^{2\alpha(\Phi)}g_{\mu\nu}$ and express the action functional in a Jordan-like frame. Both possibilities will be discussed at length in the next chapter.
			\subsection{$f(R)$ theories of gravity}
				Adding a scalar field to the original Einstein-Hilbert Lagrangian was, as it turned out, a procedure that was theoretically well-motivated. However, it is not the only possible way to alter the original theory of gravity. Because GR originates from a very specific choice of the Lagrangian - one could dub that choice 'the simplest' - a natural question arises whether the gravity is uniquely described by Einstein's model. Not long after general relativity had been introduced, people started thinking about altering the theory, usually just for the sake of understanding it better. In 1923 Eddington himself endeavoured to modify GR by adding higher order invariants \cite{sotir}. The initial attempts to alter GR were poorly justified, since it seemed pointless from the methodological point of view to complicate the theory without any serious reason. However, a trigger to modify it was yet to come; in the Introduction some of the theoretical and experimental motivations for modifying GR were discussed. At a certain moment it became clear that Einstein's theory was not compatible with quantum theory, and also new evidence showed up suggesting that large-scale phenomena are not well-described by GR (we could mention also the inflation problem). For example, higher order theories of gravity appear when one attempts to perform quantization on a curved spacetime and tackle the renormalization problem \cite{capo}, \cite{scap}. These issues fueled the scientists' interest in alternative theories of gravity, which now is a diverse and still developing field. The wide array of possible theories of gravity results from the fact that one can modify it in various ways. In this section, we will analyze one particular way of changing the Einstein's theory: we can straightforwardly replace the curvature scalar in the Einstein-Hilber Lagrangian with a function thereof. It means that now the action functional takes the following form:
				\begin{equation}
				S[g]=\frac{1}{2\kappa^2}\int_{\Omega}d^4x\sqrt{-g}f(R)+S_{\text{matter}}[g,\chi] \label{actf}
				\end{equation}
				Here, $\kappa^2=\frac{8\pi G}{c^4}$. Surprisingly, such a simple replacement can account for many observed phenomena, since the function $f(R)$, viewed as a series expansion, contains terms which are of a phenomenological interest. It must be noted at this point that $f(R)$ theories do not purport to be fundamental theories of gravity. Their real value is that they can be used to explain certain processes in a way alternative to GR, thus providing us with some insight into how the Einstein's theory works. This means that $f(R)$ theories are toy-theories one uses in order to question a particular theory and inquire about its limitations. $f(R)$ theories are analyzed mostly in two distinct approaches: in purely metric, where the only dynamical variable entering the Lagrangian is metric tensor, and in Palatini approach, where linear connection is thought of as being independent of metric tensor. The latter approach will be analyzed later in this chapter. Here, we focus our attention on metric $f(R)$ theories.
				
				 Having the action \ref{actf}, we can perform variation with respect to the metric tensor. The derivation process is not essential here and will be dealt with in the next chapter, so we just give the result of the calculations:
				 \begin{equation}
				 f'(R)R_{\mu\nu}-\frac{1}{2}f(R)g_{\mu\nu}-(\nabla_\nu\nabla_\nu-g_{\mu\nu}\Box)f'(R)=\kappa^2 T_{\mu\nu} \label{feqf}
				 \end{equation}
				 (where $f'(R)=\frac{df}{dR}$). This gives us a set of fourth order field equations in the metric derivatives. A subtlety related to boundary terms was skipped in the process of deriving these equations. This is a problem stemming from the fact that fixing the metric tensor variations on the boundary does not mean that the related term will vanish, and is similar to that of GR, where the boundary term is offset by the Gibbons-Hawking-York surface term \cite{sotir}, which is a total divergence added to the Lagrangian. Unfortunately, no such term can be found in case of $f(R)$ theories, and the boundary term must be removed in a different way. Usually, apart from fixing variations of the metric tensor on the boundary, one can fix some other terms. There is no unique prescription, however, for doing that, and choosing different degrees of freedom affects the Hamiltonian formulation of the theory. On the level of obtaining field equations this choice is luckily devoid of any meaning. For a more detailed discussion, see \cite{cast}. 
				 
				 If we now perform contraction of \ref{feqf} with the metric tensor, we get so-called master equation:
				 \begin{equation}
				 f'(R)R-2f(R)+3\Box f'(R)=\kappa^2 T
				 \end{equation}
				 This equation relates the curvature scalar and the trace o energy-momentum tensor in a way which is highly nontrivial. In case of GR, we had a direct, algebraic relation between these two quantities. Here, the relation is differential, and due t this fact $f(R)$ theories admit a wider variety of solutions compared to GR. For example, setting $T=0$ does not necessarily imply that $R=\text{const}$. On the other hand, GR field equations with cosmological constant for vacuum can be easily reproduced. One simply sets $R=\text{const}$ (maximally symmetric solution \cite{sotir}), and gets the following master equation:
				 $$f'(R)R-2f(R)=0$$
				 Let us assume that there is a value of $R$ which solves this equation: if $R=c$, then $f'(c)c-2f(c)=0$ (and hence, $c=\frac{2f(c)}{f'(c)}$). We now come back to the field equations for vacuum:
				 $$f'(c)R_{\mu\nu}-\frac{1}{2}g_{\mu\nu}f(c)=0$$
				 which is equivalent to:
				 $$R_{\mu\nu}=\frac{1}{4}g_{\mu\nu}\frac{2f(c)}{f'(c)}\equiv\frac{1}{4}g_{\mu\nu}c$$
				 This is maximally symmetric de Sitter spacetime, or GR with cosmological constant added.
				 
				 Another important aspect of the $f(R)$ theories in metric approach is that the energy-momentum tensor is conserved \cite{koivisto}:
				 \begin{equation}
				 \nabla_\mu T^{\mu\nu}=0
				 \end{equation}
				 
				 It can be shown that $f(R)$ theories of gravity are equivalent to scalar-tensor theories by means of a Legendre transformation. More on this topic can be found in the fourth chapter of this thesis. This is an important equivalence since it allows us to investigate $f(R)$ theories using tools developed for analyzing scalar-tensor theories of gravity.
				 
		\section{Palatini formalism}
		As it was mentioned in the Introduction, one of the mathematical motivations for modifying GR was a discernment between metric and affine structures of spacetime. In practical terms, viewing these two structures as independent of each other means that we no longer consider the linear connection $\Gamma^\alpha_{\mu\nu}$ on the spacetime to be a Levi-Civita connection of the metric tensor $g_{\mu\nu}$. It simply means that quantities which previously were functions of the metric tensor, such as the Riemann tensor, depend now solely on a metric-independent connection. This possibility was first analysed by Einstein himself, and the idea dates back to 1925 \cite{scap}, but due to a historical misunderstanding, it was dubbed a 'Palatini approach', named after an Italian physicists Attilio Palatini. Einstein also applied this formalism to GR, but in case of the Einstein-Hilbert Lagrangian both approaches result in the very same equations of motion because the independent connection turns out to be a Levi-Civita connection of $g_{\mu\nu}$. Therefore, due to the simplicity of E-H Lagrangian there is no particular reason to impose Palatini variation. The situation changes dramatically in case of the Extended Theories of Gravity, where the E-H Lagrangian is replaced with a more complicated function or a scalar field is added \cite{sotir}, \cite{capo}. Metric approach and Palatini approach are no longer compatible and, besides giving different equations, they also describe different physics and give contradictory predictions. 
		
		Physical understanding of the decoupling metric from linear connection is the following: the metric of a Lorentzian signature sets up the geometric structure of spacetime and allows one to measure distances, volumes and time, as well as it makes establishing causal structure on the spacetime possible. The (torsionless) connection, on the other hand, defines free-fall (and hence, parallel transport). Thus, the Principle of Equivalence and the Principle of Causality become now independent. In this way, the Palatini approach 'enriches the geometric structure of spacetime and generalizes the metric approach' \footnote{	S. Capozziello, V. Faraoni, \textit{Beyond Einstein Gravity: A Survey of Gravitational Theories for Cosmology and Astrophysics}, Springer (2011) page 68}.  
		
		Let us now focus on a concrete application of the Palatini formalism to gravity theories - to $f(R)$ theories, to be more specific. The action functional we postulate is exactly the same as in \ref{actf}, but with a subtle change: now the curvature scalar is regarded as a function of both the metric tensor and the connection: $\mathcal{R}(g,\Gamma)=g^{\mu\nu}\mathcal{R}_{\mu\nu}(\Gamma)$, so that:
		\begin{equation}
		S[g, \Gamma]=\frac{1}{2\kappa^2}\int_{\Omega}d^4x\sqrt{-g}f(\mathcal{R})+S_{\text{matter}}[g,\chi] \label{actpa}
		\end{equation}
		As we can see, the matter part still depends only on the metric tensor (and on some generic matter fields); the novelty is the gravity part. The fact that matter couples only to $g_{\mu\nu}$ means that either we are restricting ourselves only to considering some special fields or parallel transport is defined by the Levi-Civita connection of the metric $g_{\mu\nu}$ \cite{sotir}. These two facts render the theory \textit{metric} since it satisfies conditions imposed on a metric theory of gravity. In particular, it means that the energy-momentum tensor is conserved if we calculate the covariant derivative using the Levi-Civita connection (and it is not conserved if we choose to calculate the divergence using the independent connection) \cite{koivisto}.  
		
		If we perform variation of the action \ref{actpa} with respect to both dynamical variables, we obtain the following field equations:
		\begin{enumerate}
			\item $\delta g$: $f'(\mathcal{R})\mathcal{R}_{\mu\nu}-\frac{1}{2}g_{\mu\nu}f(\mathcal{R})=\kappa^2 T_{\mu\nu}$
			\item $\delta \Gamma$: $\nabla^\Gamma_\alpha\Big(\sqrt{-g}f'(\mathcal{R})g^{\mu\nu}\Big)=0$
		\end{enumerate}
		The first equation resembles somehow the Einstein field equations. The second one carries information about a relation between the metric tensor and the connection. This might seem to be a circular reasoning since in the argument of the derivative we have a function of the curvature scalar, which is yet to be determined. However, we can contract the first equation with the metric tensor $g^{\mu\nu}$ and get the master equation relating algebraically the curvature scalar and matter fields represented by the trace of the energy-momentum tensor:
		$$f'(\mathcal{R})\mathcal{R}-2f(\mathcal{R})=\kappa^2 T$$
		(What is worth mentioning now is that if we choose $f(\mathcal{R})\propto\mathcal{R}^2$ then $f'(\mathcal{R})=2\mathcal{R}$ and the left hand side of the above equation vanishes identically, meaning that also $T$ must be equal to zero. Then, only conformally invariant matter can be described by such theory \cite{sotir}). Having this relation, we can in principle solve it and look for roots giving us a direct relation between $\mathcal{R}$ and $T$: $\mathcal{R}=\mathcal{R}(T)$ and $f(\mathcal{R}(T))\equiv f(T)$, and the second field equation can be written as:
		\begin{equation}
		\nabla^\Gamma_\alpha\Big(\sqrt{-g}f'(T)g^{\mu\nu}\Big)=0
		\end{equation}
		Let us now introduce a second metric on the spacetime defined to be $\bar{g}_{\mu\nu}=f'(T(x))g_{\mu\nu}\equiv\Omega^2(x)g_{\mu\nu}$. This metric is obviously conformally related to the metric $g_{\mu\nu}$. We can also show that:
		$$\sqrt{-\text{det}(\bar{g}_{\alpha\beta})}\bar{g}^{\mu\nu}=\sqrt{-\text{det}(g_{\alpha\beta})}f'(T)g^{\mu\nu}$$
		And hence, the second field equations reads now as follows:
		\begin{equation}
		\nabla^\Gamma_\alpha\Big(\sqrt{-\bar{g}}\bar{g}^{\mu\nu}\Big)=0
		\end{equation}
		By a well-known theorem \cite{sok}, the connection used to defined the covariant derivative must be a Levi-Civita connection of the new metric $\bar{g}_{\mu\nu}$ (this will be also proved in the next chapter, step by step, for a scalar field coupled to the metric tensor):
		$$\bar{\Gamma}^\alpha_{\mu\nu}=\frac{1}{2}\bar{g}^{\alpha\beta}\big(\partial_\mu\bar{g}_{\nu\beta}-\partial_\nu\bar{g}_{\mu\beta}-\partial_\beta\bar{g}_{\mu\nu}\big)$$
		Also, $\mathcal{R}_{\mu\nu}=\mathcal{R}(g)_{\mu\nu}$. It means that we can eliminate the independent connection from the equations and treat it as a auxiliary field. Moreover, making use of the conformal transformation relation, we can now unravel the dependence of the curvature scalar $\mathcal{R}$ on the curvature scalar $R$ calculated using the metric tensor $g_{\mu\nu}$ only \cite{sotir}:
		\begin{equation}
		\mathcal{R}=R+\frac{3}{2(f'(T))^2}\nabla_\mu f'(T) \nabla^\mu f'(T)+\frac{3}{f'(T)}\Box f'(T)
		\end{equation}
		In a similar manner we can give an explicit relation between the Ricci tensor defined with respect to the independent connection and the Ricci tensor being a function of the metric tensor (actually, it would have been more appropriate to give this relation first, since the one shown above is its mere result; we do not write this lengthy formula here since it is only an intermediate step). Having this, we can basically plug these relation back in the first field equation and successfully reduce the number of field equations to one equation (albeit a very complex one):
		\begin{equation}
		\begin{split}
		G_{\mu\nu}(g)=&\frac{\kappa^2}{f'(T)} T_{\mu\nu}-\frac{1}{2}g_{\mu\nu}\Big(\mathcal{R}(T)-\frac{f(T)}{f'(T)}\Big)+\frac{1}{f'(T)}\big(\nabla_\mu\nabla_\nu-g_{\mu\nu}\Box\big)f'(T)-\\
		&+\frac{3}{2(f'(T))^2}\Big(\nabla_\mu f'(T)\nabla_\nu f'(T)-\frac{1}{2}g_{\mu\nu}\nabla_\alpha f'(T)\nabla^\alpha f'(T)\Big)
		\end{split} \label{difff}
		\end{equation}
		The theory turned out to be equivalent to GR with a modified source including derivatives of $T$, which did not occur in GR. This of course has some serious implications. For example, the vacuum for this theory is the same as in case of GR with the cosmological constant added \cite{bo}. 
		
		An important remark must be made here. Palatini formalism in case of $f(R)$ theories implies a bimetric structure of the theory. Such metric are conformally related to each other and one of them appears in matter part of the action functional (as the matter is assumed to be decoupled from the independent connection), and the other one builds geometric structures such as Riemann tensor. According to some authors \cite{popl}, \cite{alle}, roles played by these two metrics can be clearly divided: the metric building the geometric objects (being hence Levi-Civita-compatible with the independent connection, which is a flagship of the Palatini formalism) is the one conformally related the metric coupled the matter, used to measure distances. In fact, their physical meaning should be exactly opposite (at least in so-called Einstein frame). As we saw in the section dedicated to the Brans-Dicke theory, after a conformal change matter is coupled to the scalar field, thus violating WEP; the same happens in case of Palatini $f(R)$, where a vicarious role of the scalar field is being played by $f'(R)$, since $\bar{g}_{\mu\nu}=f'(T)g_{\mu\nu}$, and the action can be written in the following form:
		$$S[g,\bar{g}]=\frac{1}{2\kappa^2}\int_{\Omega}d^4x\sqrt{-\bar{g}}(f'(\bar{R}))^{-2}f(\bar{R})+S_{\text{matter}}[g,\chi]$$
		Particles follow geodesics defined by the metric $g_{\mu\nu}$, and an alleged deviation from their standard trajectories are observed only if we insist on treating the metric $\hat{g}_{\mu\nu}$ as the metric defining parallel transport. Hence, it seems more accurate to reverse interpretation of both metric tensors: the initial one is responsible for motion of the particles (as it was shown in the section dedicated to BD theories), and the new one is used to measure distances \cite{koivisto}.
		
		Surprisingly, despite the fact that $f(R)$ theory in Palatini approach might seem particularly appealing, it is in conflict with the Standard Model \cite{sotir}. Additionally, $f(R)$ theories in Palatini formalism exhibit a singular behavior when analysing the stellar structure, giving rise to infinite tidal forces on the surface. This basically means that the theory is at best incomplete \cite{nogo}. This stems from the fact that the equation \ref{difff} can be up to the third order in matter derivatives, whereas in case of GR only the first derivatives are present in the trace $T$. Metric tensor is an integral over all matter sources and any possible discontinuities of the latter (and their derivatives) will not translate to singularities/discontinuities of the metric \footnote{Of course, there is an ongoing debate regarding viability of Palatini $f(R)$ theories. Some authors address the issue of infinite tidal forces on the surface resulting from the fact that the conformal factor of the 'original' metric $g_{\mu\nu}$ is continuous but not $\mathcal{C}^1$ at the surface claiming that this discontinuity can be removed when one works with the conformal metric, $\bar{g}_{\mu\nu}$\cite{go}}. Despite these very serious shortcomings, one should keep in mind that the theory is thought of as a toy theory, and its main objective is to understand general theory better.
	\chapter{Scalar-tensor theories of gravity in Palatini approach}
		\section{Conformal transformations in Palatini formalism}
			If we follow the process of deriving conformal transformation formulae for the Riemann tensor, it becomes obvious that the underlying assumption we make is that the connection used to build up the tensor is the Levi-Civita connection, e.g. it is related to the metric tensor by a well-known formula: $\Big\{{\alpha\atop{\mu\nu}}\Big\}=\frac{1}{2}g^{\alpha\beta}(g_{\beta\mu,\nu}+g_{\beta\nu,\mu}-g_{\mu\nu,\beta})$. In order for a connection $\Gamma^\alpha_{\mu\nu}$ to be related to the metric tensor by the formula given above, the following conditions must be fulfilled: $\Gamma^\alpha_{\mu\nu}=\Gamma^\alpha_{\nu\mu}$ and $\nabla g_{\mu\nu}=0$. We can, however, relax the imposed constraints and start  considering the affine connection as a quantity entirely independent of the metric tensor. In this case, transformation relations will change since the covariant derivative of the metric tensor will not vanish in general. Hence, we postulate that the metric tensor and the affine connection transform under a conformal change independently of one another:

			\begin{equation}g_{\mu\nu}=e^{2\bar{\gamma_1}(\bar{\Phi}(x))}\bar{g}_{\mu\nu} \label{T1}\end{equation}
			and
			 \begin{equation}\Gamma^\alpha_{\mu\nu}=\bar{\Gamma}^\alpha_{\mu\nu}+\delta^\alpha_\mu\bar{\nabla}_\nu\:\bar{\gamma}_2(\bar{\Phi})+\delta^\alpha_\nu\bar{\nabla}_\mu\:\bar{\gamma}_2(\bar{\Phi})-\epsilon\bar{g}_{\mu\nu}\bar{g}^{\alpha\beta}\bar{\nabla}_\beta\bar{\gamma}_2(\bar{\Phi})\label{T2}\end{equation}
			
			All quantities depend on a spacetime position; however, the dependence of the functions $\bar{\gamma_1}, \bar{\gamma_2}$ is indirect, they depend on the position through a scalar field. The coefficient $\epsilon$ is introduced in order to gain the possibility of 'switching off' the last term. This may turn out very handy if we want to investigate the behaviour of the Ricci scalar under a change of conformal frame. It is important to note that when $\epsilon=0$, the formula is a so-called geodesic mapping, meaning that it preserves geodesics \cite{ker},\cite{diff}. More on the topic of geodesic maps can be found in Appendix A.\footnote{Another topic which remains beyond the scope of this work is implementation of Ehlers-Pirani-Schild (EPS) framework for gravity, in which metric $g$ and connection $\Gamma$ are treated as two independent objects, and where equations of motion enforce a compatibility condition relating these two objects: $\Gamma^\alpha_{\mu\nu}=\Big\{{\alpha\atop{\mu\nu}}\Big\}-\frac{1}{2}\Big(g^{\alpha\beta}-2\delta^\alpha_{(\mu}\delta^\beta_{\nu)}\Big)\nabla_\beta\text{ln}\phi\equiv\Gamma^\alpha_{\mu\nu}(\phi g)$ (or, to be put differently, $\nabla_\alpha g_{\mu\nu}=2A_\alpha g_{\mu\nu}$ for some covector $A_\mu$). This result is obtained from a couple of very basic postulates, based on the way particles and light rays move in spacetime; see \cite{lor},\cite{fatibene},\cite{eps}. The outcome of the postulates is a Lorentzian metric - or, to be more precise, a whole family of Lorentzian metric related by a pointwise conformal transformation. Since the upshot of the procedure is not a unique metric, but a family of conformally related metrics, one cannot observe a representative of this gauge. Physical observables should not depend upon a representative, but they should depend on the conformal structure as a whole. EPS formalism is of some interest since $f(R)$ theories of gravity in Palatini formalism are a case of integrable Extended Theories of Gravity, where ETG framework is implemented.}
			\\ The conformal transformation is accompanied by a reparametrization of the scalar field: 
			\begin{equation}
			\Phi=\bar{f}(\bar{\Phi}) \label{T3}
			\end{equation}
			If the calculations are performend in $n$ dimensions, the formulae relating Riemann tensors of two different conformal frames are the following:
			\begin{equation}
			\begin{split}
			R^\alpha_{\mu\beta\nu}& =\bar{R}^\alpha_{\mu\beta\nu}+\delta^\alpha_\nu\bar{\nabla}_\beta\bar{\nabla}_\mu\bar{\gamma_2}(\bar{\Phi})-\delta^\alpha_\beta\bar{\nabla}_\nu\bar{\nabla}_\mu\bar{\gamma_2}(\bar{\Phi})-\delta^\alpha_\nu\bar{\nabla}_\beta\bar{\gamma_2}(\bar{\Phi})\bar{\nabla}_\mu\bar{\gamma_2}(\bar{\Phi})+\delta^\alpha_\beta\bar{\nabla}_\nu\bar{\gamma_2}(\bar{\Phi})\bar{\nabla}_\mu\bar{\gamma_2}(\bar{\Phi}) +\\
			& \epsilon\Big[\bar{g}_{\mu\beta}\bar{g}^{\alpha\lambda}\bar{\nabla}_\nu\bar{\nabla}_\lambda\bar{\gamma_2}(\bar{\Phi})-\bar{g}_{\mu\nu}\bar{g}^{\alpha\lambda}\bar{\nabla}_\beta\bar{\nabla}_\lambda\bar{\gamma_2}(\bar{\Phi})+\delta^\alpha_\nu\bar{g}_{\mu\beta}\bar{g}^{\sigma\lambda}\bar{\nabla}_\sigma\bar{\gamma_2}(\bar{\Phi})\bar{\nabla}_\lambda\bar{\gamma_2}(\bar{\Phi})-\\
			&\delta^\alpha_\beta\bar{g}_{\mu\nu}\bar{g}^{\sigma\lambda}\bar{\nabla}_\sigma\bar{\gamma_2}(\bar{\Phi})\bar{\nabla}_\lambda\bar{\gamma_2}(\bar{\Phi}) +  \epsilon\Big(\bar{g}^{\alpha\lambda}\bar{g}_{\mu\nu}\bar{\nabla}_\lambda\bar{\gamma_2}(\bar{\Phi})\bar{\nabla}_\beta\bar{\gamma_2}(\bar{\Phi})-\bar{g}^{\alpha\lambda}\bar{g}_{\mu\beta}\bar{\nabla}_\lambda\bar{\gamma_2}(\bar{\Phi})\bar{\nabla}_\nu\bar{\gamma_2}(\bar{\Phi})\Big)\Big] +\\
			& \epsilon\Big[\bar{g}^{\alpha\lambda}\bar{\nabla}_\nu\bar{g}_{\mu\beta}\bar{\nabla}_\lambda\bar{\gamma_2}(\bar{\Phi})-\bar{g}^{\alpha\lambda}\bar{\nabla}_\beta\bar{g}_{\mu\nu	}\bar{\nabla}_\lambda\bar{\gamma_2}(\bar{\Phi})+\bar{g}_{\mu\beta}\bar{\nabla}_\nu\bar{g}^{\alpha\lambda}\bar{\nabla}_\lambda\bar{\gamma_2}(\bar{\Phi})-\bar{g}_{\mu\nu}\bar{\nabla}_\beta\bar{g}^{\alpha\lambda}\bar{\nabla}_\lambda\bar{\gamma_2}(\bar{\Phi})\Big]
			\end{split}
			\end{equation}
			
			The formula for the (symmetrized) Ricci curvature tensor reads as follows:
			
			\begin{equation}
			\begin{split}
			\hat{R}_{(\mu\nu)} &=\hat{\bar{R}}_{\mu\nu}-(n-1-\epsilon)\bar{\nabla}_{\mu}\bar{\nabla}_\nu\bar{\gamma_2}(\bar{\Phi})+(n-1-\epsilon^2)\bar{\nabla}_\nu\bar{\gamma_2}(\bar{\Phi})\bar{\nabla}_\mu\bar{\gamma_2}(\bar{\Phi})-\epsilon\bar{g}_{\mu\nu}\bar{g}^{\alpha\beta}\bar{\nabla}_{\alpha}\bar{\nabla}_\beta\bar{\gamma_2}(\bar{\Phi})-\\
			&\epsilon(n-1-\epsilon)\bar{g}_{\mu\nu}\bar{g}^{\alpha\beta}\bar{\nabla}_\alpha\bar{\gamma_2}(\bar{\Phi}){\nabla}_\beta\bar{\gamma_2}(\bar{\Phi})  +\epsilon\Big[\bar{g}_{\mu\nu}\bar{g}^{\alpha\beta}\bar{g}^{\sigma\lambda}\bar{\nabla}_\alpha\bar{g}_{\beta\sigma}-\bar{g}^{\alpha\lambda}\bar{\nabla}_\alpha\bar{g}_{\mu\nu}\Big]\bar{\nabla}_\lambda\bar{\gamma_2}(\bar{\Phi})
			\end{split}
			\end{equation}
			
			And, finally, contracting the previous formula with the metric tensor, we get an expression for the Ricci scalar:
			
			\begin{equation}
			\begin{split}
			\hat{R}& =e^{-2\bar{\gamma}_1(\bar{\Phi})}\Big[\hat{\bar{R}}-(n-1-\epsilon+n\epsilon)\bar{g}^{\mu\nu}\bar{\nabla}_\mu\bar{\nabla}_\nu\bar{\gamma}_2(\bar{\Phi})+\epsilon\bar{g}^{\mu\nu}g^{\lambda\sigma}\Big(n\bar{\nabla}_\mu\bar{g}_{\nu\sigma}-\bar{\nabla}_\sigma\bar{g}_{\nu\mu}\Big)\bar{\nabla}_\lambda\bar{\gamma}_2(\bar{\Phi}) + \\
			& (n-1-\epsilon^2-\epsilon n^2+\epsilon n+\epsilon^2 n)\bar{g}^{\mu\nu}\bar{\nabla}_\mu\bar{\gamma}_2(\bar{\Phi})\bar{\nabla}_\nu\bar{\gamma}_2(\bar{\Phi})\Big] \label{curvsc}
			\end{split}
			\end{equation}
			
			Now, since the function $\bar{\gamma}_2$ does not depend on spacetime position explicitly, the derivative of this quantity can be cast in the following form: 
			 $$\bar{\nabla}_\mu\bar{\gamma}_2(\bar{\Phi})=\frac{d \bar{\gamma}_2(\bar{\Phi})}{d\bar{\Phi}}\bar{\nabla}_\mu\bar{\Phi}\equiv\bar{\gamma}'_2\bar{\nabla}_\mu\bar{\Phi}$$
			 If we plug this into the expression for the Ricci scalar and assume $\epsilon=1$, $n=4$, we get the full transformation formula:
			\begin{equation}
			\hat{R}=e^{-2\bar{\gamma}_1(\bar{\Phi})}\Big[\hat{\bar{R}}-6\bar{g}^{\mu\nu}\Big(\bar{\gamma}''_2(\bar{\Phi})+(\bar{\gamma}'_2(\bar{\Phi}))^2\Big)\bar{\nabla}_\mu\bar{\Phi}\bar{\nabla}_{\nu}\bar{\Phi}-6\bar{g}^{\mu\nu}\bar{\gamma}'_2(\bar{\Phi})\bar{\nabla}_\mu\bar{\nabla}_\nu\bar{\Phi}+$$$$+\bar{\gamma}'_2(\bar{\Phi})\bar{g}^{\mu\nu}\bar{g}^{\tau\sigma}(4\bar{Q}_{\sigma\tau\mu}-\bar{Q}_{\mu\sigma\tau})\bar{\nabla}_{\nu}\bar{\Phi}\Big]
			\end{equation}
			where $\bar{Q}_{\sigma\tau\mu}\equiv \bar{\nabla}_\sigma\bar{g}_{\tau\mu}$. As we can see, up to a certain point the formula exactly matches calculations performed in the purely metric approach. However, the last term containing derivatives of the metric tensor is a novelty. The tensor $\bar{Q}_{\sigma\tau\mu}$ indicates whether the theory is metric; if it vanishes, then the connection is Levi-Civita with respect to the metric tensor. If the converse is true, connection and metric tensor are two independent variables. 
			
		\section{Action functional and equations of motion}
			
			We postulate the following action functional:
			
			\begin{equation}
			\begin{split}
			S[\Phi,g,\Gamma]=&\frac{1}{2\kappa^2}\int_{\Omega}d^4x\sqrt{-g}\Big[\mathcal{A}(\Phi)\hat{R}-\mathcal{B}(\Phi)g^{\mu\nu}\nabla_\mu\Phi\nabla_\nu\Phi-\mathcal{C}(\Phi)A^\mu\nabla_\mu\Phi-\mathcal{V}(\Phi)\Big]\\
			&+ S_{\text{matter}}[e^{2\alpha(\Phi)}g,\chi]
			\end{split} \label{actt}
			\end{equation}
			
			This action contains five arbitrary functions: $\{\mathcal{A}(\Phi),\mathcal{B}(\Phi),\mathcal{C}(\Phi),\mathcal{V}(\Phi),\alpha(\Phi)\}$ depending on the scalar field $\Phi$. The function $\mathcal{A}(\Phi)$ describes coupling between the scalar field and the scalar curvature, $\mathcal{B}(\Phi)$ is the kinetic coupling, $\mathcal{V}(\Phi)$ is a self-interacting potential of the scalar field, $\alpha(\Phi)$ is a coupling of the scalar field to the matter part of the action. The coefficient $\mathcal{C}(\Phi)$ does not have a clear interpretation yet; it multiplies the term linear in spacetime derivatives of the scalar field. As it will be shown alter on, if the coefficient $\mathcal{C}(\Phi)$ vanishes, then it is always possible to find a metric $\hat{g}$ conformally related to the metric $g$ such that the connection is Levi-Civita with respect to that metric. Also, the vector $A^\mu$ is defined to be: $A^\mu= g^{\mu\nu}g^{\alpha\beta}(Q_{\nu\alpha\beta}-Q_{\beta\alpha\nu})$; this definition stems from the transformation properties of the curvature scalar. It must be added in order to keep the form of action functional unchanged under a conformal change. 
			
			Variation of the action with respect to the metric tensor yields the first set of field equations:
			
			\begin{equation}
			\begin{split}
			&-\frac{1}{2}g_{\mu\nu}\mathcal{L}+\mathcal{A}(\Phi)\hat{R}_{\mu\nu}-\mathcal{B}(\Phi)\nabla_\mu\Phi\nabla_\nu\Phi-\mathcal{C}(\Phi)\nabla_\alpha\Phi\Big[\frac{1}{2}g^{\sigma\tau}\delta^{(\alpha}_\mu\delta^{\beta)}_{\nu}Q_{\beta\sigma\tau}+ \frac{1}{2}g_{\mu\nu}g^{\sigma\tau}g^{\alpha\beta}Q_{\beta\sigma\tau}\\
			&-g_{\mu\nu}g^{\sigma\alpha}g^{\tau\beta}Q_{\beta\sigma\tau}\Big]+\Big[\mathcal{C}'(\Phi)\nabla_\alpha\Phi\nabla_\beta\Phi+\mathcal{C}(\Phi)\nabla_\alpha\nabla_\beta\Phi\Big]\Big(\delta^{(\alpha}_\mu\delta^{\beta)}_{\nu}-g^{\alpha\beta}g_{\mu\nu}\Big)=\kappa^2T_{\mu\nu}
			\end{split} \label{eqp1}
			\end{equation}
			
			Variation with respect to the affine connection gives us the following equations:
			\begin{equation}
			\nabla_\tau\Big[\mathcal{A}(\Phi)\sqrt{-g}\Big(\delta^\tau_\lambda\delta^\sigma_\nu g^{\mu\nu}-g^{\nu(\mu}\delta^{\sigma)}_\lambda\delta^\tau_\nu\Big)\Big]-\sqrt{-g}\mathcal{C}(\Phi)\nabla_\nu\Big(g^{\sigma\mu}\delta^\nu_\lambda-g^{\nu(\mu}\delta^{\sigma)}_\lambda\Big)=0 \label{eqp2}
			\end{equation}
			
			These equations are somehow easier to analyse than the previous ones. As we can see, if the coefficient $\mathcal{C}(\Phi)$ vanishes, we end up with a somewhat simplified and, presumably, more familiar equations:
			\begin{equation}
			\nabla_\tau\Big[\mathcal{A}(\Phi)\sqrt{-g}\Big(\delta^\tau_\lambda\delta^\sigma_\nu g^{\mu\nu}-g^{\nu(\mu}\delta^{\sigma)}_\lambda\delta^\tau_\nu\Big)\Big]=0 \label{simpl}
			\end{equation}
			If we set now $\mu=\lambda$, we get:
			$$\nabla_\tau\Big[\mathcal{A}(\Phi)\sqrt{-g}\Big(\delta^\tau_\lambda\delta^\sigma_\nu g^{\lambda\nu}-g^{\nu(\lambda}\delta^{\sigma)}_\lambda\delta^\tau_\nu\Big)\Big]=\nabla_\tau\Big[\mathcal{A}(\Phi)\sqrt{-g}\Big( g^{\tau\sigma}-\frac{1}{2}g^{\tau\sigma}-2g^{\tau\sigma}\Big)\Big]=$$
			$$=-\frac{3}{2}\nabla_\tau\Big[\mathcal{A}(\Phi)\sqrt{-g} g^{\tau\sigma}\Big]=0$$
			This means that the second term in \ref{simpl} does not give any contribution (since $\nabla_\alpha \delta^\mu_\nu=0$, which can be easily verified), so we are left with:
			$$\nabla_\lambda\Big[\mathcal{A}(\Phi)\sqrt{-g} g^{\mu\sigma}\Big]=0$$
			If we now define a new metric, conformally related to the metric $g$: $$\hat{g}_{\mu\nu}=\mathcal{A}(\Phi)g_{\mu\nu} \rightarrow \hat{g}\equiv\text{det}(\hat{g}_{\mu\nu})=\mathcal{A}^4(\Phi)\text{det}(g_{\mu\nu})\equiv\mathcal{A}^4(\Phi)\:g$$ from which it follows that $$\sqrt{\hat{g}}\hat{g}^{\mu\nu}=\mathcal{A}(\Phi)\sqrt{g}g^{\mu\nu}$$ we see that 
			\begin{equation}
			\nabla_\lambda\Big[\sqrt{\hat{g}}\hat{g}^{\mu\sigma}\Big]=0 \label{ident}
			\end{equation} Also, since the covariant derivative of the Kronecker symbol vanishes, we can write:
			$$0=\nabla_\alpha \delta^\mu_\nu=\nabla_\alpha\Big(\hat{g}^{\mu\sigma}\hat{g}_{\sigma\nu}\Big)=\hat{g}^{\mu\sigma}\nabla_\alpha\hat{g}_{\sigma\nu}+\hat{g}_{\sigma\nu}\nabla_\alpha\hat{g}^{\mu\sigma}$$ which means that
			$$\nabla_\alpha \hat{g}^{\mu\nu}=-\hat{g}^{\sigma\mu}\hat{g}^{\lambda\nu}\nabla_\alpha\hat{g}_{\sigma\lambda}$$ Splitting the product in \ref{ident} using the Leibniz rule, we write now
			\begin{equation}
			\begin{split}
			&\nabla_\lambda\Big[\sqrt{\hat{g}}\hat{g}^{\mu\sigma}\Big]=\sqrt{\hat{g}}\nabla_\lambda\hat{g}^{\mu\sigma}+\hat{g}^{\mu\sigma}\nabla_\lambda\sqrt{\hat{g}}=\sqrt{\hat{g}}\nabla_\lambda\hat{g}^{\mu\sigma}+\hat{g}^{\mu\sigma}\big(\partial_\lambda\sqrt{\hat{g}}-\sqrt{\hat{g}}\Gamma^\tau_{\mu\tau}\big)=\\
			&=\sqrt{\hat{g}}\nabla_\lambda\hat{g}^{\mu\sigma}+\hat{g}^{\mu\sigma}\big(\frac{1}{2}\sqrt{\hat{g}}\hat{g}^{\kappa\tau}\partial_\lambda\hat{g}_{\kappa\tau}-\sqrt{\hat{g}}\Gamma^\tau_{\mu\tau}\big)=\sqrt{\hat{g}}\Big[\nabla_\lambda\hat{g}^{\mu\sigma}+\hat{g}^{\mu\sigma}\hat{g}^{\kappa\tau}\big(\frac{1}{2}\hat{g}^{\kappa\tau}\partial_\lambda\hat{g}_{\kappa\tau}-\\
			&+\frac{1}{2}\Gamma^\xi_{\mu\tau}\hat{g}_{\kappa\xi}-\frac{1}{2}\Gamma^\xi_{\mu\kappa}\hat{g}_{\xi\tau}\big)\Big]=\sqrt{\hat{g}}\Big[\nabla_\lambda\hat{g}^{\mu\sigma}+\frac{1}{2}\hat{g}^{\mu\sigma}\hat{g}^{\kappa\tau}\nabla_\lambda\hat{g}_{\kappa\tau}\Big]=\sqrt{\hat{g}}\Big[-\hat{g}^{\kappa\mu}\hat{g}^{\tau\sigma}\nabla_\lambda\hat{g}_{\kappa\tau}+\\
			&+\frac{1}{2}\hat{g}^{\mu\sigma}\hat{g}^{\kappa\tau}\nabla_\lambda\hat{g}_{\kappa\tau}\Big]=\sqrt{\hat{g}}\Big[-\hat{g}^{\kappa\mu}\hat{g}^{\tau\sigma}+\frac{1}{2}\hat{g}^{\mu\sigma}\hat{g}^{\kappa\tau}\Big]\nabla_\lambda\hat{g}_{\kappa\tau}=0
			\end{split}
			\end{equation}
			where the formula for derivative of metric determinant has been used. It follows now that: 
			\begin{equation}
			\Big[-\hat{g}^{\kappa\mu}\hat{g}^{\tau\sigma}+\frac{1}{2}\hat{g}^{\mu\sigma}\hat{g}^{\kappa\tau}\Big]\nabla_\lambda\hat{g}_{\kappa\tau}=0
			\end{equation}
			We can contract this equation with $\hat{g}_{\mu\sigma}$ and get:
			\begin{equation}
			\hat{g}^{\kappa\tau}\nabla_\lambda\hat{g}_{\kappa\tau}=0 \label{ident2}
			\end{equation}
			Having obtained this important result, let us consider the following identity:
			\begin{equation}
			4\nabla_\lambda\sqrt{\hat{g}}=\nabla_\lambda\big(\sqrt{\hat{g}}\hat{g}^{\kappa\tau}\hat{g}_{\kappa\tau}\big)=\sqrt{\hat{g}}\hat{g}^{\kappa\tau}\nabla_\lambda\hat{g}_{\kappa\tau}+\hat{g}_{\kappa\tau}\nabla_\lambda\big(\sqrt{\hat{g}}\hat{g}^{\kappa\tau}\big)=0
			\end{equation}
			by the virtue of \ref{ident} and \ref{ident2}. It means that:
			\begin{equation}
			\nabla_\lambda\Big[\sqrt{\hat{g}}\hat{g}^{\mu\sigma}\Big]=\sqrt{\hat{g}}\nabla_\lambda\hat{g}^{\mu\sigma}=0\rightarrow\nabla_\lambda\hat{g}^{\mu\sigma}=0\quad\text{and}\quad\nabla_\lambda\hat{g}_{\mu\sigma}=0
			\end{equation}
			If we express the last covariant derivative explicitly and manipulate the indices, we can write:
				$$0=\nabla_\lambda \hat{g}_{\mu\sigma}=\partial_\lambda \hat{g}_{\mu\sigma}-\Gamma^\rho_{\:\lambda\mu}\hat{g}_{\rho\sigma}-\Gamma^\rho_{\:\lambda\sigma}\hat{g}_{\mu\rho}$$
				
				$$0\stackrel{\mu\leftrightarrow\lambda}{=}\nabla_\mu \hat{g}_{\lambda\sigma}=\partial_\mu \hat{g}_{\lambda\sigma}-\Gamma^\rho_{\:\lambda\mu}\hat{g}_{\rho\sigma}-\Gamma^\rho_{\:\mu\sigma} \hat{g}_{\lambda\rho}$$
				
				$$0\stackrel{\sigma\leftrightarrow\lambda}{=}\nabla_\sigma \hat{g}_{\mu\lambda}=\partial_\sigma \hat{g}_{\mu\lambda}-\Gamma^\rho_{\:\sigma\mu}\hat{g}_{\rho\lambda}-\Gamma^\rho_{\:\lambda\sigma} \hat{g}_{\mu\rho}$$
			Adding the third equation to the second and subtracting the first one, we get the desired result:
			\begin{equation}
			2\Gamma^\rho_{\:\mu\sigma} \hat{g}_{\lambda\rho}=\partial_\mu \hat{g}_{\lambda\sigma}+\partial_\sigma \hat{g}_{\mu\lambda}-\partial_\lambda \hat{g}_{\mu\sigma}
			\end{equation}
			or, written in another way:
			\begin{equation}
			\Gamma^\rho_{\:\mu\sigma}=\frac{1}{2}\hat{g}^{\rho\lambda}\Big(\partial_\mu \hat{g}_{\lambda\sigma}+\partial_\sigma \hat{g}_{\mu\lambda}-\partial_\lambda \hat{g}_{\mu\sigma}\Big)
			\end{equation}
			which means that the connection is Levi-Civita with respect to the new metric and, hence, also geometric quantities describing curvature of spacetime, such as Riemann tensor, become functions of this metric.\\
				 The last equation of motion results from varying the action with respect to the scalar field:
			
			\begin{equation}
			\begin{split}
			\mathcal{A}'(\Phi)\hat{R}&+\mathcal{B}(\Phi)\Box\Phi+\frac{1}{\sqrt{-g}}\mathcal{B}(\Phi)\nabla_\mu\Phi\nabla_\nu\Big(\sqrt{-g}g^{\mu\nu}\Big)+\frac{1}{\sqrt{-g}}\mathcal{C}(\Phi)\nabla_{\nu}\Big(\sqrt{-g}A^\nu\Big)-\mathcal{V}'(\Phi)\\
			&=2\kappa^2\alpha'(\Phi)T
			\end{split} \label{eqp3}
			\end{equation}
			
		\section{Transformation relations}
			Following the logic of the previous chapter, we want now to apply Weyl transformation to metric tensor, given by the formula \ref{T1}, as well as to reparametrize the scalar field according to \ref{T3} and change the affine connection using \ref{T2}. If we do so, form of the action \ref{actt} should be preserved; what will substantially change are the five functions of the scalar field: $\{\mathcal{A}(\Phi),\mathcal{B}(\Phi),\mathcal{C}(\Phi),\mathcal{V}(\Phi),\alpha(\Phi)\}$. After we switch the conformal frame, we will end up with a set of new functions $\{\mathcal{\bar{A}}(\bar{\Phi}),\mathcal{\bar{B}}(\bar{\Phi}),\mathcal{\bar{C}}(\bar{\Phi}),\mathcal{\bar{V}}(\bar{\Phi}),\bar{\alpha}(\bar{\Phi})\}$ depending on a new scalar field $\bar{\Phi}$; moreover, the functional form of these function will not remain unchanged in general. \\
			To put it differently, we are looking now for transformations of the five functions induced by the transformations \ref{T1},\ref{T2},\ref{T3} leaving the action functional invariant (up to boundary terms):
			
			\begin{equation}
			\begin{split}
			S[g,\Gamma,\Phi]=S[\bar{g},\bar{\Gamma},\bar{\Phi}]+\text{(boundary terms)}
			\end{split}
			\end{equation}
			
			Using the postulated form of the action functional, we can write explicitly:
			\begin{equation}
			\begin{split}
			&\frac{1}{2\kappa^2}\int_{\Omega}d^4x\sqrt{-g}\Big[\mathcal{A}(\Phi)\hat{R}-\mathcal{B}(\Phi)g^{\mu\nu}\nabla_\mu\Phi\nabla_\nu\Phi-\mathcal{C}(\Phi)A^\mu\nabla_\mu\Phi-\mathcal{V}(\Phi)\Big]+ S_{\text{matter}}[e^{2\alpha(\Phi)}g,\chi] =\\
			& =\frac{1}{2\kappa^2}\int_{\Omega}d^4x\sqrt{-\bar{g}}\Big[\mathcal{\bar{A}}(\bar{\Phi})\hat{\bar{R}}-\mathcal{\bar{B}}(\bar{\Phi})\bar{g}^{\mu\nu}\bar{\nabla}_\mu\bar{\Phi}\bar{\nabla}_\nu\bar{\Phi}-\mathcal{\bar{C}}(\bar{\Phi})\bar{A}^\mu\bar{\nabla}_\mu\bar{\Phi}-\mathcal{\bar{V}}(\bar{\Phi})\Big]+ S_{\text{matter}}[e^{2\bar{\alpha}(\bar{\Phi})}\bar{g},\chi] \\
			&+\text{(boundary terms)}
			\end{split}
			\end{equation}
			
			The formula above holds if the 'old' functions are related to the 'new' ones by the followings transformation equations:
			\begin{itemize}
				\item $\mathcal{\bar{A}}(\bar{\Phi})=e^{2\bar{\gamma}_1(\bar{\Phi})}\mathcal{A}(\bar{f}(\bar{\Phi}))$
				\item $\mathcal{\bar{B}}(\bar{\Phi})=e^{2\bar{\gamma}_1(\bar{\Phi})}\Big[-12\mathcal{A}(\bar{f}(\bar{\Phi}))\bar{\gamma}'_1(\bar{\Phi})\bar{\gamma}'_2(\bar{\Phi})+6\mathcal{A}(\bar{f}(\bar{\Phi}))(\bar{\gamma}'_2(\bar{\Phi}))^2-6\mathcal{A}'(\bar{f}(\bar{\Phi}))\bar{\gamma}'_2(\bar{\Phi})+\mathcal{B}(\bar{f}(\bar{\Phi}))(\bar{f}'(\bar{\Phi}))^2+6\mathcal{C}(\bar{f}(\bar{\Phi}))\bar{f}'(\bar{\Phi})(\bar{\gamma}'_1(\bar{\Phi})-\bar{\gamma}'_2(\bar{\Phi}))\Big]$
				\item $\mathcal{\bar{C}}(\bar{\Phi})=e^{2\bar{\gamma}_1(\bar{\Phi})}\mathcal{C}(\bar{f}(\bar{\Phi}))\bar{f}'(\bar{\Phi})-2e^{2\bar{\gamma}_1(\bar{\Phi})}\bar{\gamma}'_2(\bar{\Phi})\mathcal{A}(\bar{f}(\bar{\Phi}))$
				\item $\mathcal{\bar{V}}(\bar{\Phi})=e^{4\bar{\gamma}_1(\bar{\Phi})}\mathcal{V}(\bar{f}(\bar{\Phi}))$
				\item $\bar{\alpha}(\bar{\Phi})=\alpha(\bar{f}(\bar{\Phi}))+\bar{\gamma}_1(\bar{\Phi})$
			\end{itemize}
			
			This was calculated using the fact that the vector $A^\mu(g,\Gamma)$, as long as $\bar{\gamma}_1\neq\bar{\gamma}_2$, transforms in a nontrivial way:
			\begin{equation}
			A^{\mu}(g,\Gamma)\rightarrow e^{-2\bar{\gamma}_1(\bar{\Phi})}\bar{A}^{\mu}(\bar{g},\bar{\Gamma})+6e^{-2\bar{\gamma}_1(\bar{\Phi})}\bar{f}'(\bar{\Phi})\bar{g}^{\mu\nu}(\bar{\gamma}'_1(\bar{\Phi})-\bar{\gamma}'_2(\bar{\Phi}))\bar{\nabla}_\nu\bar{\Phi}
			\end{equation}
			where $\bar{A}^{\mu}(\bar{g},\bar{\Gamma})=\bar{g}^{\mu\nu}\bar{g}^{\alpha\beta}(\bar{Q}_{\nu\alpha\beta}-\bar{Q}_{\beta\alpha\nu})$
			\\
			
			Following the reasoning carried out in \cite{kuinv}, analyzing the structure of the relations shown above we can deduce certain properties of the coefficients. First, if $\mathcal{A}(\Phi)$ is positive in any conformal frame (this coefficient cannot be equal to zero and, in order to make gravity an attractive force, it must be positive), then it is greater than zero in any frame related by means of a conformal transformation. The same property holds for the potential term $\mathcal{V}(\Phi)$; however, if it vanishes in one frame, then it is equal to zero in any other frame. 
			
			 By a proper choice of three function $\{\gamma_1,\gamma_2,f\}$ we are able to fix three of the five arbitrary functions $\{\mathcal{A},\mathcal{B},\mathcal{C},\mathcal{V},\alpha\}$; we shall call this 'fixing a conformal frame'. After we have fixed three functions, we still have freedom to specify the remaining two functions. By doing so, we choose a specific theory. For example, the three functions $\{\gamma_1,\gamma_2,f\}$ can be chosen in such a way that three coefficients $\{\mathcal{B},\mathcal{C},\alpha\}$ vanish, thus simplifying the calculations. Results obtained in a given frame can be always 'translated' to another frame if the two frames can be related by a conformal transformation accompanied by a reparametrization of the scalar field. 
			
				\subsection{Einstein and Jordan frames}
				In literature, two of all conformal frames are particularly widespread: Einstein and Jordan frames. Action in these two frames can be related by a conformal transformation with the general action \ref{actt}, so that they retain its properties \cite{kuinv}.  So far, however, all scalar-tensor theories of gravity have been analysed either in purely metric approach, or in Palatini approach but they were emerging from a different class of Extended Theories of Gravity, namely, from $f(R)$ theories. This resulted in omitting the coefficient $\mathcal{C}(\Phi)$ in all considerations, which now had to be added for self-consistence of the general theory. As a result, coefficients specifying a particular frame did not include this additional function of the scalar field. Since the role played by the coefficient $\mathcal{C}(\Phi)$ has not been yet understood, it is left arbitrary when we speak of 'fixing the frame'; it can be, however, set equal to zero if a proper choice of the function $\gamma_2$ is made.\\
				
				A distinction between these two frames in terms of measurements made in each of them was neatly explained by E. E. Flanagan (E. E. Flanagan 2004). His example of how choice of definition of the units we use to measure distances and time affects choice of a particular conformal frame gives us a taste of physical meaning of the discernment between both frames, and the example is worth quoting it here: 
			
				\vspace{2pt}
			
				\noindent\textit{Suppose we define units of length and time by taking the speed of light to be unity and by taking the unit of time to be determined by some atomic transition frequency (as in the current SI definition of the second). Measurements of the geometry of spacetime in these units yield the Jordan-frame metric. However, we can instead define a system of units as follows. Suppose that we have a nonspinning black hole. We can in principle take this to be a “standard” black hole (like the original platinum-iridium standard meter), and create other nonspinning black holes of the same size. Using these black holes we can operationally define a unit of time to be the inverse of the frequency of their fundamental quasinormal mode of vibration. If we define the
				speed of light to be unity, and measure the geometry of spacetime in these units, the result is the Einstein-frame	metric.}\footnote{E. E. Flanagan, \textit{The conformal frame freedom in theories of gravitation}, Class.Quant.Grav.21:3817 (2004), page 4.}
				
				\vspace{2pt}
				
				\noindent In the \textbf{Einstein frame}, the gravitational part of the action functional contains only Einstein gravity \cite{val}, but scalar field is present in the matter part of the Lagrangian, displaying an anomalous coupling. The scalar field, in other words, becomes a form of matter and is always present. Canonically, one fixes $\mathcal{A}=1$ and $\mathcal{B}=2$, while keeping the remaining three coefficients arbitrary functions of the new scalar field $\phi$. Without loss of generality, we may also consider the case when $\mathcal{C}=0$ as belonging to the Einstein frame. If this is the case, then, as it has been already shown, another metric, building the geometrical objects such as curvature scalar and being conformally related to the one used to measure distances can be introduced. Particles follow now geodesics determined by the 'new' metric, so it seems that there exists an additional 'fifth force' acting on particles, causing them to deviate from the trajectories determined by the 'old' metric. As a result, in the Einstein frame, the Principle of Equivalence can be violated \cite{val}. \\
				
				\noindent In the \textbf{Jordan frame}, the gravitational field is described by metric, connection and scalar field. Usually one assumes $\mathcal{A}=\Psi$ and $\alpha=0$, with the three functions $\{\mathcal{B},\mathcal{C},\mathcal{V}\}$ left arbitrary functions of $\Psi$, which is now assumed to be a new scalar field. In the Jordan frame, the scalar field is nonminimally coupled to the curvature but, unlike in the case of Einstein frame, it is absent from the matter part of the action. If the coefficient $\mathcal{C}=0$, then the metric used for measurement is also used to build geometrical objects. Freely falling particles move along geodesic of the corresponding geometry \cite{kuinv}. Also, the Principle of Equivalence is not violated in the Jordan frame.\\
				
				\noindent It is probably an uncontroversial and widely accepted statement that the Einstein and Jordan frames are mathematically equivalent, but they are very different under a physical point of view. This discernment raises an important question of which conformal frame is the physical one, meaning that it is self-consistent and it is possible to predict values of certain observables (which can be measured) working in such frame. According to \cite{val} and other authors, the Jordan frame is unphysical \textit{because it leads to negative definite, or indefinite kinetic energy for the scalar field; on the contrary, the energy density is positive definite in the Einstein frame}\footnote{V. Faraoni, E. Gunzig, P. Nardone, \textit{Conformal transformations in classical gravitational theories and in cosmology}, Fund.Cosmic Phys.20:121 (1999), page 15.}. As we can read in another paper by Faraoni (Faraoni, Gunzig 1999): 
				
				\vspace{2pt}
				
				\noindent\textit{The Jordan frame formulation of a scalar–tensor theory is not viable because the energy density of the gravitational scalar field present in the theory is not bounded from below (violation of the weak energy condition). The system therefore is	unstable and decays toward a lower and lower energy state ad infinitum} \footnote{V. Faraoni, E. Gunzig, \textit{Einstein frame or Jordan frame ?},Int. J. Theor. Phys. 38, 217, page 3.}
				
				\vspace{2pt}
				
				\noindent Cho (Cho 2003) puts it even more dramatically: 
				
				\vspace{2pt}
				
				\noindent\textit{When the quantum correction takes place ordinary matter must couple to the Brans–Dicke scalar field through the Jordan metric. So the quantum fluctuation (in particular, the mixing between the Jordan metric and the Brans–Dicke scalar field) inevitably induces a direct coupling of the Brans–Dicke scalar field to ordinary matter. This direct coupling, however, is precisely what Brans and Dicke have tried to avoid to ensure the weak equivalence principle} \footnote{Y. M. Cho, \textit{Quantum violation of the equivalence principle in Brans - Dicke theory}, Class. Quantum Grav. 14 2963 (1997), page 3.}
				
				\vspace{2pt}
				
				\noindent On the other hand, the Einstein frame is free of this problem, but exhibits - as it was said before - violation of the Principle of Equivalence. However, there are two possible objections to the arguments given above: first, as it is pointed out by Faraoni, the Einstein frame is physical for scalar-tensor theories of gravity only when the matter part is not considered; second, the discussion concerns only the purely metric approach, not metric-affine. What remains a fact is that the two frames, Einstein and Jordan, despite their mathematical equivalence, give different observational predictions.
				
				\noindent To make the discussion of the equivalence between Einstein and Jordan frames complete, we should also include arguments supporting the converse assertion: that in fact different frames are not only mathematically, but also physically equivalent. This position is taken by Flanagan, who clearly states that:
				
				\vspace{2pt}
				
				\noindent\textit{efforts to determine the “correct” choice of conformal frame are misguided, at least in the realm of classical physics. They are analogous to attempting to determine the “correct” choice of radial coordinate in the Schwarzschild spacetime. In that context, there is of course no correct radial coordinate, since all physical observables are coordinate invariants. In a similar way, all observable quantities in scalar-tensor theories are conformal-frame invariants} \footnote{E. E. Flanagan, \textit{The conformal frame freedom in theories of gravitation}, Class.Quant.Grav.21:3817 (2004), page 4.}
				
				\vspace{2pt}
				
				\noindent Indeed, an idea that all physical observables should be expressed in a frame-invariant way sounds very appealing, and this is what was originally intended by P. Kuusk and L. Jarv (P. Kuusk, L. Jarv \textit{et al.} 2014). In \cite{kuinv} it is shown that observables like post-Newtonian parameters are expressed in terms of invariant quantities (see section 'Invariants' of this paper), which do not depend on a choice of conformal frame.

		\section{Group structure of the coefficients $\{\mathcal{A},\mathcal{B},\mathcal{C},\mathcal{V},\alpha\}$}
			
			Having obtained the transformation formulae given above, we must check whether the coefficients transform in a correct way when we make a composition of two conformal transformations. If in some conformal frame - let us call it $\mathfrak{A}$ - we make use of the following variables: $\{g_{\mu\nu},\Gamma^{\alpha}_{\mu\nu},\Phi\}$, we are able to change the frame using three arbitrary functions $\{\bar{\gamma}_1,\bar{\gamma}_2,\bar{f}\}$, obtaining new independent variables in a frame $\mathfrak{B}$ related to the 'old' ones via:
			\begin{enumerate}
				\item $g_{\mu\nu}=e^{2\bar{\gamma}_1(\bar{\Phi})}\bar{g}_{\mu\nu}$
				\item $\Gamma^\alpha_{\mu\nu}=\bar{\Gamma}^\alpha_{\mu\nu}+\delta^\alpha_\mu\bar{\nabla}_\nu\:\bar{\gamma}_2(\bar{\Phi})+\delta^\alpha_\nu\bar{\nabla}_\mu\:\bar{\gamma}_2(\bar{\Phi})-\bar{g}_{\mu\nu}\bar{g}^{\alpha\beta}\bar{\nabla}_\beta\bar{\gamma}_2(\bar{\Phi})$
				\item $\Phi=\bar{f}(\bar{\Phi})$
			\end{enumerate}
			Using new set of functions $\{\bar{\bar{\gamma}}_1,\bar{\bar{\gamma}}_2,\bar{\bar{f}}\}$ we can perform the transformation once again and arrive at a frame $\mathfrak{C}$ with variables $\{\bar{\bar{g}}_{\mu\nu},\bar{\bar{\Gamma}}^{\alpha}_{\mu\nu},\bar{\bar{\Phi}}\}$. The question now is: if the frame $\mathfrak{C}$ is to be related to the initial frame $\mathfrak{A}$ by a single transformation making use of three functions $\{\gamma_1,\gamma_2,f\}$, then what is the correspondence between them and the functions $\{\bar{\gamma}_1,\bar{\gamma}_2,\bar{f}\}$ and $\{\bar{\bar{\gamma}}_1,\bar{\bar{\gamma}}_2,\bar{\bar{f}}\}$? If we investigate the way independent variables transform when we change the conformal frame, it will turn out that these functions should be related to each other in the following way:
			\begin{enumerate}
				\item $\Phi(\bar{\bar{\Phi}})=(\bar{f}\circ \bar{\bar{f}})(\bar{\bar{\Phi}})=g(\bar{\bar{\Phi}})$
				\item $\gamma_1(\bar{\bar{\Phi}})=\bar{\bar{\gamma}}_1(\bar{\bar{\Phi}})+\bar{\gamma}_1(\bar{\bar{f}}(\bar{\bar{\Phi}}))$
				\item $\gamma_2(\bar{\bar{\Phi}})=\bar{\bar{\gamma}}_2(\bar{\bar{\Phi}})+\bar{\gamma}_2(\bar{\bar{f}}(\bar{\bar{\Phi}}))$
			\end{enumerate}
			The coefficients $\{\mathcal{A},\mathcal{B},\mathcal{C},\mathcal{V},\alpha\}$ should transform accordingly, preserving the structure of formulae relating two conformal frames. Making use of the symbols defined above (the following has to be stressed here: symbols relating frames $\mathfrak{A}$ to $\mathfrak{C}$ are valid only in this part of the work, elsewhere they might have a different meaning), after some extremely tedious calculations, we can write:
			\begin{itemize}
				\item $\mathcal{\bar{\bar{A}}}(\bar{\bar{\Phi}})=e^{2\gamma_1(\bar{\bar{\Phi}})}\mathcal{A}(g(\bar{\bar{\Phi}}))$
				\item $\mathcal{\bar{\bar{B}}}(\bar{\bar{\Phi}})=e^{2\gamma_1(\bar{\bar{\Phi}})}\Big[-12\mathcal{A}(g(\bar{\bar{\Phi}}))\gamma'_1(\bar{\bar{\Phi}})\gamma'_2(\bar{\bar{\Phi}})+6\mathcal{A}(g(\bar{\bar{\Phi}}))(\gamma'_2(\bar{\bar{\Phi}}))^2-6\frac{d\mathcal{A}(g(\bar{\bar{\Phi}}))}{d\bar{\bar{\Phi}}}\gamma'_2(\bar{\bar{\Phi}})+\mathcal{B}(g(\bar{\bar{\Phi}}))\Big(\frac{dg(\bar{\bar{\Phi}})}{d\bar{\bar{\Phi}}}\Big)^2+6\mathcal{C}(g(\bar{\bar{\Phi}}))\frac{dg(\bar{\bar{\Phi}})}{d\bar{\bar{\Phi}}}\Big(\gamma'_1(\bar{\bar{\Phi}})-\gamma'_2(\bar{\bar{\Phi}})\Big)\Big]$
				\item $\mathcal{\bar{\bar{C}}}(\bar{\bar{\Phi}})=e^{2\gamma_1(\bar{\bar{\Phi}})}\mathcal{C}(g(\bar{\bar{\Phi}}))\frac{dg(\bar{\bar{\Phi}})}{d\bar{\bar{\Phi}}}-2e^{2\gamma_1(\bar{\bar{\Phi}})}\gamma'_2(\bar{\bar{\Phi}})\mathcal{A}(g(\bar{\bar{\Phi}}))$
				\item $\mathcal{\bar{\bar{V}}}(\bar{\bar{\Phi}})=e^{4\gamma_1(\bar{\bar{\Phi}})}\mathcal{V}(g(\bar{\bar{\Phi}}))$
				\item $\bar{\bar{\alpha}}(\bar{\bar{\Phi}})=\alpha(g(\bar{\bar{\Phi}}))+\gamma_1(\bar{\bar{\Phi}})$
			\end{itemize}
		\section{Invariants}
			
			Due to the way coefficients $\{\mathcal{A},\mathcal{B},\mathcal{C},\mathcal{V},\alpha\}$ transform, it is possible to construct - analogously to the procedure carried out in \cite{kuinv} - several quantities which remain invariant under a transformation of metric and connection, together with a reparametrization of scalar field. Such quantities are invariant in a sense that they preserve their form and are expressed by the same formula in every conformal frame. Also, their value at a given spacetime point stays the same as we move from one frame to another. Furthermore, since the conformal transformation is in principle independent of coordinate transformation, spacetime derivatives of invariant quantities are invariants themselves. 
			
			\newpage
			
			The invariants relevant to the following parts of this work are listed below:
		
			\begin{equation}
				\mathcal{I}_1(\Phi)=\frac{\mathcal{A}(\Phi)}{e^{2\alpha(\Phi)}} \label{i1}
				\end{equation}
			\begin{equation}
			\mathcal{I}_2(\Phi)=\frac{\mathcal{V}(\Phi)}{\mathcal{A}^2(\Phi)} \label{i2}
			\end{equation}
			\begin{equation}
			\mathcal{I}_3(\Phi)=\int^\Phi_{\Phi'_0}\sqrt{\frac{\frac{3}{2}\mathcal{C}^2(\Phi')-\mathcal{A}(\Phi')\mathcal{B}(\Phi')+3\mathcal{A}'(\Phi')\mathcal{C}(\Phi')}{\mathcal{A}^2(\Phi')}}\:d\Phi'\equiv\int^\Phi_{\Phi'_0}\mathcal{F}(\Phi')\:d\Phi' \label{i3}
			\end{equation}
			
			Let us now discuss the meaning of these invariants. Invariant $\mathcal{I}_1$ measures the coupling between scalar field and matter. It is easy to see that if this invariant is constant, then the field is minimally coupled \cite{kuinv}. Invariant $\mathcal{I}_2$ generalizes notion of the self-interacting potential $\mathcal{V}$. If it is equal to zero, then it must be vanishing in all frames related by a conformal transformation. Invariant $\mathcal{I}_3$ resembles a function measuring some kind of invariant distance in the space of scalar field; accordingly, the function $\mathcal{F}^2$ must play a role of a metric on this space, and clearly, since the space is one-dimensional, $\mathcal{F}$ is its determinant, transforming like a scalar density: $\bar{\mathcal{F}}(\bar{\Phi})=\mathcal{F}(\Phi)\frac{d\Phi}{d\bar{\Phi}}$. As we can see, constant values of $\mathcal{I}_3$ are possible only when $\mathcal{C}=0$ and $\mathcal{B}=0$, which means that the scalar field is not dynamical \cite{jarv}. 
			
			We can of course introduce further invariants by making a simple observation that an arbitrary function of invariant(s) is also invariant. From the transformation properties of derivatives of invariants with respect to scalar field:
			$$\bar{\mathcal{I}}'_i(\bar{\Phi})=\frac{d\bar{\mathcal{I}}'_i}{d\bar{\Phi}}=\frac{d\mathcal{I}_i}{d\Phi}\frac{d\Phi}{d\bar{\Phi}}=\mathcal{I}'_i(\Phi)\frac{d\Phi}{d\bar{\Phi}}$$
			it follows that a quotient $\frac{\mathcal{I}'_i}{\mathcal{I}'_j}\equiv\frac{d\mathcal{I}_i}{d\mathcal{I}_j}$ is invariant too.
			
			It is important to assume that it is possible to express scalar field as a function of any of the invariants. In the Einstein frame, it will be useful to express the scalar field in terms of $\mathcal{I}_3$, whereas in the Jordan frame, where usually the curvature scalar is multiplied by a scalar field, expressing $\Phi$ in terms of $\mathcal{I}_1$ is preferable. Needless to say, finding an inverse of any of the relations defining invariants may be very problematic, and we have to assume that we can express $\Phi(\mathcal{I}_i)$ as a Taylor expansion:
			$$\Phi(\mathcal{I}_i)=\sum_{n=0} \frac{1}{n!}\frac{d^n\Phi}{d\mathcal{I}^n_i}\Big|_{\mathcal{I}_i=\mathcal{I}_i|_{\Phi_0}}(\mathcal{I}_i-\mathcal{I}_i|_{\Phi_0})^n$$
			For example, in case of invariant $\mathcal{I}_3$, we have:
			$$\frac{d\mathcal{I}_3}{d\Phi}=\mathcal{F} \rightarrow \frac{d}{d\mathcal{I}_3}=\frac{1}{\mathcal{F}}\frac{d}{d\Phi}$$
			Since $\mathcal{I}_3$ contains an arbitrary constant, it can be chosen in such a way that $\mathcal{I}_3|_{\Phi_0}=0$, so that the expression for $\Phi$ reads as follows \cite{kuinv}:
			$$\Phi(\mathcal{I}_3)=\Phi_0+\frac{1}{\mathcal{F}}\mathcal{I}_3+\frac{1}{2}\frac{1}{\mathcal{F}}\frac{d}{d\Phi}\Big(\frac{1}{\mathcal{F}}\Big)\mathcal{I}^2_3+\ldots$$
			
			An intuitive meaning of the invariants is that they label theories which are mathematically equivalent. Indeed, if we switch the conformal frame, values of the invariants will stay constant, meaning that we can use them to label corresponding theories, remaining on the same orbit defined by the formulae \ref{T1}, \ref{T2}, \ref{T3}. If we take values of the invariants evaluated in two different conformal frames (or even for two different fixed theories) and it turns out they are different, then such frames cannot be related by a confromal change accompanied by a redefinition of the scalar field.

			Moreover, it is also possible to construct an invariant metric and an invariant connection. In case of the metric there is no unique way of doing so, but in this paper only two possibilities will be considered:
			\begin{equation}
			\hat{g}_{\mu\nu}=\mathcal{A}(\Phi)g_{\mu\nu} \label{g1}
			\end{equation}
			or
			\begin{equation}
			\tilde{g}_{\mu\nu}=e^{2\alpha(\Phi)}g_{\mu\nu} \label{g2}
			\end{equation}
			As for the affine connection, a single recipe for making it invariant has been found:
			\begin{equation}
			\hat{\Gamma}^\alpha_{\mu\nu}=\Gamma^\alpha_{\mu\nu}-\frac{\mathcal{C}(\Phi)}{2\mathcal{A}(\Phi)\mathcal{F}(\Phi)}\Big(\delta^\alpha_\mu\nabla_\nu\mathcal{I}_3(\Phi)+\delta^\alpha_\nu\nabla_\mu\mathcal{I}_3(\Phi)-g_{\mu\nu}g^{\alpha\beta}\nabla_\beta\mathcal{I}_3(\Phi)\Big) \label{con}
			\end{equation}
			
			Having introduced either of the invariant metrics, we can now measure distances in a frame-independent way. It means that any two observers belonging to two different conformal frames will end up with the same values of a given observable. For example, the distance between our planet and the Sun will be exactly the same if we use one of the invariant metrics to measure it. However, if observers of two different frames insist on using a normalized definition of a unit of length (for example, determined with respect to the speed of light set equal to one and atomic clocks), the outcomes will be different \cite{flanagan}. The same holds for a definition of parallel transport. Since we have obtained an invariant connection, determining geodesics in a unique way will become possible (again, under the assumption that all observers use the invariant connection).
			
				\subsection{Action in terms of the invariant metric $\hat{g}_{\mu\nu}$}
					\label{subsec:inv1}
					Having introduced the invariants, we may now attempt to write down the action functional \ref{actt} fully in terms of them. This approach will give us an obvious advantage, since no matter which frame we are working in, all equations will be of the same form; the action functional will be unaffected by change of the conformal frame and hence, the resulting equations will be written in terms of the invariants which are expressed by the same relations between the coefficients $\{\mathcal{A}(\Phi),\mathcal{B}(\Phi),\mathcal{C}(\Phi),\mathcal{V}(\Phi),\alpha(\Phi)\}$. \\
					If we substitute the metric $\hat{g}_{\mu\nu}$ and the connection $\hat{\Gamma}^\alpha_{\mu\nu}$ into the action \ref{actt}, and consider the scalar field $\Phi$ a function of the invariant $\mathcal{I}_3$ inverting the relation \ref{i3}, we get:
					\begin{equation}
					S[\hat{g},\hat{\Gamma},\mathcal{I}_3]=\frac{1}{2\kappa^2}\int_{\Omega}d^4x\sqrt{-\hat{g}}\big[\hat{R}(\hat{g},\hat{\Gamma})-\hat{g}^{\mu\nu}\hat{\nabla}_\mu\mathcal{I}_3\hat{\nabla}_\nu\mathcal{I}_3-\mathcal{I}_2\big]+S_\text{matter}\Big(\frac{1}{\mathcal{I}_1}\hat{g},\chi\Big) \label{invact1}
					\end{equation}
					As we can see, this action depends now on three new dynamical variables. Also, the action functional is now cast in an Einstein-like conformal frame. The scalar field is fully decoupled from the curvature, but it enters the matter part of the action, meaning that it still permeates the spacetime and acts as an additional source of gravitational interaction. To see this more clearly, let us perform variation with respect to the variables $\{\hat{g}_{\mu\nu},\hat{\Gamma}^\alpha_{\mu\nu},\mathcal{I}_3\}$:\\
					\begin{enumerate}
						\item $\delta\hat{g}$: $\hat{G}_{\mu\nu}-\hat{\nabla}_\alpha\mathcal{I}_3\hat{\nabla}_\beta\mathcal{I}_3\Big(\delta^\alpha_\mu\delta^\beta_\nu-\frac{1}{2}\hat{g}^{\alpha\beta}\hat{g}_{\mu\nu}\Big)+\frac{1}{2}\hat{g}_{\mu\nu}\mathcal{I}_2=\kappa^2\hat{T}_{\mu\nu}$
						\item $\delta\hat{\Gamma}$: $\hat{\nabla}_\lambda\big(\sqrt{-\hat{g}}\:\hat{g}^{\mu\nu}\big)=0$
						\item $\delta\mathcal{I}_3$: $2\hat{\Box}\mathcal{I}_3-\frac{d\mathcal{I}_2}{d\mathcal{I}_3}=-\kappa^2\frac{1}{\mathcal{I}_1}\frac{d\mathcal{I}_1}{d\mathcal{I}_3}\hat{T}$
					\end{enumerate}
					
					Let us now carefully analyse the obtained equations. If we consider the second equation, we immediately recognize the discussed relation between connection and metric tensor: if a connection is symmetric and the covariant derivative of the metric multiplied by its determinant vanishes, then the connection is necessarily Levi-Civita with respect to the metric. This shows an amazing result: after writing the action functionals in terms of invariants, initially independent invariant connection becomes Levi-Civita with respect to the invariant metric $\hat{g}_{\mu\nu}$. Consequently, the curvature scalar also depends on the metric. Apart from the presence of scalar field in the matter part of the action functional, this suggests that the Einstein-like frame is supposedly the simplest.
					
					Let us switch our attention to the first equation. We want to see whether the conservation of energy-momentum tensor is satisfied. In order to so, we need to calculate covariant derivative of the whole formula and contract one of the (upper) indices with the index of the covariant derivative:
					\begin{equation}
					\hat{\nabla}_\mu\hat{G}^{\mu\nu}-\hat{\nabla}_\mu\big(\hat{\nabla}_\alpha\mathcal{I}_3\hat{\nabla}_\beta\mathcal{I}_3\big)\Big(\hat{g}^{\alpha\mu}\hat{g}^{\beta\nu}-\frac{1}{2}\hat{g}^{\alpha\beta}\hat{g}^{\mu\nu}\Big)+\frac{1}{2}\frac{d\mathcal{I}_2}{d\mathcal{I}_3}\hat{g}^{\mu\nu}\hat{\nabla}_{\mu}\mathcal{I}_3=\kappa^2\hat{\nabla}_\mu\hat{T}^{\mu\nu} \label{cons1}
					\end{equation}
					Divergence of the Einstein tensor vanishes by the virtue of a well-known theorem. Also, because of the fact that connection is Levi-Civita, covariant derivative of the metric tensor equals zero. Furthermore, the second term can be greatly simplified:
					\begin{equation}
					\begin{split}
					& \hat{\nabla}_\mu\big(\hat{\nabla}_\alpha\mathcal{I}_3\hat{\nabla}_\beta\mathcal{I}_3\big)\Big(\hat{g}^{\alpha\mu}\hat{g}^{\beta\nu}-\frac{1}{2}\hat{g}^{\alpha\beta}\hat{g}^{\mu\nu}\Big)=\hat{g}^{\mu\nu}\hat{\nabla}_\mu\mathcal{I}_3\hat{\Box}\mathcal{I}_3+\hat{g}^{\alpha\mu}\hat{\nabla}_\alpha\mathcal{I}_3\:\hat{g}^{\beta\nu}\hat{\nabla}_\mu\hat{\nabla}_\beta\mathcal{I}_3+ \\
					& -\frac{1}{2}\hat{g}^{\alpha\beta}\hat{\nabla}_\beta\mathcal{I}_3\:\hat{g}^{\mu\nu}\hat{\nabla}_\mu\hat{\nabla}_\alpha\mathcal{I}_3-\frac{1}{2}\hat{g}^{\alpha\beta}\hat{\nabla}_\alpha\mathcal{I}_3\:\hat{g}^{\mu\nu}\hat{\nabla}_\mu\hat{\nabla}_\beta\mathcal{I}_3=\hat{g}^{\mu\nu}\hat{\nabla}_\mu\mathcal{I}_3\:\hat{\Box}\mathcal{I}_3+\hat{g}^{\alpha\beta}\hat{\nabla}_\alpha\mathcal{I}_3\:\hat{g}^{\mu\nu}\hat{\nabla}_\beta\hat{\nabla}_\mu\mathcal{I}_3+\\
					& -\frac{1}{2}\hat{g}^{\alpha\beta}\hat{\nabla}_\alpha\mathcal{I}_3\:\hat{g}^{\mu\nu}\hat{\nabla}_\mu\hat{\nabla}_\beta\mathcal{I}_3-\frac{1}{2}\hat{g}^{\alpha\beta}\hat{\nabla}_\alpha\mathcal{I}_3\:\hat{g}^{\mu\nu}\hat{\nabla}_\mu\hat{\nabla}_\beta\mathcal{I}_3=\hat{g}^{\mu\nu}\hat{\nabla}_\mu\mathcal{I}_3\:\hat{\Box}\mathcal{I}_3
					\end{split}
					\end{equation}
					So that \ref{cons1} boils down to:
					\begin{equation}
					-\hat{g}^{\mu\nu}\hat{\nabla}_{\mu}\mathcal{I}_3\big(\hat{\Box}\mathcal{I}_3-\frac{1}{2}\frac{d\mathcal{I}_2}{d\mathcal{I}_3}\big)=\kappa^2\hat{\nabla}_\mu\hat{T}^{\mu\nu} \label{conv2}
					\end{equation}
					Using the third equation of motion, we can express $\hat{\Box}\mathcal{I}_3$ as $$\hat{\Box}\mathcal{I}_3=-\frac{1}{2}\frac{d\mathcal{I}_2}{d\mathcal{I}_3}-\kappa^2\frac{1}{2\mathcal{I}_1}\frac{d\mathcal{I}_1}{d\mathcal{I}_3}\hat{T}$$
					Plugging this in \ref{conv2}, we get:
					\begin{equation}
					\hat{g}^{\mu\nu}\hat{\nabla}_{\mu}\mathcal{I}_3\frac{1}{2\mathcal{I}_1}\frac{d\mathcal{I}_1}{d\mathcal{I}_3}\hat{T}\equiv\frac{1}{2}\hat{\nabla}^\nu(\text{ln}\mathcal{I}_1)\:\hat{T}=\hat{\nabla}_\mu\hat{T}^{\mu\nu} \label{oldt}
					\end{equation}
					So that, clearly, the energy-momentum tensor is not conserved unless $\frac{d\mathcal{I}_1}{d\mathcal{I}_3}=0$. However, we can construct another quantity which is conserved; we simply need to add to the energy-momentum tensor $\hat{T}_{\mu\nu}$ yet another energy-momentum tensor, this time defined for the scalar field as:
					\begin{equation}
					-\kappa^2\hat{T}_\Phi^{\mu\nu}=\frac{1}{2}\frac{\partial \mathcal{L}}{\partial(\hat{\nabla}_\mu\mathcal{I}_3)}\hat{\nabla}^\nu\mathcal{I}_3+\frac{1}{2}\hat{g}^{\mu\nu}\hat{g}_{\alpha\beta}\hat{\nabla}^\alpha\mathcal{I}_3\hat{\nabla}^\beta\mathcal{I}_3+\frac{1}{2}\hat{g}^{\mu\nu}\mathcal{I}_2
					\end{equation}
					Taking the divergence of this tensor, we get:
					\begin{equation}
					\begin{split}
					-\kappa^2\hat{\nabla}_\mu\hat{T}_\Phi^{\mu\nu}&=\hat{\nabla}_\mu\Big(-\hat{\nabla}^\mu\mathcal{I}_3\hat{\nabla}^\nu\mathcal{I}_3+\frac{1}{2}\hat{g}^{\mu\nu}\hat{g}_{\alpha\beta}\hat{\nabla}^\alpha\mathcal{I}_3\hat{\nabla}^\beta\mathcal{I}_3+\frac{1}{2}\hat{g}^{\mu\nu}\mathcal{I}_2\Big)=-\hat{\Box}\mathcal{I}_3\hat{\nabla}^\nu\mathcal{I}_3+\frac{1}{2}\hat{\nabla}^\nu\mathcal{I}_2=\\
					&=-\Big(\hat{\Box}\mathcal{I}_3-\frac{1}{2}\frac{d\mathcal{I}_2}{d\mathcal{I}_3}\Big)\hat{\nabla}^\nu\mathcal{I}_3=\frac{1}{2}\kappa^2\frac{1}{\mathcal{I}_1}\frac{d\mathcal{I}_1}{d\mathcal{I}_3}\hat{T}\hat{\nabla}^\nu\mathcal{I}_3=\frac{\kappa^2}{2}\hat{\nabla}^\nu(\text{ln}\mathcal{I}_1)\:\hat{T} 
				\end{split}\label{newt}
				\end{equation}
					which was calculated using the equation of motion obtained for the scalar field. If we now form a new tensor:
					\begin{equation}
					\mathfrak{T}^{\mu\nu}=\hat{T}^{\mu\nu}+\hat{T}^{\mu\nu}_\Phi
					\end{equation}
					then, by the virtue of \ref{oldt} and \ref{newt}, we have:
					\begin{equation}
					\hat{\nabla}_{\mu}\mathfrak{T}^{\mu\nu}=0
					\end{equation}
					and the (new) energy-momentum tensor is conserved.
				\subsection{Action in terms of the invariant metric $\tilde{g}_{\mu\nu}$}
					Alternatively, we can express the action functional in terms of the invariant metric $\tilde{g}_{\mu\nu}=e^{2\alpha(\Phi)}g_{\mu\nu}$, and the invariant linear connection $\hat{\Gamma}^{\alpha}_{\mu\nu}$. We do not yet specify an invariant whose function the scalar field $\Phi$ should be. This will give us an action functional cast in a Jordan-like frame:
					\begin{equation}
					S[\tilde{g},\hat{\Gamma},\Phi]=\frac{1}{2\kappa^2}\int_{\Omega}d^4x\sqrt{-\tilde{g}}\Big[\mathcal{I}_1\hat{R}(\tilde{g},\hat{\Gamma})-\mathcal{I}_1\mathcal{F}^2\tilde{g}^{\mu\nu}\hat{\nabla}_\mu\Phi\hat{\nabla}_\nu\Phi-\mathcal{I}_2\mathcal{I}^2_1\Big]+S_\text{matter}[\tilde{g},\chi] \label{invm2}
					\end{equation}
					As we can see, this action functional does not contain the coefficient $\mathcal{C}$ as well. Also, the scalar field is not coupled to the matter fields, but there is a nonminimial coupling between the scalar field and the curvature scalar being present. These properties justify calling the frame 'Jordan-like'. In a general Jordan frame, however, the scalar curvature is coupled to the scalar field, which means that we should consider $\Phi$ to be a function of the invariant $\mathcal{I}_1$:
					$$\Phi=\Phi(\mathcal{I}_1)\quad\rightarrow\quad\hat{\nabla}_\alpha\Phi=\frac{d\Phi}{d\mathcal{I}_1}\hat{\nabla}_\alpha\mathcal{I}_1$$
					Substituting this result to the action functional, we get (we focus now only on the kinetic term):
					$$\tilde{g}^{\mu\nu}\mathcal{I}_1\mathcal{F}^2\big(\frac{d\Phi}{d\mathcal{I}_1}\big)^2\hat{\nabla}_\mu\mathcal{I}_1\hat{\nabla}_\nu\mathcal{I}_1=\tilde{g}^{\mu\nu}\mathcal{I}_1\big(\frac{d\mathcal{I}_3}{d\mathcal{I}_1}\big)^2\hat{\nabla}_\mu\mathcal{I}_1\hat{\nabla}_\nu\mathcal{I}_1$$
					since $\mathcal{F}=\frac{d\mathcal{I}_3}{d\Phi}$. The action functional takes the following form:
					\begin{equation}
					S[\tilde{g},\hat{\Gamma},\mathcal{I}_1]=\frac{1}{2\kappa^2}\int_{\Omega}d^4x\sqrt{-\tilde{g}}\Big[\mathcal{I}_1\hat{R}(\tilde{g},\hat{\Gamma})-\tilde{g}^{\mu\nu}\mathcal{I}_1\big(\frac{d\mathcal{I}_3}{d\mathcal{I}_1}\big)^2\hat{\nabla}_\mu\mathcal{I}_1\hat{\nabla}_\nu\mathcal{I}_1-\mathcal{I}_4\Big]+S_\text{matter}[\tilde{g},\chi] \label{invm3}
					\end{equation}
					
					Now, the invariant $\mathcal{I}_1$ plays a role of a dynamical scalar field - analogously to the invariant $\mathcal{I}_3$ in the previous chapter. Here, however, choosing the invariant used previously would not prove useful, since as we will see in the following parts of this paper, in some cases, for a special choice of the conformal frame, the invariant $\mathcal{I}_3$ vanishes, hence rendering it impossible to be used as a function of the scalar field $\Phi$. 
					
					For simplicity, we introduced another invariant, $\mathcal{I}_4$, defined in the following way:
					$$\mathcal{I}_4=\mathcal{I}^2_1\mathcal{I}_2$$
					denoting a modified potential.
					
					Let us now obtain equations of motion for the theory. Variation with respect to all three dynamical variables yields the following formulae:
					\begin{enumerate}
						\item $\delta\tilde{g}$: $\hat{G}_{\mu\nu}(\tilde{g},\hat{\Gamma})-\Big(\frac{d\mathcal{I}_3}{d\mathcal{I}_1}\Big)^2\hat{\nabla}_\alpha\mathcal{I}_1\hat{\nabla}_\beta\mathcal{I}_1\big(\delta^\alpha_\mu\delta^\beta_\nu-\frac{1}{2}\tilde{g}_{\mu\nu}\tilde{g}^{\alpha\beta}\big)+\frac{1}{2}\tilde{g}_{\mu\nu}\frac{\mathcal{I}_4}{\mathcal{I}_1}=\frac{\kappa^2}{\mathcal{I}_1}\tilde{T}_{\mu\nu}$
						\item $\delta\hat{\Gamma}$: $\hat{\nabla}_\alpha\big(\mathcal{I}_1\sqrt{-\tilde{g}}\tilde{g}^{\mu\nu}\big)=0$
						\item $\delta\mathcal{I}_3$: $\hat{R}(\tilde{g},\hat{\Gamma})-\tilde{g}^{\mu\nu}\Bigg[\Big(\frac{d\mathcal{I}_3}{d\mathcal{I}_1}\Big)^2+2\mathcal{I}_1\frac{d\mathcal{I}_3}{d\mathcal{I}_1}\frac{d^2\mathcal{I}_3}{d\mathcal{I}^2_1}\Bigg]+\frac{2}{\sqrt{-\tilde{g}}}\hat{\nabla}_\mu\Big(\sqrt{-\tilde{g}}\tilde{g}^{\mu\nu}\mathcal{I}_1\Big(\frac{d\mathcal{I}_3}{d\mathcal{I}_1}\Big)^2\hat{\nabla}_\nu\mathcal{I}_1\Big)-+\frac{d\mathcal{I}_4}{d\mathcal{I}_1}=0$
					\end{enumerate}
					From now on, let us denote $\frac{d\mathcal{I}_i}{d\mathcal{I}_1}$ by simply $\mathcal{I}'_i$.
					
					These equations need to be carefully analysed. The first one resembles the standard Einstein Field Equations; however, on the right hand side of the equation we have the energy-momentum tensor multiplied by an inverse of the invariant $\mathcal{I}_1$, which clearly plays the role of an effective Newton constant, depending now on the spacetime position. Second equation tells us that the invariant connection used to build the curvature scalar and defining covariant derivative is Levi-Civita with respect to a new metric, $g'_{\mu\nu}=\mathcal{I}_1\tilde{g}_{\mu\nu}$. The third equation gives us a relation between the curvature constant and the scalar field. As we can see, unlike in the case of the Einstein-like frame analysed in the previous subsection, the scalar field is not sourced now by the trace of energy-momentum tensor.
					
					The second equation tells us that the curvature scalar is in fact defined in terms of the metric $g'_{\mu\nu}$, which is conformally related to the invariant metric $\tilde{g}_{\mu\nu}$; hence, we can perform a conformal transformation and write the action in terms of quantities being fully dependent on the invariant metric. In order to achieve this, we can make use of the standard formula relating curvature scalars of two conformal frames; here, however, the transformation is defined by the function $\mathcal{I}_1$, and using the terminology introduced at the beginning of this chapter, it corresponds to the function $\bar{\gamma}_1(\Phi)=\frac{1}{2}\text{ln}\mathcal{I}_1$. Also, we need to identify $\gamma_1=\gamma_2$. It must be stressed that right now we do not change the conformal frame; we merely seek a relation between $\hat{R}(g')$ and $\tilde{R}(\tilde{g})$. Furthermore, the curvature scalar used in \ref{invm3} is a hybrid of both: we still use the tensor $\tilde{g}^{\mu\nu}$ to contract indices, but the Ricci tensor is built fully from the tensor $g'_{\mu\nu}$. That is why it will be convenient to write:
					$$\mathcal{I}_1\tilde{g}^{\mu\nu}\hat{R}_{\mu\nu}(g')=\mathcal{I}^2_1g'^{\mu\nu}\hat{R}_{\mu\nu}(g')=\mathcal{I}^2_1\hat{R}(g')$$
					which makes using the formula \ref{curvsc} possible, reading now as follows:
					\begin{equation}
					\begin{split}
					&\mathcal{I}^2_1\hat{R}(g')=\mathcal{I}^2_1\Bigg[\frac{1}{\mathcal{I}_1}\Big[\tilde{R}(\tilde{g})-6\tilde{g}^{\mu\nu}\tilde{\nabla}_\mu\tilde{\nabla}_\nu\Big(\frac{1}{2}\text{ln}\mathcal{I}_1\Big)-6\tilde{g}^{\mu\nu}\tilde{\nabla}_\mu\Big(\frac{1}{2}\text{ln}\mathcal{I}_1\Big)\tilde{\nabla}_\nu\Big(\frac{1}{2}\text{ln}\mathcal{I}_1\Big)\Big]\Bigg]=\\
					&=\mathcal{I}_1\Big[\tilde{R}(\tilde{g})-6\tilde{g}^{\mu\nu}\tilde{\nabla}_\mu\Big(\frac{1}{2\mathcal{I}_1}\tilde{\nabla}_\nu\mathcal{I}_1\Big)-\frac{3}{2\mathcal{I}^2_1}\tilde{g}^{\mu\nu}\tilde{\nabla}_\mu\mathcal{I}_1\tilde{\nabla}_\nu\mathcal{I}_1\Big]=\\
					&=\mathcal{I}_1\Big[\tilde{R}(\tilde{g})+\frac{3}{2\mathcal{I}^2_1}\tilde{\nabla}_\mu\mathcal{I}_1\tilde{\nabla}_\nu\mathcal{I}_1\Big]-3\tilde{g}^{\mu\nu}\tilde{\nabla}_\mu\tilde{\nabla}_\nu\mathcal{I}_1
					\end{split}
					\end{equation}
					Plugging this back in the action functional \ref{invm3}, we get:
					\begin{equation}
					\begin{split}
					S[\tilde{g},\mathcal{I}_1]=&\frac{1}{2\kappa^2}\int_{\Omega}d^4x\sqrt{-\tilde{g}}\Bigg[\mathcal{I}_1\tilde{R}(\tilde{g})-\tilde{g}^{\mu\nu}\Bigg(\mathcal{I}_1\Big(\frac{d\mathcal{I}_3}{d\mathcal{I}_1}\Big)^2-\frac{3}{2\mathcal{I}_1}\Bigg)\hat{\nabla}_\mu\mathcal{I}_1\hat{\nabla}_\nu\mathcal{I}_1-3\tilde{g}^{\mu\nu}\tilde{\nabla}_\mu\tilde{\nabla}_\nu\mathcal{I}_1-\mathcal{I}_4\Bigg]+\\
					&+S_\text{matter}[\tilde{g},\chi]
					\end{split}
					\end{equation}
					Clearly, after the conformal transformation of the curvature scalar the action functional depends only on two independent variables: the invariant metric and the invariant quantity $\mathcal{I}_1$. Let us focus our attention on the term containing second derivative of the invariant $\mathcal{I}_1$. This term can be viewed as a divergence of a vector density, since:
					$$-3\sqrt{-\tilde{g}}\tilde{g}^{\mu\nu}\tilde{\nabla}_\mu\tilde{\nabla}_\nu\mathcal{I}_1=-3\tilde{\nabla}_\nu\Big(\sqrt{-\tilde{g}}\tilde{g}^{\mu\nu}\tilde{\nabla}_\nu\mathcal{I}_1\Big)$$
					
					This means that the formula for the action functional reduces to:
					\begin{equation}
					\begin{split}
					S[\tilde{g},\mathcal{I}_1]=&\frac{1}{2\kappa^2}\int_{\Omega}d^4x\sqrt{-\tilde{g}}\Bigg[\mathcal{I}_1\tilde{R}(\tilde{g})-\tilde{g}^{\mu\nu}\Bigg(\mathcal{I}_1\Big(\frac{d\mathcal{I}_3}{d\mathcal{I}_1}\Big)^2-\frac{3}{2\mathcal{I}_1}\Bigg)\hat{\nabla}_\mu\mathcal{I}_1\hat{\nabla}_\nu\mathcal{I}_1-\mathcal{I}_4\Bigg]+S_\text{matter}[\tilde{g},\chi] \label{invm4}
					\end{split}
					\end{equation}
					For simplicity, let us introduce another invariant $\mathcal{I}_5$:
					$$\mathcal{I}_5=\mathcal{I}_1\Big(\frac{d\mathcal{I}_3}{d\mathcal{I}_1}\Big)^2-\frac{3}{2\mathcal{I}_1}$$
					We can now perform variation of the action and obtain field equations. Varying with respect to the metric tensor is now more complicated than in case of Palatini formulation, mostly because we encounter a problem with boundary conditions imposed on the variation of metric. Discussion of all subtleties related to this topic is beyond the scope of this paper and can be found in \cite{cast}. Without paying much attention to boundary conditions, neglecting terms containing divergence of a vector density, we may now find field equations having the formula for variation of the curvature tensor:
					\begin{equation}
					\begin{split}
					\delta_{\tilde{g}}\tilde{R}=&\delta\tilde{g}^{\mu\nu}\tilde{R}_{\mu\nu}+\tilde{\nabla}_\alpha\Big(\tilde{g}^{\mu\nu}(\delta\tilde{\Gamma}^\alpha_{\mu\nu})-\tilde{g}^{\mu\alpha}(\delta\tilde{\Gamma}^\nu_{\mu\nu})\Big)=\\
					&=\delta\tilde{g}^{\mu\nu}\tilde{R}_{\mu\nu}+\big(\tilde{g}_{\mu\nu}\tilde{\Box}-\tilde{\nabla}_\mu\tilde{\nabla}_\nu\big)\delta\tilde{g}^{\mu\nu}
					\end{split}
					\end{equation}
					We may now transform the term $\mathcal{I}_1\big(\tilde{g}_{\mu\nu}\tilde{\Box}-\tilde{\nabla}_\mu\tilde{\nabla}_\nu\big)\delta\tilde{g}^{\mu\nu}$ according to:
					$$\mathcal{I}_1\big(\tilde{g}_{\mu\nu}\tilde{\Box}-\tilde{\nabla}_\mu\tilde{\nabla}_\nu\big)\delta\tilde{g}^{\mu\nu}=\text{\textbf{div}}+\delta\tilde{g}^{\mu\nu}\big(\tilde{g}_{\mu\nu}\tilde{\Box}-\tilde{\nabla}_\mu\tilde{\nabla}_\nu\big)\mathcal{I}_1$$
					Having introduced the above formulae, we may now arrive at a set of field equations obtained by varying with respect to the metric tensor:
					\begin{equation}
					\tilde{G}^{\mu\nu}-\frac{\mathcal{I}_5}{\mathcal{I}_1}\tilde{\nabla}_\alpha\mathcal{I}_1\tilde{\nabla}_\beta\mathcal{I}_1\Big(\tilde{g}^{\alpha\mu}\tilde{g}^{\beta\nu}-\frac{1}{2}\tilde{g}^{\nu\mu}\tilde{g}^{\beta\alpha}\Big)+\frac{1}{2\mathcal{I}_1}\tilde{g}^{\mu\nu}\mathcal{I}_4+\frac{1}{\mathcal{I}_1}\big(\tilde{g}^{\mu\nu}\tilde{\Box}-\tilde{\nabla}^\mu\tilde{\nabla}^\nu\big)\mathcal{I}_1=\frac{\kappa^2}{\mathcal{I}_1}\tilde{T}^{\mu\nu} \label{eqn}
					\end{equation}
					Equation governing evolution of the scalar field reads as follows:
					\begin{equation}
					\tilde{R}+\mathcal{I}'_5\tilde{g}^{\mu\nu}\tilde{\nabla}_\mu\mathcal{I}_1\tilde{\nabla}_\nu\mathcal{I}_1+2\mathcal{I}_5\tilde{\Box}\mathcal{I}_1-\mathcal{I}'_4=0 \label{skalar}
					\end{equation}
					In the Jordan-like frame, the energy-momentum tensor should be conserved. Taking the divergence of the equation \ref{eqn}, we have:
					\begin{equation}
					\begin{split}
					&\tilde{G}^{\mu\nu}\tilde{\nabla}_\mu\mathcal{I}_1-\mathcal{I}'_5\tilde{\nabla}_\mu\mathcal{I}_1\tilde{\nabla}_\beta\mathcal{I}_1\tilde{\nabla}_\alpha\mathcal{I}_1\Big(\tilde{g}^{\alpha\mu}\tilde{g}^{\beta\nu}-\frac{1}{2}\tilde{g}^{\nu\mu}\tilde{g}^{\beta\alpha}\Big)-\\
					&+\mathcal{I}_5\Big(\tilde{g}^{\alpha\mu}\tilde{g}^{\beta\nu}-\frac{1}{2}\tilde{g}^{\nu\mu}\tilde{g}^{\beta\alpha}\Big)\Big(\tilde{\nabla}_\mu\tilde{\nabla}_\alpha\mathcal{I}_1\tilde{\nabla}_\beta\mathcal{I}_1+\tilde{\nabla}_\alpha\mathcal{I}_1\tilde{\nabla}_\mu\tilde{\nabla}_\beta\mathcal{I}_1\Big)+\\
					&+\big(\tilde{\nabla}^\nu\tilde{\Box}-\tilde{\Box}\tilde{\nabla}^\nu\big)\mathcal{I}_1+\frac{1}{2}\mathcal{I}'_4\tilde{\nabla}^\nu\mathcal{I}_1=\kappa^2\tilde{\nabla}_\mu\tilde{T}^{\mu\nu} \label{cons}
					\end{split}
					\end{equation}
					Let us now focus our attention on the term $\big(\tilde{\nabla}^\nu\tilde{\Box}-\tilde{\Box}\tilde{\nabla}^\nu\big)\mathcal{I}_1$. We may express in terms of Ricci tensor components (keeping in mind that, for a torsionless connection, we have an identity: $[\nabla_\mu,\nabla_\nu]V^\alpha=R^\alpha_{\mu\beta\nu}V^\beta$, for some vector field $V^\alpha$):
					\begin{equation}
					\begin{split}
					&\big(\tilde{\nabla}^\nu\tilde{\Box}-\tilde{\Box}\tilde{\nabla}^\nu\big)\mathcal{I}_1=\tilde{g}^{\alpha\nu}\tilde{g}^{\sigma\tau}\big(\tilde{\nabla}_\alpha\tilde{\nabla}_\sigma\tilde{\nabla}_\tau-\tilde{\nabla}_\sigma\tilde{\nabla}_\tau\tilde{\nabla}_\alpha\big)\mathcal{I}_1=\\
					&=\tilde{g}^{\alpha\nu}\tilde{g}^{\sigma\tau}\big(\tilde{\nabla}_\alpha\tilde{\nabla}_\sigma\tilde{\nabla}_\tau-\tilde{\nabla}_\sigma\tilde{\nabla}_\alpha\tilde{\nabla}_\tau\big)\mathcal{I}_1=\tilde{g}^{\alpha\nu}\tilde{g}^{\sigma\tau}\big(\tilde{\nabla}_\alpha\tilde{\nabla}_\sigma-\tilde{\nabla}_\sigma\tilde{\nabla}_\alpha\big)\tilde{\nabla}_\tau\mathcal{I}_1=\\
					&=\tilde{g}^{\alpha\nu}[\tilde{\nabla}_\alpha,\tilde{\nabla}_\sigma]\tilde{\nabla}^\sigma\mathcal{I}_3=\tilde{g}^{\alpha\nu}\tilde{R}^\sigma_{\alpha\mu\sigma}\tilde{\nabla}^\mu\mathcal{I}_1=-\tilde{g}^{\alpha\nu}\tilde{R}^\sigma_{\alpha\sigma\mu}\tilde{\nabla}^\mu\mathcal{I}_1=-\tilde{R}^{\nu\mu}\tilde{\nabla}_\mu\mathcal{I}_1
					\end{split}
					\end{equation}
					This term and $\tilde{\nabla}_\mu\mathcal{I}_1\tilde{R}^{\mu\nu}$ coming from $\tilde{\nabla}_\mu\mathcal{I}_1\tilde{G}^{\mu\nu}$ clearly cancel out. The second term in \ref{cons} can be simplified as well:
					$$\mathcal{I}'_5\tilde{\nabla}_\mu\mathcal{I}_1\tilde{\nabla}_\beta\mathcal{I}_1\tilde{\nabla}_\alpha\mathcal{I}_1\Big(\tilde{g}^{\alpha\mu}\tilde{g}^{\beta\nu}-\frac{1}{2}\tilde{g}^{\nu\mu}\tilde{g}^{\beta\alpha}\Big)=\frac{1}{2}\mathcal{I}'_5\tilde{g}^{\alpha\beta}\tilde{\nabla}_\alpha\mathcal{I}_1\tilde{\nabla}_\beta\mathcal{I}_1\tilde{\nabla}^\nu\mathcal{I}_1$$
					The third term reduces to:
					$$\mathcal{I}_5\Big(\tilde{g}^{\alpha\mu}\tilde{g}^{\beta\nu}-\frac{1}{2}\tilde{g}^{\nu\mu}\tilde{g}^{\beta\alpha}\Big)\Big(\tilde{\nabla}_\mu\tilde{\nabla}_\alpha\mathcal{I}_1\tilde{\nabla}_\beta\mathcal{I}_1+\tilde{\nabla}_\alpha\mathcal{I}_1\tilde{\nabla}_\mu\tilde{\nabla}_\beta\mathcal{I}_1\Big)=$$
					$$=\mathcal{I}_5\big(\tilde{\Box}\mathcal{I}_1\tilde{\nabla}^\nu\mathcal{I}_1+\tilde{\nabla}^\mu\mathcal{I}_1\tilde{\nabla}_\mu\tilde{\nabla}^\nu\mathcal{I}_1-\frac{1}{2}\tilde{\nabla}^\mu\mathcal{I}_1\tilde{\nabla}^\nu\tilde{\nabla}_\mu\mathcal{I}_1-\frac{1}{2}\tilde{\nabla}^\mu\mathcal{I}_1\tilde{\nabla}_\nu\tilde{\nabla}^\nu\mathcal{I}_1\big)=\mathcal{I}_5\tilde{\Box}\mathcal{I}_1\tilde{\nabla}^\nu\mathcal{I}_1$$
					Putting the results altogether and pulling out $\tilde{\nabla}^\nu\mathcal{I}_1$ of the parenthesis, we get:
					\begin{equation}
					\begin{split}
					&\tilde{\nabla}^\nu\mathcal{I}_1\Big(-\frac{1}{2}\tilde{R}+\frac{1}{2}\mathcal{I}'_4-\frac{1}{2}\mathcal{I}'_5\tilde{g}^{\alpha\beta}\tilde{\nabla}_\alpha\mathcal{I}_1\tilde{\nabla}_\beta\mathcal{I}_1-\mathcal{I}_5\tilde{\Box}\mathcal{I}_1\Big)=\\
					&=-\frac{1}{2}\tilde{\nabla}^\nu\mathcal{I}_1\Big(\tilde{R}+\mathcal{I}'_5\tilde{g}^{\mu\nu}\tilde{\nabla}_\mu\mathcal{I}_1\tilde{\nabla}_\nu\mathcal{I}_1+2\mathcal{I}_5\tilde{\Box}\mathcal{I}_1-\mathcal{I}'_4\Big)=\kappa^2\tilde{\nabla}_\mu\tilde{T}^{\mu\nu}=0
					\end{split}
					\end{equation}
					by the virtue of \ref{skalar}. Hence, the energy-momentum tensor is conserved.
					\section{Conclusions}
					In this chapter we have introduced certain mathematical tools which will prove useful in the next part of this paper. At the beginning, we postulated an action functional preserving its form under a conformal change. An immediate consequence of our definition of functions relating dynamical variables of two conformal frames were transformation formulae expressing five arbitrary coefficients evaluated in the 'new' conformal frame in terms of coefficients coming from the 'old' frame. These relations allowed us to write out three quantities which remain invariant under conformal change, thus being good candidates for replacing frame-dependent coefficients $\{\mathcal{A},\mathcal{B},\mathcal{C},\mathcal{V},\alpha\}$ in a particular formulation of the theory. Moreover, two invariant metrics and one invariant linear connection were introduced (together with a definition of the scalar field in terms of the invariants) which allowed us to come up with an invariant way of measuring distances, volumes and time, and defining parallel transport that all observers related by a conformal transformation will agree upon. Within the formalism of invariants two distinct action functionals were presented; what was interesting, the coefficient $\mathcal{C}$ disappeared from the action in either case. Another important feature is that both the invariant metric $\hat{g}_{\mu\nu}$ and the invariant metric $\tilde{g}_{\mu\nu}$ result in the same action functionals and, consequently, the same equations of motion as in case of purely metric formulation of the scalar-tensor theories of gravity. This should not come as a surprise since we mentioned in the second chapter that in fact Palatini theory of gravity is a metric theory. If the independent connection entered also the matter part of action, then predictions given by these two approaches would be drastically different. However, we may still wonder why a theory with highly complicated Lagrangian turns out to be a metric theory. The answer is that we simply decided to bring the theory into the metric form by a particular choice of the functions $\{\gamma_1,\gamma_2,f\}$; as it was proven at the beginning of this chapter, sufficient condition for a connection in one frame to be Levi-Civita connection of some metric tensor is vanishing of $\mathcal{C}$. If this coefficient equals zero, then existence of a metric tensor having the property mentioned is guaranteed. Moreover, such metric tensor will be necessarily conformally related to the 'old' metric tensor, which simply means that we can treat quantities entering the action functional as functions of the original metric from the very beginning. 
					
					However, there is one caveat in stating that both approaches give the same result if we choose a particular set of variables. Equations of motion in each frame are written in terms of invariants, and a special role is played by the invariant $\mathcal{I}_3$ reducing to the scalar field itself in the case of Einstein frame in the metric approach. This invariant in case of the Palatini approach is a different function of the coefficients and thus, of the scalar field. In the metric case, it is defined as:
					$$\mathcal{I}^{(M)}_3(\Phi)=\pm\int_{\Phi_0}^{\Phi}d\Phi'\sqrt{\frac{2\mathcal{A}(\Phi')\mathcal{B}(\Phi')+3(\mathcal{A}'(\Phi'))^2}{4\mathcal{A}^2(\Phi')}}$$
					whereas in the Palatini approach it is given by:
					$$\mathcal{I}^{(P)}_3(\Phi)=\pm \int^\Phi_{\Phi_0}d\Phi\sqrt{\frac{\frac{3}{2}\mathcal{C}^2(\Phi')-\mathcal{A}(\Phi')\mathcal{B}(\Phi')+3\mathcal{A}'(\Phi')\mathcal{C}(\Phi')}{\mathcal{A}^2(\Phi')}}'$$
					For example, if $\mathcal{B}=\mathcal{C}=0$ and $\mathcal{A}\neq\text{const}$, then the latter vanishes, but the former is a well-defined function of the scalar field. Conversely, if $\mathcal{C}$ and $\mathcal{A}$ are constant, but $\mathcal{B}$ vanishes, the metric invariant equals zero identically in all conformal frames, but the Palatini invariant is a function of the scalar field. It is also worth noting that $\mathcal{I}^{(M)}_3(\Phi)$ is not an invariant when considered in the Palatini formalism; this, of course, can be viewed as a consequence of decoupling metric transformation properties from those of connection: $\gamma_1\neq\gamma_2$ in general. If we find a frame in which $\Gamma=\Gamma(g)$, then the theory reduces effectively to the metric one, but with the invariants being different functions of the scalar field. As we will see later on, this discrepancy will turn out very important when analysing $f(R)$ theories.
			\chapter{Applications: $f(R)$ theories of gravity and Friedmann equations}
			So far, we have been developing a theory for scalar-tensor gravity in the Palatini approach. We have set up a background for qualitative and quantitative analysis of particular theories possibly modelling various phenomena, which so far have been investigated with use of general relativity, $f(R)$ theories or scalar-tensor theories of gravity. In this chapter we will look at a couple of possible applications of the machinery developed earlier in this paper with a particular emphasis on correspondence between $f(R)$ and scalar-tensor theories (both in metric and in Palatini approach), showing that effectively the former can be analysed by means of the latter, and calculating the Friedmann equations. Naturally, theories discussed here do not purport to be fundamental theories but there exist chances it will explain certain gravitational effects more consistently (e.g. inflation).
			\section{Equivalence between $f(R)$ and scalar-tensor theories of gravity}
					Remarkably, it is possible to show in a simple way that $f(R)$ and scalar-tensor theories are equivalent to each other - at least in certain cases. However, there is no such equivalence between two different formulation of the same theory; dynamics obtained in Palatini formalism differs from the metric approach. Let us first focus on the latter case.
					\subsection{Metric approach}
							Although the main objective of this paper is to analyse scalar-tensor theories in Palatini approach, we need to dwell on the metric case as well in order to work out what the differences between these two formulations are and how our understanding of them should be altered. Let us start with the action for $f(R)$ theories:
							\begin{equation}
							S[g]=\frac{1}{2\kappa^2}\int_{\Omega}d^4x\sqrt{-g}f(R)+S_{\text{matter}}[g,\chi^m] \label{ac2}
							\end{equation}
							We need to introduce now an auxiliary field $\chi$ in a way that is not modifying the dynamics \cite{capo}, \cite{sot}, \cite{sotir}. Let us try out the following action:
							\begin{equation}
							S[g,\chi]=\frac{1}{2\kappa^2}\int_{\Omega}d^4x\sqrt{-g}\Big(f(\chi)+f'(\chi)(R-\chi)\Big)+S_{\text{matter}}[g,\chi^m] \label{ac3}
							\end{equation}
							It can be seen that adding the new field $\chi$ did not change the dynamics in any way. If variation with respect to the new field is performed, we get (assuming $f''(\chi)\neq 0$):
							$$f''(\chi)(\chi-R)=0$$
							which obviously means that $R=\chi$, and having plugged it back in \ref{ac3}, the action \ref{ac2} is restored. Now, let us make the next step and assume that a scalar field might be identified with the derivative of the function $f$ in the following way: $\Phi=f'(\chi)$. Also, this relation must be invertible, so that the auxiliary field $\chi$ can be thought of as a function of $\Phi$: $\chi=\chi(\Phi)$. Then, our action function reads in the following way:
							\begin{equation}
							S[g,\Phi]=\frac{1}{2\kappa^2}\int_{\Omega}d^4x\sqrt{-g}\Big(\Phi R-V(\Phi)\Big)+S_{\text{matter}}[g,\chi^m] \label{ac4}
							\end{equation}
							where $V(\Phi)$ denotes the potential of self-interaction and is defined as $V(\Phi)=\Phi \chi(\Phi)-f(\chi(\Phi))$. This allowed us to cast $f(R)$ theory in a form effectively equivalent to the Brans-Dicke theory with the parameter $\omega=0$. Let us call this particular case \textbf{metric $f(R)$ in Jordan frame} (this name is, of course, far from being entirely adequate and somehow devoid of imaginativeness, but it denotes accurately the class of theories in a given frame and a given approach). We can treat this action functional a bit more rigorously and write out four coefficients $\{\mathcal{A},\mathcal{B},\mathcal{C},
							\alpha\}$ dependent on the scalar field introduced in \ref{a1}: 
							\begin{itemize}
								\item $\mathcal{A}(\Phi)=\Phi$
								\item $\mathcal{B}(\Phi)=0$
								\item $\mathcal{V}(\Phi)=V(\Phi)$
								\item $\alpha(\Phi)=0$
							\end{itemize}
							Equations of motion are easy to write; one has to make use of \ref{b1}, \ref{b2} and add the potential term to the equations; this gives us:
							\begin{itemize}
								\item $G_{\mu\nu}-\frac{1}{\Phi}(\nabla_\mu\nabla_\nu-g_{\mu\nu}\Box)\Phi+\frac{1}{2\Phi}g_{\mu\nu}V=\frac{\kappa^2}{\Phi}T_{\mu\nu}$
								\item $R-\frac{dV}{d\Phi}=0$
							\end{itemize} 
							We can contract the first equation with the metric tensor and obtain:
							$$-R+\frac{3}{\Phi}\Box\Phi+\frac{2}{\Phi}V=\kappa^2 T$$
							Making use of the second equation, w can also write:
							\begin{equation}
							3\Box\Phi=\Phi\frac{dV}{d\Phi}-2V+\kappa^2 T \label{rem}
							\end{equation}
							As we can see, equation \ref{eqsc} has been reproduced with $\omega=0$ and the potential added.
							
							This class of theories can be successfully labelled by the invariants introduced in the second chapter (since we are dealing now with the metric theory). It will be enough to write out three thereof, as the other ones can be constructed from them:
							\begin{itemize}
								\item $\mathcal{I}^{(M)}_1(\Phi)=\frac{\mathcal{A}(\Phi)}{e^{2\alpha(\Phi)}}=\Phi$
								\item $\mathcal{I}^{(M)}_2(\Phi)=\frac{\mathcal{V}(\Phi)}{(\mathcal{A}(\Phi))^2}=\frac{V(\Phi)}{\Phi^2}$
								\item $\mathcal{I}^{(M)}_3(\Phi)=\pm\int_{\Phi_0}^{\Phi}\sqrt{\frac{2\mathcal{A}(\Phi')\mathcal{B}(\Phi')+3(\mathcal{A}'(\Phi'))^2}{4\mathcal{A}^2(\Phi')}}d\Phi'=\pm\sqrt{\frac{3}{4}}\int_{\Phi_0}^{\Phi}\frac{1}{\Phi'}d\Phi'=\pm\sqrt{\frac{3}{4}}\:\text{ln}\Big(\frac{\Phi}{\Phi_0}\Big)$
							\end{itemize}
							Let us notice that the last relation is clearly invertible, so that we can express the scalar field in a frame-independent way (we need to choose the plus sign): 
							$$\Phi=\Phi_0\:e^{\sqrt{\frac{4}{3}}\mathcal{I}^{(M)}_3}$$ 
							
							We can now perform a conformal change and express the action \ref{ac4} in terms of the invariant metric $\hat{g}_{\mu\nu}=\mathcal{A}(\Phi)g_{\mu\nu}=\Phi g_{\mu\nu}$ and the scalar field defined above, being a function of the invariant $\mathcal{I}^{(M)}_3$; this yields:
							\begin{equation}
							S[\hat{g},\mathcal{I}^{(M)}_3]=\frac{1}{2\kappa^2}\int_{\Omega}d^4x\sqrt{-g}\Big( \hat{R}-2\hat{g}^{\mu\nu}\hat{\nabla}_\mu\mathcal{I}^{(M)}_3\hat{\nabla}_\nu\mathcal{I}^{(M)}_3-\mathcal{I}^{(M)}_2\Big)+S_{\text{matter}}[\frac{1}{\Phi_0}e^{-\sqrt{\frac{4}{3}}\mathcal{I}^{(M)}_3}\hat{g},\chi] \label{ac5}
							\end{equation}
							which looks exactly like \ref{ex}. Here, however, equivalence with the Brans-Dicke theory is less obvious, as the original theory was written in the Jordan frame, and the action functional shown above is cast in the Einstein frame. If we want to reproduce the exact result of \ref{ace}, then we need to carry out yet another, trivial conformal transformation (rescaling) and scalar field redefinition:
							$$\hat{g}'_{\mu\nu}=G\:\hat{g}_{\mu\nu}, \qquad \mathcal{I}'^{(M)}_3=\frac{2}{\sqrt{G}}\mathcal{I}^{(M)}_3$$
							and plug this in \ref{ac5} (let us also note that multiplying by a number does not affect invariance of both the metric and the scalar field - represented here by the invariant $\mathcal{I}'^{(M)}_3$):
							\begin{equation}
							\begin{split}
							S[\hat{g}',\mathcal{I}'^{(M)}_3]&=\frac{1}{2\kappa^2}\int_{\Omega}d^4x\sqrt{-\bar{g}'}\Bigg(\frac{\hat{R}'}{G}-\frac{1}{2}\hat{g}'^{\mu\nu}\hat{\nabla}'_\mu\mathcal{I}'^{(M)}_3\hat{\nabla}'_\nu\mathcal{I}'^{(M)}_3-\mathcal{I}'^{(M)}_2\Bigg)+\\
							&+S_{\text{matter}}\Big[\frac{1}{G\Phi_0}e^{-\sqrt{\frac{G}{3}}\mathcal{I}'^{(M)}_3}\hat{g}',\chi\Big] \label{ac6}
							\end{split}
							\end{equation}
							We will call this particular form \textbf{metric $f(R)$ in Einstein frame}. Equations of motion are easy to write:
							\begin{itemize}
								\item $\delta \hat{g}'$: $\hat{G}'_{\mu\nu}+G\Big[\frac{1}{4}\hat{g}'_{\mu\nu}\hat{g}'^{\alpha\beta}-\frac{1}{2}\delta^\alpha_\mu\delta^\beta_\nu\Big]\hat{\nabla}'_\alpha\mathcal{I}'^{(M)}_3\hat{\nabla}'_\beta\mathcal{I}'^{(M)}_3+\frac{G}{2}\hat{g}'_{\mu\nu}\hat{\nabla}'_\alpha\mathcal{I}'^{(M)}_2=G\kappa^2 T'_{\mu\nu}$
								\item $\delta \mathcal{I}'^{(M)}_3$: $\hat{\Box}'\mathcal{I}'^{(M)}_3-\frac{d\mathcal{I}'^{(M)}_2}{d\mathcal{I}'^{(M)}_3}=-\frac{2}{\sqrt{3}}\kappa^2\sqrt{G}\:T'$
							\end{itemize}
							Due of nonminimal coupling between matter and the scalar field, the energy-momentum tensor $T'_{\mu\nu}$ is not conserved. This result is not very surprising, as we simply reproduced the action functional of the second chapter starting from different assumptions. What is interesting to notice is that the action \ref{ace} has been written in an invariant form without any knowledge of invariants introduced later . 
					\subsection{Palatini approach}
							Let us now see what happens if we apply Palatini variation to $f(R)$ theories in scalar-tensor form. We expect to reproduce results of the previous chapter and obtain a well-known result, that Palatini and metric approaches are not compatible and hence, there is no conformal transformation connecting these two formalisms. We proceed by writing the same action as in \ref{ac2}, but this time we regard the curvature scalar as a function of both the metric and the independent connection: $\mathcal{R}=\mathcal{R}(g,\Gamma)$. In a similar manner we can introduce an auxiliary field and end up in \textbf{Palatini $f(R)$ in Jordan frame}:
							\begin{equation}
							S[g,\Gamma,\Phi]=\frac{1}{2\kappa^2}\int_{\Omega}d^4x\sqrt{-g}\Big(\Phi \mathcal{R}-V(\Phi)\Big)+S_{\text{matter}}[g,\chi^m] \label{pa1}
							\end{equation}
							where $V(\Phi)$ is defined as in the previous section. Let us notice that in the Palatini approach we do not let the connection enter the matter part of the action; otherwise, we would be calling such theory 'metric-affine'. Also, despite it apparent similarity to \ref{ac4}, unlike that action functional this one is not equivalent to BD theory with $\omega=0$ because the connection $\Gamma^\alpha_{\mu\nu}$ is not a Levi-Civita connection of the metric $g_{\mu\nu}$. Original BD theory was of course formulated in metric approach, but it will be shown that an initial Palatini formulation can be brought to a BD-like form with $\omega=-\frac{3}{2}$. Let us first, analogously to what we did in the previous subsection, write out five coefficients $\{\mathcal{A},\mathcal{B},\mathcal{C},\mathcal{V},\alpha\}$ introduced for scalar-tensor theories of gravity in Palatini approach:
							\begin{itemize}
								\item $\mathcal{A}(\Phi)=\Phi$
								\item $\mathcal{B}(\Phi)=0$
								\item $\mathcal{C}(\Phi)=0$
								\item $\mathcal{V}(\Phi)=V(\Phi)$
								\item $\alpha(\Phi)=0$
							\end{itemize}
							We can also list the invariant quantities built from these coefficients:
							\begin{itemize}
								\item $\mathcal{I}^{(P)}_1(\Phi)=\frac{\mathcal{A}(\Phi)}{e^{2\alpha(\Phi)}}=\Phi$
								\item $\mathcal{I}^{(P)}_2(\Phi)=\frac{\mathcal{V}(\Phi)}{(\mathcal{A}(\Phi))^2}=\frac{V(\Phi)}{\Phi^2}$
								\item $\mathcal{I}^{(P)}_3(\Phi)=\pm\int^\Phi_{\Phi_0}\sqrt{\frac{\frac{3}{2}\mathcal{C}^2(\Phi')-\mathcal{A}(\Phi')\mathcal{B}(\Phi')+3\mathcal{A}'(\Phi')\mathcal{C}(\Phi')}{\mathcal{A}^2(\Phi')}}\:d\Phi'=0$
							\end{itemize}
							The invariant metrics and connections are thence:
							\begin{equation}
							\hat{g}_{\mu\nu}=\Phi g_{\mu\nu}, \qquad \tilde{g}_{\mu\nu}=g_{\mu\nu}, \qquad \hat{\Gamma}^\alpha_{\mu\nu}=\Gamma^\alpha_{\mu\nu}
							\end{equation}
							Let us notice that variables used coincide with the invariant quantities if we want to work in the Jordan frame. For this reason, we shall keep up analysing the theory in this frame. In order to obtain equations of motion, all we have to do is to come back to equations \ref{eqp1}, \ref{eqp2} and \ref{eqp3} and plug in functions of the scalar field given above. But, since this case was analysed in section dedicated to investigation of action functional expressed in terms of $(\tilde{g}_{\mu\nu},\hat{\Gamma}^\alpha_{\mu\nu},\mathcal{I}^{(P)}_1(\Phi))$, we can write:
							\begin{equation}
							S[\tilde{g},\hat{\Gamma},\mathcal{I}^{(P)}_1]=\frac{1}{2\kappa^2}\int_{\Omega}d^4x\sqrt{-\tilde{g}}\Big[\mathcal{I}^{(P)}_1\hat{R}(\tilde{g},\hat{\Gamma})-\mathcal{I}^{(P)}_4\Big]+S_\text{matter}[\tilde{g},\chi^m] \label{inac1}
							\end{equation}
							(where $\mathcal{I}^{(P)}_4=\Big(\mathcal{I}^{(P)}_1\Big)^2\mathcal{I}^{(P)}_2$). We have already varied this action; we remember that the invariant connection was a function of the scalar field and the invariant metric $\tilde{g}_{\mu\nu}$, and hence could be viewed as an auxiliary field and eliminated from the action, yielding a different one:
							\begin{equation}
							\begin{split}
							S[\tilde{g},\mathcal{I}^{(P)}_1]=&\frac{1}{2\kappa^2}\int_{\Omega}d^4x\sqrt{-\tilde{g}}\Bigg[\mathcal{I}^{(P)}_1\tilde{R}(\tilde{g})+\tilde{g}^{\mu\nu}\frac{3}{2{\mathcal{I}_1^{(P)}}}\hat{\nabla}_\mu\mathcal{I}^{(P)}_1\hat{\nabla}_\nu\mathcal{I}^{(P)}_1-\mathcal{I}^{(P)}_4\Bigg]+S_\text{matter}[\tilde{g},\chi^m] \label{inac2}
							\end{split}
							\end{equation}
							As we can see, this action corresponds to BD theory with $\omega=-\frac{3}{2}$; this clearly shows that Palatini variation alters the physics described by the theory, as in case of the purely metric approach we got a BD-theory with $\omega=0$. These two theories are different, which could be also seen by comparing values of the invariant. Invariants $\mathcal{I}^{(P)}_3$ and $\mathcal{I}^{(M)}_3$ should correspond to one another, but in case of the metric approach value of the invariant is different from zero, whereas in case of Palatini approach this invariant vanishes. Thus, the theories are not commensurable, as it was said in the previous chapter.
							
							Without any further discussion, we can plug all relevant quantities in \ref{eqn} and \ref{skalar}, and obtain equations of motion:
							\begin{enumerate}
								\item $\tilde{G}^{\mu\nu}+\frac{3}{2(\mathcal{I}^{(P)}_1)^2}\tilde{\nabla}_\alpha\mathcal{I}^{(P)}_1\tilde{\nabla}_\beta\mathcal{I}^{(P)}_1\Big(\tilde{g}^{\alpha\mu}\tilde{g}^{\beta\nu}-\frac{1}{2}\tilde{g}^{\nu\mu}\tilde{g}^{\beta\alpha}\Big)+\frac{1}{2\mathcal{I}^{(P)}_1}\tilde{g}^{\mu\nu}\mathcal{I}^{(P)}_4+\frac{1}{\mathcal{I}^{(P)}_1}\big(\tilde{g}^{\mu\nu}\tilde{\Box}-\tilde{\nabla}^\mu\tilde{\nabla}^\nu\big)\mathcal{I}^{(P)}_1=\frac{\kappa^2}{\mathcal{I}^{(P)}_1}\tilde{T}^{\mu\nu}$
								\item $\tilde{R}+\frac{3}{2(\mathcal{I}^{(P)}_1)^2}\tilde{g}^{\alpha\beta}\tilde{\nabla}_\alpha\mathcal{I}^{(P)}_1\tilde{\nabla}_\beta\mathcal{I}^{(P)}_1-\frac{3}{\mathcal{I}^{(P)}_1}\tilde{\Box}\mathcal{I}^{(P)}_1-(\mathcal{I}^{(P)}_4)'=0$
							\end{enumerate}
							Let us now contract the first equation with the metric tensor $\tilde{g}_{\mu\nu}$:
							\begin{equation}
							-\tilde{R}-\frac{3}{2(\mathcal{I}^{(P)}_1)^2}\tilde{g}^{\alpha\beta}\tilde{\nabla}_\alpha\mathcal{I}^{(P)}_1\tilde{\nabla}_\beta\mathcal{I}^{(P)}_1+\frac{2}{\mathcal{I}^{(P)}_1}\mathcal{I}^{(P)}_4+\frac{3}{\mathcal{I}^{(P)}_1}\tilde{\Box}\mathcal{I}^{(P)}_1=\frac{\kappa^2}{\mathcal{I}^{(P)}_1}\tilde{T}
							\end{equation}
							This formula can be added to the second equation of motion, yield the following, interesting result:
							\begin{equation}
							\kappa^2\tilde{T}+\mathcal{I}^{(P)}_1(\mathcal{I}^{(P)}_4)'-2\mathcal{I}^{(P)}_4=0
							\end{equation}
							This equation is an analogue of \ref{rem} (or \ref{eqsc} with the potential added) with $\omega=-\frac{3}{2}$. There is, however, a small, yet very important difference between these two results: in case of the metric approach, relation between the scalar field and matter is dynamical, as there is the d'Alembert operator acting on the scalar field. Here, the relation is purely algebraic, and this happens only for one special value of $\omega$. In the vacuum, when $\tilde{T}_{\mu\nu}=0$, a solution of the above equation is $\mathcal{I}^{(P)}_1=\text{const}$ - the Einstein theory with cosmological constant added \cite{sotir}. It must be noted that such theory was not originally considered by Brans and Dicke, as they did not include the potential term in the action. If the potential $\mathcal{I}^{(P)}_4$ were absent from the action functional, the theory would be ill-posed since it would lead to a nonsensical equation $\tilde{T}=0$, satisfied identically only by radiation. However, in case of $f(R)$ theories the potential cannot vanish, for it would contradict the underlying assumption $f''(R)\neq 0$ (because the potential is given by $V(R)=f'(R)R-f(R)$, which can be satisfied only by $f(R)=\text{const}\cdot R \rightarrow f''(R)=0$) \cite{sotir}. 
							
							For completeness, let us also express the action functional in terms of the invariant metric $\hat{g}_{\mu\nu}=\Phi g_{\mu\nu}$. This will allow us to make use of formulae introduced in the section dedicated to treatment of the theory in the Einstein-like frame. For this reason, we shall refer to the following formulation as \textbf{Palatini $f(R)$ in the Einstein frame}. We will, however, encounter here a problem: the dynamical independent variable being a 'replacement' of the scalar field was the invariant $\mathcal{I}^{(P)}_3$, but in this case it is identically equal to zero. However, the main reason for introducing such invariant was our need for writing the action functional in an invariant form. Here, due to the absence of two of the coefficients, we can make use of a 'vicarious' invariant $\mathcal{I}^{(P)}_1$, which seems to be a natural choice since $\mathcal{I}^{(P)}_1=\Phi=\mathcal{A}(\Phi)$. This does not alter the dynamics, it only means that the potential becomes a function of the 'new' variable $\mathcal{I}^{(P)}_1$:
							\begin{equation}
							S[\hat{g},\hat{\Gamma},\mathcal{I}^{(P)}_1]=\frac{1}{2\kappa^2}\int_{\Omega}d^4x\sqrt{-\hat{g}}\big[\hat{R}(\hat{g},\hat{\Gamma})-\mathcal{I}^{(P)}_2\big]+S_\text{matter}\Big(\frac{1}{\mathcal{I}^{(P)}_1}\hat{g},\chi\Big) \label{ill}
							\end{equation}
							Let us now compute the equations of motion:
							\begin{itemize}
								\item $\hat{G}_{\mu\nu}(\hat{g},\hat{\Gamma})+\frac{1}{2}\hat{g}_{\mu\nu}\mathcal{I}^{(P)}_2=\kappa^2 \hat{T}_{\mu\nu}$
								\item $\hat{\nabla}_\alpha\big(\sqrt{-\hat{g}}\hat{g}^{\mu\nu}\big)=0$
								\item $(\mathcal{I}^{(P)}_2)'=\frac{\kappa^2}{\mathcal{I}^{(P)}_1}\hat{T}$
							\end{itemize}
							The second equation tells us that the invariant connection is a Levi-Civita connection of $\hat{g}_{\mu\nu}$. The last equation suggests that in case of $\hat{T}_{\mu\nu}=0$, the only possible solution is $\mathcal{I}^{(P)}_2=\Lambda=\text{const}$, which substituted in the first equation yields again the Einstein vacuum field equations with cosmological constant $\hat{R}_{\mu\nu}=\frac{1}{2}\hat{g}_{\mu\nu}\Lambda$. Also, taking the divergence of the first equation and combining it with the one describing the scalar field, we get:
							\begin{equation}
							\frac{1}{2}\hat{T}\hat{\nabla}_\nu\big(\text{ln}\mathcal{I}^{(P)}_1\big)=\hat{\nabla}^\mu\hat{T}_{\mu\nu}
							\end{equation}
							However, following the prescription given in the section \hyperref[subsec:inv1]{(3.5.1)}, we can add the energy-momentum tensor defined for the scalar field $\kappa^2 \hat{T}^\Phi_{\mu\nu}=-\frac{1}{2}\hat{g}_{\mu\nu}\mathcal{I}^{(P)}_2$; they altogether make up a quantity which remains conserved: $\mathfrak{\hat{T}}_{\mu\nu}=\hat{T}^\Phi_{\mu\nu}+\hat{T}_{\mu\nu} \rightarrow \hat{\nabla}_\mu \mathfrak{\hat{T}}^{\mu\nu}=0$.
							
					\section{Friedmann equations}
							In this chapter we will apply our results to the problem of determining cosmic evolution of the Universe. In the standard, Einsteinian approach one calculates field equation and makes the following assumptions: the Earth is not located in a preferred position in the Universe or, to push this assertion even further, all places in the Universe are equivalent - this is so-called \textit{cosmological principle}, extending the idea put forth by Copernicus that in fact our planet does not occupy a central position in the solar system. It must be remembered that the cosmological principle is more of an assumption rather that an observational fact. There exist cosmological models that do not take the principle as their tenet \cite{pleb}. However, if we assume that the cosmological principle is correct, then it has a deep geometrical meaning for our theory: spacetime must be isotropic around every point \cite{bible}, which makes it homogeneous \cite{pleb}. This translates into a form of the metric describing such spacetime:
							\begin{equation}
							ds^2=dt^2-\frac{a(t)}{(1+\frac{1}{4}r^2)^2}[dr^2+r^2(d\theta^2+\text{sin}^2(\theta)d\phi^2)]
							\end{equation}
							where $\theta, \phi$ measure angles in spherical coordinates, $a(t)$ is the scale factor, $k$ is the spatial curvature and $r$ is related to the standard radius $\rho$ by $\rho=\frac{r}{1+\frac{1}{4}r^2}$. It was first obtained in the 1920s and 1930s by Alexander Friedmann, Georges Lemaître, Howard P. Robertson and Arthur Geoffrey Walker \cite{bible}. The Einstein Field Equations are necessary to give us a relation between the scale factor, 'measuring' the size of the Universe, and cosmic time. Solution of these equation is called 'Friedman equations'; it was discovered by aforementioned Alexander Friedman in 1922 \cite{fried}. In order to solve EFE, one needs to calculate first all geometric objects depending on the metric which enter the equations of motion. Because we will make use of them later on, it seems reasonable to list them below \cite{sok} (switching however to the classical variable $\rho$).
							
							\noindent \textbf{Connection:}
							\begin{equation}
							\begin{split}
							& \Gamma^0_{11}=\frac{a\dot{a}}{1-k\rho^2}, \qquad \Gamma^0_{22}=a\dot{a}\rho^2, \qquad \Gamma^0_{33}=a\dot{a}\rho^2\text{sin}^2\theta, \\
							& \Gamma^1_{01}=\Gamma^2_{02}=\Gamma^3_{03}=\frac{\dot{a}}{a}, \qquad \Gamma^1_{11}=\frac{k\rho}{1-k\rho^2}, \qquad \Gamma^1_{22}=-(1-k\rho^2)\rho, \\
							& \Gamma^1_{33}=-(1-k\rho^2)\rho\text{sin}^2\theta, \qquad \Gamma^2_{12}=\Gamma^3_{13}=\frac{1}{\rho}, \\
							& \Gamma^{2}_{33}=-\text{sin}\theta\text{cos}\theta, \qquad \Gamma^3_{23}=\text{ctg}\theta
							\end{split}
							\end{equation}
							
							\noindent \textbf{Ricci tensor:}
							\begin{equation}
							R^0_{\:\:0}=-3\frac{\ddot{a}}{a}, \qquad R^1_{\:\:1}=R^2_{\:\:2}=R^3_{\:\:3}=-\frac{1}{a^2}(a\ddot{a}+2\dot{a}^2+2k)
							\end{equation}
							
							\noindent \textbf{Curvature scalar:}
							\begin{equation}
							R=-\frac{6}{a^2}(a\ddot{a}+\dot{a}^2+2k)
							\end{equation}
							
							If we assume that matter filling the Universe can be modelled by perfect fluid, then its energy-momentum tensor has the following form \cite{bible}:
							\begin{equation}
							T_{\mu\nu}=(\rho+P)u_\mu u_\nu-Pg_{\mu\nu}
							\end{equation}
							where  $u_\mu$ is 4-velocity between the fluid and an observer, $\rho$ is the energy density of a given component of matter (not to be mistaken for the radial component), and $P$ denotes pressure. They can be related by so-called equation of state, giving us a direct dependence of $P$ on $\rho$: 
							\begin{equation}
							P=P(\rho)
							\end{equation}
							Having matter on the right hand side of EFE, we can write the Friedmann equations (without cosmological constant):
							\begin{equation}
							\frac{\dot{a}^2+k}{a^2}=\frac{8\pi \rho}{3}
							\end{equation}
							and 
							\begin{equation}
							\frac{\ddot{a}}{a}=-\frac{4\pi}{3}(\rho+3P)
							\end{equation}
							which, in fact, is a result of the first Friedmann equation and the conservation of energy-momentum tensor:
							\begin{equation}
							\nabla_\mu T^{\mu\nu}=0 \rightarrow \dot{\rho}+3\frac{\dot{a}}{a}(\rho+P)=0
							\end{equation}
							These standard results allow us to determine evolution of the entire Universe and so far, enriched with the cosmological constant, cold dark matter and inflationary scenario, have been very successful in explaining its dynamics, structure formation and particle abundances \cite{capo}. However, as it was mentioned in the Introduction, certain aspects of the theory seem somehow \textit{ad hoc}, such as the inflation. In the standard GR there is no scalar field that could cause the expansion of early Universe to accelerate rapidly; it has to be inserted in the equations by hand. Thus, we may wonder whether considering some alternative theory (viewed more as a toy model) could possibly lead to a simpler and more homogeneous description of phenomena observed in the Universe. In this chapter we will take the first step toward this aim: we will compute Friedmann equations for a very simple case of an empty universe, endowed with a zero spatial curvature. We will carry out all calculations using both invariant metrics.
								\subsection{Friedmann equations in the Einstein-like frame}
								The Einstein-like frame is characterized by the following set of variables:
								\begin{itemize}
									\item metric: $\hat{g}_{\mu\nu}=\mathcal{A}(\Phi)g_{\mu\nu}=\text{diag}(\mathcal{A}(\Phi),-a^2(t)\mathcal{A}(\Phi),-a^2(t)\mathcal{A}(\Phi)\rho^2,-a^2(t)\mathcal{A}(\Phi)\rho^2\text{sin}^2(\theta))$
									\item connection: $\hat{\Gamma}^\alpha_{\mu\nu}=\hat{\Gamma}^\alpha_{\mu\nu}(\hat{g})$
									\item scalar field: $\mathcal{I}_3=\mathcal{I}_3(\Phi)$
								\end{itemize}
								Field equation were given in \hyperref[subsec:inv1]{(3.5.1.)}. We need to compute now the invariant Ricci tensor and scalar curvature for the invariant metric $\hat{g}_{\mu\nu}$; the easiest way will be to use conformal transformation formulae since we know the Ricci tensor and curvature scalar expressed in terms of the metric $g_{\mu\nu}$. Because both quantities depend on two different metrics, they can be related by formulae given in the first chapter with a proper identification of the conformal factor $\Omega(x)\equiv\Big[\mathcal{A}(\Phi(x))\Big]^{1/2}$, so that:
								\begin{equation}
								\begin{split}
								&\hat{R}_{\mu\nu}=R_{\mu\nu}-2\nabla_\mu\nabla_\nu\text{ln}(\mathcal{A})^{1/2}-g_{\mu\nu}\Box\text{ln}(\mathcal{A})^{1/2}+2\nabla_\mu\text{ln}(\mathcal{A})^{1/2}\nabla_\nu\text{ln}(\mathcal{A})^{1/2}-\\
								&+2g_{\mu\nu}g^{\alpha\beta}\nabla_\alpha\text{ln}(\mathcal{A})^{1/2}\nabla_\beta\text{ln}(\mathcal{A})^{1/2}
								\end{split} \label{formula}
								\end{equation}
								The coefficient $\mathcal{A}$ depends on the spacetime position indirectly, through the scalar field. Because of the homogeneity assumption, the field cannot depend on spatial position, it can only a function of the cosmic time: $\Phi=\Phi(t)$. Hence, the formula given above should simplify greatly since the only nonvanishing derivative will be the one with respect to time, $\partial_t\equiv\partial_0$. In order to avoid using Christoffel symbols, we can utilize the well-known formula for the Laplace-Beltrami operator $\Box=\frac{1}{\sqrt{-g}}\partial_\mu\big(g^{\mu\nu}\sqrt{-g}\partial_\nu\big)$, where $g=-a^6(t)\rho^4\text{sin}^2(\theta)$. This allows us to cast the seemingly complicated formula in a more clear form:
								\begin{equation}
								\begin{split}
								\hat{R}_{\mu\nu}=& R_{\mu\nu}-\delta^0_{\:\:\mu}\delta^0_{\:\:\nu}\partial_0\Big(\frac{\partial_0\mathcal{A}}{\mathcal{A}}\Big)+\frac{1}{\mathcal{A}}\Gamma^0_{\mu\nu}\partial_0\mathcal{A}-\frac{1}{2}g_{\mu\nu}\frac{1}{a^3}\partial_0\Big(a^3\frac{\partial_0\mathcal{A}}{\mathcal{A}}\Big)+\\
								& + \frac{1}{2}\Big(\frac{\partial_0\mathcal{A}}{\mathcal{A}}\Big)^2(\delta^0_{\:\:\mu}\delta^0_{\:\:\nu}-g_{\mu\nu})
								\end{split}
								\end{equation}
								Taking into account that $\partial_0\mathcal{A}=\frac{d\mathcal{A}}{d\mathcal{I}_3}\frac{d\mathcal{I}_3}{dt}\equiv\mathcal{A}'\mathcal{\dot{I}}_3 $, we can write out all components of the Ricci tensor:
								\begin{equation}
								\hat{R}_{00}=-3\frac{\ddot{a}}{a}-\frac{3}{2}\frac{\mathcal{A}''}{\mathcal{A}}\mathcal{\dot{I}}^2_3-\frac{3}{2}\frac{\mathcal{A}'}{\mathcal{A}}\mathcal{\ddot{I}}_3-\frac{3}{2}\frac{\dot{a}}{a}\frac{\mathcal{A}'}{\mathcal{A}}\mathcal{\dot{I}}_3+\frac{6}{4}\Big(\frac{\mathcal{A}'}{\mathcal{A}}\Big)^2\mathcal{\dot{I}}^2_3
								\end{equation}
								and 
								\begin{equation}
								\hat{R}_{ii}=g_{ii}\Big\{-\frac{\ddot{a}}{a}-2\Big(\frac{\dot{a}}{a}\Big)^2-\frac{5}{2}\frac{\dot{a}}{a}\frac{\mathcal{A}'}{\mathcal{A}}\mathcal{\dot{I}}_3-\frac{1}{2}\frac{\mathcal{A}''}{\mathcal{A}}\mathcal{\dot{I}}^2_3-\frac{1}{2}\frac{\mathcal{A}'}{\mathcal{A}}\mathcal{\ddot{I}}_3\Big\}
								\end{equation}
								Formula describing the curvature tensor reads as follows:
								\begin{equation}
								\hat{R}=\frac{1}{\mathcal{A}}\Big[-6\frac{\ddot{a}}{a}-6\Big(\frac{\dot{a}}{a}\Big)^2+\frac{6}{4}\Big(\frac{\mathcal{A}'}{\mathcal{A}}\Big)^2\mathcal{\dot{I}}^2_3-9\frac{\dot{a}}{a}\frac{\mathcal{A}'}{\mathcal{A}}\mathcal{\dot{I}}_3-3\frac{\mathcal{A}''}{\mathcal{A}}\mathcal{\dot{I}}^2_3-3\frac{\mathcal{A}'}{\mathcal{A}}\mathcal{\ddot{I}}_3\Big]
								\end{equation}
								Hence, the $00$ component of field equations for the metric tensor in Palatini approach is the following:
								\begin{equation}
								\begin{split}
								& \hat{R}_{00}-\hat{g}_{00}\hat{R}-\frac{1}{2}\mathcal{\dot{I}}^2+\frac{1}{2}\hat{g}_{00}\mathcal{I}_2=\\
								& 3\frac{\dot{a}}{a}\frac{\mathcal{A}'}{\mathcal{A}}\mathcal{\dot{I}}_3+\frac{3}{4}\Big(\frac{\mathcal{A}'}{\mathcal{A}}\Big)^2\mathcal{\dot{I}}^2_3+3\Big(\frac{\dot{a}}{a}\Big)^2-\frac{1}{2}\mathcal{\dot{I}}^2+\frac{1}{2}\hat{g}_{00}\mathcal{I}_2=0
								\end{split} \label{zero}
								\end{equation}
								Now, let us denote $\frac{\dot{a}}{a}\equiv H(t)$. We can now perform a coordinate transformation and write the cosmic time as a function of a new, invariant time $\hat{t}$: $\frac{d}{dt}=\sqrt{\mathcal{A}}\frac{d}{d\hat{t}}$. This transformation can be accompanied by a redefinition of the scale factor: $\hat{a}(\hat{t})=\sqrt{\mathcal{A}}a(t)$, which introduces a new, invariant Hubble parameter, related to the 'old' one via:
								$$\hat{H}=\frac{\dot{\hat{a}}}{\hat{a}}=\frac{1}{\sqrt{A}}\Big(H+\frac{1}{2}\frac{\mathcal{A}'}{\mathcal{A}}\mathcal{\dot{I}}_3\Big)$$
								Hubble parameter defined in this way is invariant under a conformal change $g_{\mu\nu}=e^{2\bar{\gamma}}\bar{g}_{\mu\nu}$, $\mathcal{\bar{A}}=e^{2\bar{\gamma}}\mathcal{A}$, $\bar{a}=e^{-\bar{\gamma}}a$ and redefinition of the cosmic time $d\bar{t}=e^{-\bar{\gamma}}dt$:
								\begin{equation}
								\begin{split}
								& \hat{H}=\frac{1}{\sqrt{A}}\Big(H+\frac{1}{2}\frac{\mathcal{A}'}{\mathcal{A}}\mathcal{\dot{I}}_3\Big)=\\
								& \frac{e^{\bar{\gamma}}}{\sqrt{\mathcal{\bar{A}}}}\Bigg(\frac{\frac{d}{dt}\Big(e^{\bar{\gamma}}\bar{a}\Big)}{e^{\bar{\gamma}}\bar{a}}+\frac{1}{2}\frac{\frac{d}{d\mathcal{I}_3}\Big(e^{-2\bar{\gamma}}\mathcal{\bar{A}}\Big)}{e^{-2\bar{\gamma}}\mathcal{\bar{A}}}\Bigg)=\\
								&\frac{e^{\bar{\gamma}}}{\sqrt{\mathcal{\bar{A}}}}\Big(\frac{d\bar{\gamma}}{dt}+\frac{1}{\bar{a}}\frac{d\bar{a}}{dt}-\frac{d\bar{\gamma}}{dt}+\frac{1}{2}\frac{1}{\mathcal{\bar{A}}}\frac{d\mathcal{\bar{A}}}{dt}\Big)=\\
								& =\frac{1}{\sqrt{\mathcal{\bar{A}}}}\Big(\frac{1}{\bar{a}}\frac{d\bar{a}}{d\bar{t}}+\frac{1}{2}\frac{1}{\mathcal{\bar{A}}}\frac{d\mathcal{\bar{A}}}{d\bar{t}}\Big)
								\end{split}
								\end{equation}
								This redefinition of both the scale factor and the cosmic time is equivalent to writing the invariant metric in another form, preserving structure of the FRWL metric:
								$$ds^2=d\hat{t}^2-\hat{a}^2(\hat{t})(d\rho^2+\rho^2d\Omega^2) \rightarrow \hat{g}_{\mu\nu}=\text{diag}(1,-\hat{a}^2(\hat{t}),-\rho^2\hat{a}^2(\hat{t}),-\hat{a}^2(\hat{t})\rho^2\text{sin}^2(\theta))$$
								Coming back to the equation \ref{zero}, we can write it in an invariant form:
								\begin{equation}
								\hat{H}^2=\frac{1}{6}\Big(\frac{d\mathcal{I}_3}{d\hat{t}}\Big)^2-\frac{1}{6}\mathcal{I}_2 \label{eq1}
								\end{equation}
								Let us now compute field equations for spatial components $ii$ (all tensors are diagonal, so that there are no components different from the mentioned):
								\begin{equation}
								\hat{R}_{ii}-\frac{1}{2}\hat{g}_{ii}\hat{R}+\frac{1}{2}\hat{g}^{00}\hat{g}_{ii}\mathcal{\dot{I}}^2_3+\frac{1}{2}\hat{g}_{ii}\mathcal{I}_2=0
								\end{equation}
								Plugging in formulae obtained so far, after some calculations we end up with the following result:
								\begin{equation}
								2\frac{\ddot{a}}{a}+\Big(\frac{\dot{a}}{a}\Big)^2+2\frac{\dot{a}}{a}\frac{\mathcal{A}'}{\mathcal{A}}\mathcal{\dot{I}}_3-\frac{3}{4}\Big(\frac{\mathcal{A}'}{\mathcal{A}}\Big)^2\mathcal{\dot{I}}^2_3+\frac{\mathcal{A}''}{\mathcal{A}}\mathcal{\dot{I}}^2_3+\frac{\mathcal{A}'}{\mathcal{A}}\mathcal{\ddot{I}}_3+\frac{1}{2}\mathcal{\dot{I}}^2_3+\frac{1}{2}A\mathcal{I}_2=0
								\end{equation}
								Now, we replace time derivative with derivative with respect to the invariant cosmic time according to $\frac{d}{dt}=\sqrt{\mathcal{A}}\frac{d}{d\hat{t}}$, and simultaneously redefine the scale factor: $a=\frac{1}{\sqrt{\mathcal{A}}}\hat{a}$. After very tedious calculations, we arrive at the following equation:
								\begin{equation}
								2\frac{d\hat{H}}{d\hat{t}}+3\hat{H}^2=-\frac{1}{2}\Big(\frac{d\mathcal{I}_3}{d\hat{t}}\Big)^2-\frac{1}{2}\mathcal{I}_2
								\end{equation}
								The last formula results from the equation governing evolution of the scalar field. It is given by:
								\begin{equation}
								2\frac{1}{\sqrt{-\hat{g}}}\partial_0\Big(\hat{g}^{00}\sqrt{-\hat{g}}\:\partial_0\mathcal{I}_3\Big)-\frac{d\mathcal{I}_2}{d\mathcal{I}_3}=0
								\end{equation}
								where $\hat{g}=-\mathcal{A}^4a^6(t)\rho^4\text{sin}^2(\theta)$. Short calculation shows that:
								\begin{equation}
								\frac{6\dot{a}\mathcal{\dot{I}}_3}{\mathcal{A}}+\frac{2\mathcal{\dot{A}}\mathcal{\dot{I}}_3}{\mathcal{A}^2}+\frac{2\mathcal{\ddot{I}}_3}{\mathcal{A}}-\frac{d\mathcal{I}_2}{d\mathcal{I}_3}=0
								\end{equation}
								Again, we plug in invariant quantities and the solution is:
								\begin{equation}
								\frac{d^2}{d\hat{t}^2}\mathcal{I}_3+3\hat{H}\frac{d\mathcal{I}_3}{d\hat{t}}-\frac{1}{2}\frac{d\mathcal{I}_2}{d\mathcal{I}_3}=0 \label{eq2}
								\end{equation}
								Our results match closely those derived in \cite{kuinv} (apart from differences in numerical factors). From this point, it is easy to obtain inflationary behaviour of the Universe. We need to introduce the energy-momentum tensor for the scalar field according to the prescription given in \hyperref[subs:inv1]{(3.5.1.)}:
								$$\kappa^2 \hat{T}^\Phi_{\mu\nu}=\hat{\nabla}_\mu\mathcal{I}_3\hat{\nabla}_\nu\mathcal{I}_3-\frac{1}{2}\hat{g}_{\mu\nu}\hat{g}^{\alpha\beta}\hat{\nabla}_\alpha\mathcal{I}_3\hat{\nabla}_\beta\mathcal{I}_3-\frac{1}{2}\hat{g}_{\mu\nu}\mathcal{I}_2$$
								We can now define energy density of the field:
								\begin{equation}
								\rho_\Phi:=\frac{1}{\kappa^2}\hat{T}^0_0=\frac{1}{2\kappa^2}\Big(\dot{\mathcal{I}}^2_3-\mathcal{I}_2\Big)
								\end{equation} 
								together with its pressure:
								\begin{equation}
								p_\Phi:=-\frac{1}{3\kappa^2}\hat{T}^i_i=\frac{1}{2\kappa^2}\Big(\dot{\mathcal{I}}^2_3+\mathcal{I}_2\Big)
								\end{equation}
								For simplicity, we denoted differentiation with respect to the time $\hat{t}$ by a dot; this should not be mistaken for differentiating with respect to the cosmic time $t$. 
								
								In order for the theory to exhibit inflationary behaviour, the potential $\mathcal{I}_2$ must necessarily dominate over the kinetic term $\mathcal{I}_2\gg\dot{\mathcal{I}}^2_3$, so that the equation of state is simply $p_\Phi=-\rho_\Phi$ \cite{infl}. Also, the so-called slow-roll condition enforces vanishing of the second time derivative of the field $\mathcal{I}_3$, and the equations \ref{eq1},\ref{eq2} now take the form:
								\begin{equation}
								\hat{H}^2\approx-\frac{1}{6}\mathcal{I}_2
								\end{equation}
								\begin{equation}
								3\hat{H}\frac{d\mathcal{I}_3}{d\hat{t}}-\frac{1}{2}\frac{d\mathcal{I}_2}{d\mathcal{I}_3}\approx0
								\end{equation}
								The number of e-folds, measuring the amount of inflation, is given by the following formula:
								\begin{equation}
								N(\mathcal{I}_3)=\int_{\mathcal{I}_3}^{{\mathcal{I}_3}_{\text{end}}}d\text{ln}\hat{a}\approx\int_{\mathcal{I}_3}^{{\mathcal{I}_3}_{\text{end}}}\frac{\mathcal{I}_2}{\mathcal{I}'_2}d\mathcal{I}'_3
								\end{equation}
								where ${\mathcal{I}_3}_{\text{end}}$ is the value of the scalar field at the end of inflation. This quantity is clearly frame-independent. 
								
								Of course, what we presented here was merely a translational task of casting the usual Friedmann equations in a frame-independent form, preparing a set-up for analysis of realistic phenomena, such as inflation. We must also make now an important remark: origins of the scalar field present in the theory are still unknown, while $f(R)$ theories introduce scalar fields in a natural way, through a Legendre transformation. However, in the Einstein frame (we are still working in the Palatini approach) the scalar field has no dynamics, so that it cannot account for inflationary behaviour. On the other had, in the metric Einstein frame the scalar field preserves its dynamics. Even if the matter is present in the theory, Palatini theory in Einstein frame is not a viable model for explaining inflation; this is obviously a consequence of the field equations, which suggest that the connection is in fact Levi-Civita of the metric, and the theory reduces to GR with scalar field coupled to matter. A starting point different from Palatini $f(R)$ in Einstein frame may of course result in a correct dynamic of the scalar field.
								\subsection{Friedmann equations in the Jordan-like frame}
								This theory is fully metric and hence can be characterized by only two variables. Before we performed the conformal transformation metric was assumed to be the FRWL metric; after the transformation it gets multiplied by a conformal factor. Similarly, the scalar field was replaced with its invariant counterpart. We end up with following variables:
								\begin{itemize}
									\item  metric: $\tilde{g}_{\mu\nu}=e^{2\alpha(\Phi)}(\Phi)g_{\mu\nu}=\text{diag}(e^{2\alpha(\Phi)},-a^2(t)e^{2\alpha(\Phi)},-a^2(t)e^{2\alpha(\Phi)}\rho^2,-a^2(t)e^{2\alpha(\Phi)}\rho^2\text{sin}^2(\theta))$
									\item scalar field: $\mathcal{I}_1=\mathcal{I}_1(\Phi)$
								\end{itemize}
								In the Jordan-like frame equations of motion (without matter sources) take the following form:
								\begin{equation}
								\tilde{R}+\mathcal{I}'_5\tilde{g}^{\mu\nu}\tilde{\nabla}_\mu\mathcal{I}_1\tilde{\nabla}_\nu\mathcal{I}_1+2\mathcal{I}_5\tilde{\Box}\mathcal{I}_1-\mathcal{I}'_4=0 \label{eqn1}
								\end{equation} 
								and
								\begin{equation}
								\tilde{G}^{\mu\nu}-\frac{\mathcal{I}_5}{\mathcal{I}_1}\tilde{\nabla}_\alpha\mathcal{I}_1\tilde{\nabla}_\beta\mathcal{I}_1\Big(\tilde{g}^{\alpha\mu}\tilde{g}^{\beta\nu}-\frac{1}{2}\tilde{g}^{\nu\mu}\tilde{g}^{\beta\alpha}\Big)+\frac{1}{2\mathcal{I}_1}\tilde{g}^{\mu\nu}\mathcal{I}_4+\frac{1}{\mathcal{I}_1}\big(\tilde{g}^{\mu\nu}\tilde{\Box}-\tilde{\nabla}^\mu\tilde{\nabla}^\nu\big)\mathcal{I}_1=0 \label{eqn2}
								\end{equation}
								The first thing to do is to obtain explicit forms of the Ricci tensor and the curvature scalar, i.e. express them in terms of the scale factor $a(t)$ and, possibly, the scalar field and the coefficient $\alpha$. Analogously to what we did in the previous subsection, we will be aiming at writing the Friedmann equations in a fully invariant way. This can be achieved by using the formula \ref{formula} and identifying $\Omega(x)=e^{\alpha(\Phi(x))}$. We need to keep in mind that, due to the homogeneity assumption, the only nonvanishing derivative is the time derivative. Components of the Ricci tensor read as follows:
								\begin{equation}
								\tilde{R}_{00}=-3\Big(\frac{\ddot{a}}{a}+\alpha'\dot{\mathcal{I}}^2_1+\alpha'\ddot{\mathcal{I}}_1+\frac{\dot{a}}{a}\alpha'\dot{\mathcal{I}}_1\Big)
								\end{equation}
								\begin{equation}
								\tilde{R}_{ii}=g_{ii}\Big[-\frac{\ddot{a}}{a}-2\Big(\frac{\dot{a}}{a}\Big)^2-5\frac{\dot{a}}{a}\alpha'\dot{\mathcal{I}}_1-\alpha''\dot{\mathcal{I}}^2_1-\alpha'\ddot{\mathcal{I}}_1-2(\alpha')^2\dot{\mathcal{I}}^2_1\Big]
								\end{equation}
								where the prime denotes differentiating with respect to the scalar field $\mathcal{I}_1$. The curvature scalar can be easily computed; it turns out to be:
								\begin{equation}
								\tilde{R}=e^{-2\alpha}\Big\{-6\Big[\frac{\ddot{a}}{a}+\Big(\frac{\dot{a}}{a}\Big)^2+3\frac{\dot{a}}{a}\alpha'\dot{\mathcal{I}}_1+\alpha''\dot{\mathcal{I}}^2_1+\alpha'\ddot{\mathcal{I}}_1+(\alpha')^2\dot{\mathcal{I}}^2_1\Big]\Big\}
								\end{equation}
								Analogously to the steps taken in the previous subsection, we can change coordinates and redefine the scale factor, so that the metric tensor preserves its form:
								\begin{equation}
								\frac{d}{d\tilde{t}}=e^{-\alpha}\frac{d}{dt}\qquad\text{and}\qquad\tilde{a}(\tilde{t})=e^\alpha a(t)
								\end{equation}
								These new variables make up an invariant Hubble parameter $\tilde{H}$ (invariant in a sense precisely corresponding to the one discussed in the case of Einstein-like frame), defined to be:
								\begin{equation}
								\tilde{H}:=\frac{\frac{d}{d\tilde{t}}\tilde{a}}{\tilde{a}}
								\end{equation}
								Expressed in these new variables, the Einstein tensor takes on the following simplified form:
								\begin{equation}
								\tilde{G}_{00}=3e^{2\alpha}\tilde{H}^2
								\end{equation}
								\begin{equation}
								\tilde{G}_{ii}=e^{2\alpha}g_{ii}\Big[2\frac{d\tilde{H}}{d\tilde{t}}+3\tilde{H}^2\Big]
								\end{equation}
								We aim now at writing the equations \ref{eqn1}, \ref{eqn2} fully in terms of scalar field functions and the invariant Hubble parameter and invariant cosmic time. As a result we get three equation whereof one is redundant, but nevertheless written below for the sake of completeness:
								\begin{equation}
								3\tilde{H}^2-\frac{1}{2}\Big(\frac{d\mathcal{I}_1}{d\tilde{t}}\Big)^2\frac{\mathcal{I}_5}{\mathcal{I}_1}+\frac{\mathcal{I}_4}{2\mathcal{I}_1}+3\tilde{H}\frac{d\mathcal{I}_1}{d\tilde{t}}=0 \label{eeq1}
								\end{equation}
								\begin{equation}
								2\frac{d\tilde{H}}{d\tilde{t}}+3\tilde{H}^2+\frac{1}{2}\Big(\frac{d\mathcal{I}_1}{d\tilde{t}}\Big)^2\frac{\mathcal{I}_5}{\mathcal{I}_1}+\frac{\mathcal{I}_4}{2\mathcal{I}_1}+\frac{1}{\mathcal{I}_1}\Big(2\tilde{H}\frac{d\mathcal{I}_1}{d\tilde{t}}+\frac{d^2\mathcal{I}_1}{d\tilde{t}^2}\Big)=0
								\end{equation}
								\begin{equation}
								-6\frac{d\tilde{H}}{d\tilde{t}}-12\tilde{H}^2+\mathcal{I}'_5\Big(\frac{d\mathcal{I}_1}{d\tilde{t}}\Big)^2+2\mathcal{I}_5\Big(3\tilde{H}\frac{d\mathcal{I}_1}{d\tilde{t}}+\frac{d^2\mathcal{I}_1}{d\tilde{t}^2}\Big)-\mathcal{I}'_4=0 \label{eeq2}
								\end{equation}
								Let us now introduce energy-momentum tensor for the scalar field and, consequently, notions of its energy density and pressure:
								\begin{equation}
								-\kappa^2\tilde{T}^\Phi_{\mu\nu}=-2\mathcal{I}_5\tilde{\nabla}_\mu\mathcal{I}_1\tilde{\nabla}_\nu\mathcal{I}_1+\tilde{g}_{\mu\nu}\tilde{g}^{\alpha\beta}\mathcal{I}_5\tilde{\nabla}_\alpha\mathcal{I}_1\tilde{\nabla}_\beta\mathcal{I}_1+\mathcal{I}_4
								\end{equation}
								\begin{equation}
								\text{(energy density):} \tilde{\rho}_\Phi:=\kappa^2\Big(\tilde{T}^\Phi\Big)^0_{\:\:0}=-\mathcal{I}_5\Big(\frac{d\mathcal{I}_1}{d\tilde{t}}\Big)^2+\mathcal{I}_4
								\end{equation}
								\begin{equation}
								\text{(pressure):} \tilde{p}_\Phi:=-\frac{\kappa^2}{3}\Big(\tilde{T}^\Phi\Big)^i_{\:\:i}=-\mathcal{I}_5\Big(\frac{d\mathcal{I}_1}{d\tilde{t}}\Big)^2-\mathcal{I}_4
								\end{equation}
								Using energy density and pressure, we can write Friedmann equations \ref{eeq1}, \ref{eeq2} in a more familiar form:
								\begin{enumerate}
									\item $3\tilde{H}^2+3\tilde{H}\frac{d\mathcal{I}_1}{d\tilde{t}}+\frac{\tilde{\rho}_\Phi}{2\mathcal{I}_1}=0$
									\item $\frac{d\tilde{\rho}_\Phi}{d\tilde{t}}+3\tilde{H}(\tilde{\rho}_\Phi+\tilde{p}_\Phi)=-6\Big(\frac{d\tilde{H}}{d\tilde{t}}+2\tilde{H}^2\Big)\frac{d\mathcal{I}_1}{d\tilde{t}}$
								\end{enumerate}
								The second equation resembles the one resulting from taking divergence of the energy-momentum tensor defined for the scalar field. It this case, failure of the divergence to vanish can be viewed as a consequence of nonminimal coupling between the scalar field and curvature. The right hand side is a function of both the scalar field and redefined scale factor, and in fact it is precisely the curvature scalar multiplied by the time derivative of scalar field, playing the role of a mass term. This fact renders the theory difficult to analyse, and in case of inflationary behaviour simple conditions for the slow-roll, mentioned in the previous subsection, do not suffice. However, obtaining and investigating inflation is not impossible when the nonminimal coupling is present (at least in the metric case), as it was shown in \cite{infl1} and \cite{infl2}. The latter paper is more interesting from our point of view inasmuch as it tackles the problem using the language of invariants. In the paper by Kuusk \textit{et al.}, the exponential expansion is achieved by demanding an invariant quantity $\tilde{\epsilon}=-\frac{1}{\tilde{H}^2}\frac{d\tilde{H}}{d\tilde{t}}$ be much less than unity, $\tilde{\epsilon}\ll 1$. This requirement is accompanied by another one for realizing the slow-roll, but it is of purely technical nature and will not be discussed here. We can conclude our treatment of the Friedmann equations in Palatini approach with the following remark: due to the coupling between scalar field and curvature scalar, equations of motion are much more complicated and several further conditions need to be imposed; however, the procedure of obtaining inflationary behaviour is still perfectly feasible \cite{infl2}.
								\section{Conclusions}
								In this chapter the formalism we developed has been applied to $f(R)$ theories and Friedmann cosmology. At the beginning we analysed $f(R)$ theories of gravity in the metric and Palatini approaches using the language of invariants, reproducing in a self-consistent way the well-known fact that they are not commensurable. Metric theory is equivalent to Brans-Dicke theory with potential and $\omega=0$, and this identification can be held only in the Jordan frame, as in any other one nonexistent in the BD theory anomalous coupling of scalar field to matter shows up. On the other hand, Palatini approach yields yet another version of the BD theory, characterized by $\omega=-\frac{3}{2}$ in the Jordan frame. This very particular value of the free parameter renders the relation between scalar field and matter purely algebraic, whereas in the metric case the relation is dynamical. Thence, these two theories have different predictive power and describe different phenomena. We may ask a question whether any of these two theories is in agreement with the measurements. As we know, current Solar system experiments and cosmological data (binary pulsars \cite{puls}) prefer BD theories with very large values of $\omega$, greater than 40000, making the theory fine-tuned since the natural choice would be $\omega\sim1$. The fine-tuning can be abolished by assuming that the scalar field is massive and has a short range \cite{capo}. This argument is sound as in case of $f(R)$ theories the self-interaction potential never vanishes (that is, if we assume $f''(R)\neq0$), so that the theory cannot be ruled out based on the data obtained so far. 
								
								In the second part of the chapter Friedmann equations for an empty universe of vanishing spatial curvature were derived. Except for numerical factors, equations governing evolution of the scale factor of the Universe are exactly the same as in case of the metric approach \cite{infl2}. An important difference is that invariants entering the equations do not have the same meaning in both formalisms. Despite the fact that formulae have the same form, quantities entering them may differ in various ways. The most obvious example is the case of Palatini $f(R)$ theory in Einstein frame, where there is no dynamics of the scalar field due to vanishing of the invariant $\mathcal{I}^{(P)}_3$. A field without dynamics cannot cause inflationary behaviour and the following conclusion is that Palatini $f(R)$ in Einstein frame cannot account for exponential rate of expansion in the early Universe, at least not in the traditional scenario, where the driving force is a scalar field. However, if we go to the Jordan frame, obtaining inflation becomes possible under some special assumptions. Due to the nonminimal coupling between curvature and scalar field, Friedmann equations are rather complicated.
								
								What advantage do we have of working with invariants on cosmology? Having them, we can express certain quantities in a frame-independent way, analogous to the covariant equations and invariant quantities used in GR. Also, it is reasonable to assume that all observables should be identical functions of invariants in each frame. For example, let us investigate connection between the Einstein- and Jordan-like frames. We can write:
								$$\tilde{a}(\tilde{t})=e^\alpha a(t)=\frac{e^\alpha}{\sqrt{A}}\hat{a}(\hat{t})\equiv \frac{1}{\sqrt{\mathcal{I}_1}}\hat{a}(\hat{t})$$
								and
								$$d\tilde{t}=e^\alpha dt=\frac{e^\alpha}{\sqrt{A}} d\hat{t}\equiv\frac{1}{\sqrt{\mathcal{I}_1}}d\hat{t}$$
								so that:
								\begin{equation}
								R_{\text{ph}}=\int\frac{d\tilde{t}}{\tilde{a}(\tilde{t})}=\int\frac{d\hat{t}}{\hat{a}(\hat{t})}
								\end{equation}
								describes the particle horizon at a given moment of the cosmic time and, clearly, is a frame-independent invariant quantity. The same does not hold for the Hubble parameter:
								\begin{equation}
								\tilde{H}=\sqrt{\mathcal{I}_1}+\hat{H}-\frac{1}{2}\frac{d\text{ln}\mathcal{I}_1}{d\hat{t}}
								\end{equation}
								so that we must decide which frame to choose in order to describe expansion of the universe using the Hubble parameter. Canonical choice is the Jordan frame equivalent to the Jordan-like invariant frame when a proper choice of transformation has been made (namely, $\mathcal{C}=\alpha=0$ and $\mathcal{A}=\Phi=\mathcal{I}_1$).

			\chapter{Summary and outlook}
				Modern cosmology is a field of physics which contains probably the biggest number of mysteries and unsolved problems, which should not come as a surprise, as the theory deals with something that is unique by its very nature - the Universe. Cosmos as such can be hardly investigated in our terrestrial laboratories, so that we are forced to deduce its properties from the outcomes of our experiments and knowledge of laws of physics holding in the nearest vicinity. And still, formulated by Einstein one hundred years ago theory of gravity, general relativity, manages to account for most of the gravitational phenomena. The theory has been triumphant to an extend beyond reasonable expectations, working on a scale of the Solar system and, most likely, on the scale of entire Universe. However, the theory has its drawbacks leading to many problems, stimulating us to look for possible explanations or for a more fundamental theory. To name a few which are far from being sorted out: problem of nature of inflation and explaining provenance of the inflaton field, tiny value of cosmological constant, dark matter and dark energy. The last two are particularly worrying, as we know very little about roughly 90\% of the whole Universe. Apart from phenomenological aspects of GR, it has some problems on a theoretical level too. We still struggle to reconcile GR with quantum theory, treating the former not as a fundamental theory of gravity, but rather as an effective theory. These facts might be viewed as suggestions that the Einstein theory needs to be modified in order to encompass still unexplained phenomena and possibly establish a connection with more fundamental theories, such as string theory. Scalar-tensor theories are one of such attempts, decoupling metric structure of spacetime from its affine structure are another one, and the present thesis originated as a merge of these two ideas. Let us now sum up everything that has been discussed in the paper. 
				
				In the chapter on preliminaries important concepts used throughout the thesis were introduced. We started off with a notion of a conformal transformation, which was a tool used later on. Then, a couple of modified theories of gravity were introduced. In fact, there is a plethora of possible extensions and modifications of GR, and from the zoo of alternatives to Einstein gravity, $f(R)$ and scalar-tensor were chosen as the most straightforward, under certain circumstances being equivalent to one another. The prototype of all scalar-tensor theories was the Brans-Dicke theory introducing a scalar field nonminimally coupled to curvature. At that moment we made use of conformal transformation for the first time, exploiting the fact that it establishes a mathematical equivalence of solutions in different frames, thus enabling us to choose a frame where it is easier to solve equations of motion. Also, the very notion of a conformal frame was encountered there: we distinguished Einstein and Jordan frame. The former is characterized by an anomalous coupling between scalar field and the matter part of action, violating the weak equivalence principle. The latter features the scalar field coupled to the curvature scalar. The issue of which of these two frames should be deemed physical was discussed in the next chapter. We concluded the introductory chapter with introducing the Palatini formalism. We no longer consider connection to be a function of metric tensor. Instead, we think of them as two independent objects. We saw that in case of $f(R)$ theories Palatini approach led to a conclusion that the independent connection was an auxiliary field and could be eliminated from the field equations, yielding an alternated energy-momentum tensor. However, the Palatini approach features certain shortcomings that were presented at the end of the chapter. Still, this approach was dominant in the present paper, as there are not many authors writing about scalar-tensor theories in Palatini approach, and the field remains still unknown. 
				
				In the third chapter formulae describing conformal transformations in the Palatini formalism were introduced. The novelty was that the connection transformed according to a new function of the scalar field $\gamma_2$, independent of the function $\gamma_1$ changing the metric. The main difference in formulae describing geometric quantities in related frames stemmed from the fact that he covariant derivative of the metric tensor does not vanish anymore, since the connection is not Levi-Civita of the metric. Then, an action functional preserving its form under a conformal transformation of both the metric and the connection was presented, and transformation relations between coefficients of two related frames were sought. The action functional was found by looking for a form of the Lagrangian that does not get any new terms when changing the frame, and the new coefficients become functions of the old ones. The novelty here was the function $\mathcal{C}(\Phi)$ multiplying the term linear in velocities. It was shown that vanishing of the coefficient means that it is possible to find a frame in which the connection is Levi-Civita with respect to some conformally transformed metric tensor. Also, equations of motion were derived, but in the most general case, when none of the coefficients have been fixed by a choice of the functions $\{\gamma_1,\gamma_2,f\}$. In general, fixing all three functions give us a particular frame and translates to fixing three out of five arbitrary coefficients $\{\mathcal{A},\mathcal{B},\mathcal{C},\mathcal{V},\alpha\}$. Next, a group-like structure of the coefficients was presented; composing two subsequent transformations is equivalent to one with appropriately defined functions $\gamma_1$ and $\gamma_2$. Structure of the transformation relations allowed us to write out several quantities that remained invariant under a conformal change in a sense that their numerical value at a given spacetime point was the same in all frames, as well as their functional form was preserved - they depend on the same functions in each frame. However, due to the presence of the coefficient $\mathcal{C}$ some of the invariants do not have the same meaning as in case of the metric approach. This will translate into a changed dynamics of the scalar field in case of the $f(R)$ theories later on. We concluded the chapter by writing the action functional in terms of invariants. The Einstein-like invariant frame resembles the canonical Einstein frame, which can be obtain by putting $\mathcal{A}=1$, $\mathcal{B}=1$ and $\mathcal{C}=0$; in this instance the invariant metric and the metric used would coincide, $\hat{g}_{\mu\nu}=\mathcal{A}g_{\mu\nu}=g_{\mu\nu}$. Analogously, for the invariant Jordan-like frame a further specification of three of the functions: $\mathcal{A}=\Phi$, $\mathcal{C}=0$ and $\alpha=0$, leads to identification of the metric with the invariant one, $\tilde{g}_{\mu\nu}=e^{2\alpha}g_{\mu\nu}=g_{\mu\nu}$. 
				
				In the following chapter we focused on applications of formalism of invariants in Palatini scalar-tensor theories. First, we investigated a family of alternative theories of gravity, $f(R)$ theories, and showed their equivalence with scalar-tensor theories - or, to be more precise, with the Brans-Dicke theory with self-interacting potential added. We reproduced the well-known result that in the Jordan frame the metric version corresponds to the BD theory with $\omega=0$, whereas in the Palatini approach it corresponds to $\omega=-\frac{3}{2}$. This shows that both theories are fundamentally different, and their phenomenological predictions will diverge. Palatini formalism leads to a further diversification of potential theories of gravity. Conformal transformation relates frames that are mathematically equivalent, but not physically. Applying the Palatini approach results in another difference in the dynamics. This was clear when we considered $f(R)$ theories in four different versions. The second part of the chapter was dedicated to investigation of Friedmann equations in a simple case of an empty Universe of zero spatial curvature. Our objective was to check whether the model could reproduce a viable inflationary behaviour. In order to achieve this goal, we introduced invariant Hubble parameters for both invariant frames, and wrote down the Friedmann equations. In case of the Einstein-like frame, the inflation could be reproduced easily. It turned out that the Palatini $f(R)$ theory in Einstein frame cannot account for a correct dynamics of the scalar field, as it does not contain a kinetic term. In case of the Jordan-like frame, equations of motion were complicated as a result of nonminimally coupled scalar field. However, unlike the Einstein frame, the Jordan frame is a correct set-up for the Palatini $f(R)$ theories to investigate inflation. We did not dwell on the issue as it would go beyond the scope of the present thesis. 
				
				The thesis was merely a step toward formulating a comprehensive theory of scalar-tensor gravity. It must be challenged by the data obtained from various experiments, but one of the reasons alternative theories of gravity are being invented is not to replace the 'good old' general relativity, but rather to investigate its limitations and provide us with a fresh point of view on the Einstein theory. The outlook for the future work of the theory presented so far is to take up the topic at which the thesis ended and study the inflation in the Jordan-like frame, where the nonminimall coupling is present. A confrontation with the experiment must take place by computing the post-Newtonian parameters for the theory. Also, it will be interesting to calculate rotation curves of galaxies, dynamics of an expanding universe or investigating the process of structure formation.
	\begin{appendices}
				\chapter{Geodesic mapping}
				Let us assume that we have two symmetric linear connections on a manifold $\mathcal{M}$: $\Gamma^\alpha_{\mu\nu}$ and $\bar{\Gamma}^\alpha_{\mu\nu}$. The question is now: under what circumstances do these two connections have exactly the same geodesics? Equality of geodesics in this context means that for every geodesic curve $\gamma$ (defined according to $\Gamma$) parametrized by an affine parameter $s$ there exists a geodesic curve $\bar{\gamma}$ for the connection $\bar{\Gamma}$ parametrized by $\bar{s}$. The image of both curves is the same trajectory. The condition for it to occur is the following:
				\begin{theorem}
				(Herman, Weyl, 1921). Two symmetric connections defined on a manifold $\mathcal{M}$, $\Gamma^\alpha_{\mu\nu}$ and $\bar{\Gamma}^\alpha_{\mu\nu}$, share the same set of geodesics if and only if there exists a 1-form field $\omega_\gamma$ on $\mathcal{M}$ such that:
				$$ \bar{\Gamma}^\alpha_{\mu\nu}-\Gamma^\alpha_{\mu\nu}=\omega_\mu\delta^\alpha_\nu+\omega_\nu\delta^\alpha_\mu$$
				\end{theorem}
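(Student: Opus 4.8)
The plan is to pass from the two connections to their difference tensor and then translate the geometric hypothesis into a single pointwise algebraic condition. Because $\Gamma^\alpha_{\mu\nu}$ and $\bar{\Gamma}^\alpha_{\mu\nu}$ are both linear connections, their difference $T^\alpha_{\mu\nu}:=\bar{\Gamma}^\alpha_{\mu\nu}-\Gamma^\alpha_{\mu\nu}$ is a genuine $(1,2)$-tensor field, and it is symmetric in its lower indices since both connections are symmetric. The key reformulation I would use is that an unparametrized curve is a geodesic of a connection exactly when, along the curve, the acceleration is proportional to the velocity. Thus if $x^\alpha(s)$ is an \emph{affinely} parametrized $\Gamma$-geodesic with tangent $u^\alpha=\dot{x}^\alpha$, then $\ddot{x}^\alpha+\Gamma^\alpha_{\mu\nu}u^\mu u^\nu=0$, and the corresponding expression for $\bar{\Gamma}$ collapses to
\begin{equation}
\ddot{x}^\alpha+\bar{\Gamma}^\alpha_{\mu\nu}u^\mu u^\nu=T^\alpha_{\mu\nu}u^\mu u^\nu .
\end{equation}
The two connections share this geodesic (up to reparametrization) if and only if the right-hand side is proportional to $u^\alpha$.

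For the sufficiency direction I would simply substitute the assumed form. If $T^\alpha_{\mu\nu}=\omega_\mu\delta^\alpha_\nu+\omega_\nu\delta^\alpha_\mu$, then $T^\alpha_{\mu\nu}u^\mu u^\nu=2(\omega_\mu u^\mu)\,u^\alpha$, which is manifestly proportional to $u^\alpha$. Hence every affinely parametrized $\Gamma$-geodesic is, after a reparametrization determined by the function $s\mapsto\omega_\mu u^\mu$, a $\bar{\Gamma}$-geodesic with the same image; by the symmetry of the argument the reverse inclusion holds as well, so the two geodesic sets coincide.

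For the necessity direction I would argue that, since through every point $p$ and in every direction $u\in T_p\mathcal{M}$ there passes a $\Gamma$-geodesic (existence of solutions to the geodesic ODE), the shared-geodesics hypothesis forces $T^\alpha_{\mu\nu}u^\mu u^\nu=\phi(u)\,u^\alpha$ at every $p$, for every $u$, with $\phi(u)$ a scalar. The heart of the proof is then the algebraic lemma: a lower-symmetric $(1,2)$-tensor with $T^\alpha_{\mu\nu}u^\mu u^\nu\in\mathrm{span}(u)$ for all $u$ must have the stated form. To establish it I would first note that homogeneity gives $\phi(\lambda u)=\lambda\phi(u)$, and then polarize by feeding $u=e_i+t\,e_j$ (coordinate basis vectors) into the identity and matching powers of $t$; this pins down every component, yielding $T^\alpha_{\mu\nu}=0$ whenever $\alpha\notin\{\mu,\nu\}$, together with $T^i_{ij}=\omega_j$ and $T^i_{ii}=2\omega_i$ where $\omega_k:=\tfrac12 T^k_{kk}$ (no sum). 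Reassembling these components reproduces exactly $T^\alpha_{\mu\nu}=\omega_\mu\delta^\alpha_\nu+\omega_\nu\delta^\alpha_\mu$, and contracting this relation on $\alpha=\nu$ gives the coordinate-free identification $\omega_\mu=\tfrac{1}{n+1}T^\alpha_{\mu\alpha}$, confirming that $\omega_\mu$ is a well-defined $1$-form field.

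The main obstacle I anticipate is the necessity direction's polarization lemma: extracting the rigid index structure from the single scalar proportionality $T^\alpha_{\mu\nu}u^\mu u^\nu=\phi(u)u^\alpha$ requires care to verify, first, that $\phi(u)$ is genuinely a function of the direction $u$ alone (independent of which geodesic is chosen), and second, that the reparametrization freedom is correctly encoded as proportionality of acceleration to velocity rather than equality — it is precisely this freedom that produces the symmetric two-term form instead of a stronger constraint.
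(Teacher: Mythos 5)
Your proof is correct, and its skeleton --- pass to the difference tensor $T^\alpha_{\mu\nu}$, characterize shared unparametrized geodesics by the pointwise condition $T^\alpha_{\mu\nu}u^\mu u^\nu\parallel u^\alpha$ for every direction $u$, then solve that algebraic condition --- matches the paper's. Where you genuinely diverge is in how the algebraic condition is solved. The paper eliminates the unknown proportionality factor by multiplying with a second copy of $u$ and antisymmetrizing, $u^{[\gamma}T^{\alpha]}_{\mu\nu}u^\mu u^\nu=0$; since this is a cubic polynomial identity in $u$, the totally symmetrized coefficient $\delta^{[\gamma}_{(\lambda}T^{\alpha]}_{\mu\nu)}$ must vanish, and a single contraction ($\gamma=\lambda$) then delivers $(n+1)T^\alpha_{\mu\nu}-T^\lambda_{\lambda\nu}\delta^\alpha_\mu-T^\lambda_{\lambda\mu}\delta^\alpha_\nu=0$, i.e.\ the stated form with $\omega_\mu=\frac{1}{n+1}T^\lambda_{\lambda\mu}$ appearing in one stroke. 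You instead polarize in a coordinate basis ($u=e_i$, then $u=e_i+t\,e_j$, matching powers of $t$) and determine every component of $T$ by hand, recovering the invariant identification $\omega_\mu=\frac{1}{n+1}T^\alpha_{\mu\alpha}$ only afterwards by contraction; your component bookkeeping ($T^\alpha_{\mu\nu}=0$ for $\alpha\notin\{\mu,\nu\}$, $T^i_{ij}=\omega_j$, $T^i_{ii}=2\omega_i$) is correct and does reassemble to the claimed form. Your route is more elementary and each step is mechanically checkable, at the price of a basis-dependent computation; the paper's is coordinate-free but buries more in the symmetrization step. One further difference: on sufficiency you invoke the standard pregeodesic fact that acceleration proportional to velocity can be removed by reparametrization, whereas the paper proves exactly what it needs by exhibiting the parameter through the ODE $\frac{d}{ds}\ln q=\omega_\mu u^\mu$ and checking that $v^\alpha=qu^\alpha$ is $\bar{\Gamma}$-parallel. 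This is not a gap --- the reparametrization lemma is itself proved by solving that same always-solvable ODE --- but if you want the appendix-level of self-containment you should write that one line out.
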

				\begin{proof}
				We will now attempt to show both necessity and sufficiency \footnote{The proof is given according to \cite{sok}}.
				\begin{itemize}
					\item \textit{Necessity}. We take an arbitrary geodesic line of $\Gamma^\alpha_{\mu\nu}$ with a tangent vector $u^\alpha=\frac{dx^\alpha}{ds}$. satisfying the equation $\frac{D}{ds}\frac{dx^\alpha}{ds}=\frac{d^2x^\alpha}{ds^2}+\Gamma^\alpha_{\mu\nu}\frac{dx^\mu}{ds}\frac{dx^\nu}{ds}=h(s)\frac{dx^\alpha}{ds}$. Because it is assumed to be a geodesic line also for $\bar{\Gamma}^\alpha_{\mu\nu}$ with a tangent vector:
					$$v^\alpha=\frac{dx^\alpha}{d\bar{s}}=\frac{dx^\alpha}{ds}\frac{ds}{d\bar{s}}, \quad \frac{ds}{d\bar{s}}:=q(s)>0$$
					where $q(s)$ is a function defined on the geodesic line. Vector $v^\alpha$ is parallel transported:
					$$\frac{dv^\alpha}{d\bar{s}}+\bar{\Gamma}^\alpha_{\mu\nu}\frac{dv^\mu}{d\bar{s}}\frac{dv^\nu}{d\bar{s}}=0$$
					We can now substitute $v^\alpha=qu^\alpha$ and divide it by $q^2$. Since
					$$\frac{1}{q^2}\frac{dq}{d\bar{s}}=\frac{1}{q}\frac{dq}{ds} \text{and} \frac{q}{q^2}\frac{du^\alpha}{d\bar{s}}=\frac{du^\alpha}{ds}$$
					we get:
					$$u^\alpha\frac{d}{ds}\text{ln}q+\frac{du^\alpha}{ds}+\bar{\Gamma}^\alpha_{\mu\nu}u^\mu u^\nu=0$$
					The derivative $\frac{du^\alpha}{ds}$ is simply $\frac{D}{ds}u^\alpha-\Gamma^\alpha_{\mu\nu}u^\mu u^\nu=-\Gamma^\alpha_{\mu\nu}u^\mu u^\nu$, so that:
					$$u^\alpha\frac{d}{ds}\text{ln}q=-(\bar{\Gamma}^\alpha_{\mu\nu}-\Gamma^\alpha_{\mu\nu})u^\mu u^\nu:=T^\alpha_{\mu\nu}u^\mu u^\nu$$
					Because both connections are symmetric, $T^\alpha_{\mu\nu}=T^\alpha_{\nu\mu}$. We now multiply the above equation by $u^\gamma$ and make it antisymmetric in the free indices $\alpha\gamma$:
					$$u^{[\alpha}u^{\gamma]}\frac{d}{ds}\text{ln}q=0=u^{[\gamma}T^{\alpha]}_{\mu\nu}u^\mu u^\nu=\delta^{[\gamma}_\lambda T^{\alpha]}_{\mu\nu}u^\mu u^\nu u^\lambda$$
					Tensor $u^\mu u^\nu u^\lambda$ is symmetric, so that only combinations $\delta^{[\gamma}_\lambda T^{\alpha]}_{\mu\nu}$ totally symmetric in their lower indices will give a non-zero contribution: $\delta^{[\gamma}_{(\lambda} T^{\alpha]}_{\mu\nu)}$. This equality must hold at every point of every geodesics, which means that it must be also satisfied for the vector $u^\alpha$ pointing in an arbitrary direction. It is possible only if all coefficients in this third-order polynomial are equal to zero. Writing them explicitly:
					$$0=\delta^{[\gamma}_{(\lambda} T^{\alpha]}_{\mu\nu)}\delta^{[\gamma}_\lambda T^{\alpha]}_{\mu\nu}=\frac{1}{6}(\delta^{\gamma}_{\lambda} T^{\alpha}_{\mu\nu}-\delta^{\alpha}_{\lambda} T^{\gamma}_{\mu\nu}+\delta^{\gamma}_{\mu} T^{\alpha}_{\nu\lambda}-\delta^{\alpha}_{\mu} T^{\gamma}_{\nu\lambda}+\delta^{\gamma}_{\nu} T^{\alpha}_{\lambda\mu}-\delta^{\alpha}_{\nu} T^{\gamma}_{\lambda\mu})$$
					We can contract two indices: $\gamma$ and $\lambda$, which gives us $(n+1)T^\alpha_{\mu\nu}-T^{\lambda}_{\nu\lambda}\delta^\alpha_\mu-T^\lambda_{\lambda\mu}\delta^\alpha_\nu=0$. We define 
					$$\omega_\gamma:=\frac{1}{n+1}T^\lambda_{\lambda\gamma}$$
					and we get:
					$$T^\alpha_{\mu\nu}=\omega_\mu\delta^\alpha_\nu+\omega_\nu\delta^\alpha_\mu$$
					which proves the necessity.
					\item \textit{Sufficiency}. Let $\bar{\Gamma}^\alpha_{\mu\nu}=\Gamma^\alpha_{\mu\nu}-2\omega_{(\mu}\delta^\alpha_{\nu)}$ for any 1-form $\omega$. We take now an arbitrary geodesic line of $\Gamma$ with a tangent vector $u^\alpha$; we shall show that this geodesic line is also geodesic of $\bar{\Gamma}$. We have to prove that there exists a function $q(s)$ such that vector $v^\alpha=qu^\alpha$ is parallel transported along the geodesic of $\bar{\Gamma}$. From the definition we have: $v^\alpha=\frac{dx^\alpha}{d\bar{s}}=\frac{dx^\alpha}{ds}\frac{ds}{d\bar{s}}=q(s)\frac{dx^\alpha}{ds}$. We define a vector:
					$$D^\alpha:=\frac{d}{d\bar{s}}v^\alpha+\bar{\Gamma}^{\alpha}_{\mu\nu}v^\mu v^\nu$$
					It is equal to:
					$$D^\alpha=\frac{dq}{d\bar{s}}u^\alpha+q\frac{du^\alpha}{d\bar{s}}+q^2\Gamma^\alpha_{\mu\nu}u^\mu u^\nu-q^2\omega_\mu\delta^\alpha_\nu u^\mu u^\nu=$$
					$$=q^2\Big(\frac{1}{q}\frac{dq}{ds}u^\alpha+\frac{du^\alpha}{ds}+\Gamma^\alpha_{\mu\nu}u^\mu u^\nu -\omega_\mu u^\mu u^\nu\Big)$$
					Because $\frac{du^\alpha}{ds}+\Gamma^\alpha_{\mu\nu}u^\mu u^\nu=0$, we end up with $D^\alpha=q^2\Big(\frac{1}{q}\frac{dq}{ds}u^\alpha-\omega_\mu u^\mu u^\nu\Big)$. Along every geodesics we have $u^\alpha=u^\alpha(s)$ and $\omega_\gamma=\omega_\gamma(x(s))$, so that the scalar $u^\mu\omega_\mu$ is uniquely determined function of $s$ for a given 1-form $\omega$. We postulate now $q>0$ and $\frac{d}{ds}\text{ln}q=\omega_\mu u^\mu$. Under this assumption $D^\alpha=0$, and $v^\alpha$ is parallel transported along $\bar{\gamma}$, which completes the proof.
				\end{itemize}
				\end{proof}
			\end{appendices}

\end{document}